\documentclass{article}


\usepackage{amssymb} 
\usepackage{amsfonts}
\usepackage{amsmath}
\usepackage{comment}


\usepackage{savesym}
\savesymbol{vartriangleleft}
\savesymbol{vartriangleright}

\usepackage{xspace}
\usepackage{enumerate}

\usepackage{hyperref}
\usepackage{cleveref}
\usepackage{amsthm}
\usepackage{amssymb}
\usepackage{cmll}
\usepackage[pdftex,noxy]{virginialake}
\usepackage{url}
\usepackage{tikz}
\usetikzlibrary{arrows}
\usetikzlibrary{snakes}
\usetikzlibrary{shapes.geometric}

\usepackage{rotate}
\usepackage{txfonts}

\usepackage{arydshln}

\usepackage{thmtools,thm-restate}

\usepackage{adjustbox}


\def\ltens{\mathbin\varotimes}

\def\limp{\multimap}

\def\lcoseq{\mathbin\origvartriangleright}
\def\imp{\supset}
\def\limp{\multimap}
\newcommand{\lbox}{\Box}
\newcommand{\ldia}{\Diamond}

\newcommand{\provesym}{\tikz[baseline=-.65ex]{\draw[very thick] (0,0)--(.4,0);\draw[thin] (0,-.1)--(0,.1);}}
\newcommand{\provevia}[1]{\mathrel{\stackrel{\scriptscriptstyle#1}{\provesym}}}

\def\deep#1{#1_\downarrow}

\def\N{\mathbb N}

\def\atoms{\mathcal A}


\newcommand\gclr{\color{blue}} 

\newcommand\iclr{\color{\implcolor}} 
\newcommand\mclr{\color{\modcolor}} 
\newcommand\sclr{\color{\skewcolor}}

\def\gG{\mathcal G}
\def\gH{\mathcal H}
\def\gF{\mathcal F}
\def\gJ{\mathcal J}
\def\gK{\mathcal K}
\def\gI{\mathcal I}

\def\dD{\mathfrak D}

\newcommand{\cgF}{{\gclr\gF}}
\newcommand{\cgG}{{\gclr\gG}}
\newcommand{\cgH}{{\gclr\gH}}
\newcommand{\cgK}{{\gclr\gK}}
\newcommand{\cgJ}{{\gclr\gJ}}
\newcommand{\cgI}{{\gclr\gI}}


\def\set#1{\{#1\}}
\def\Set#1{\begin{Bmatrix}{#1}\end{Bmatrix}}
\def\sizeof#1{|#1|}
\def\tuple#1{\langle#1\rangle}

\newcommand{\intset}[2]{\set{#1, \dots, #2}}


\def\whiter#1{#1^\circ}
\def\blackr#1{#1^\bullet}

\def\rlimpr{\whiter\limp}
\def\llimpr{\blackr\limp}
\def\rltensr{\whiter \ltens}
\def\lltensr{\blackr \ltens}

\def\AXrule{\mathsf {AX}}
\def\Wrule{\mathsf {W}}
\def\Crule{\mathsf {C}}
\def\axrule{\mathsf {ax}}
\def\cutr{\mathsf {cut}}

\def\limprule{{\imp}^\mathsf L}
\def\rimprule{{\imp}^\mathsf R}
\def\llandrule{{\land}^\mathsf L}
\def\rlandrule{{\land}^\mathsf R}

\def\krule{\mathsf {k}}

\def\kbrule{\krule^\lbox}
\def\kdrule{\krule^\ldia}
\def\kurule{\krule^\unit}
\def\drule{\mathsf {d}}

\def\Krule{\mathsf {K}}
\def\Drule{\mathsf {D}}

\def\Kbrule{\Krule^\lbox}
\def\Kdrule{\Krule^\ldia}
\def\Kurule{\Krule^\unit}

\def\wrule{{\mathsf {w}}}
\def\worule{\blackr\wrule}
\def\wtrule{\wrule^{\ltens}}
\def\wirule{\wrule^{\limp}}
\def\wdrule{\mathsf {w}^{\ldia}}
\def\crule{\blackr{\mathsf {c}}}

\newcommand{\diredge}[1][]{\mkern1mu\mathord{\stackrel{#1}{\rightarrow}}\mkern1mu}
\newcommand{\codiredge}[1][]{\mkern1mu\mathord{\stackrel{#1}{\leftarrow}}\mkern1mu}
\newcommand{\sdiredge}[1][]{\mkern1mu\mathord{\stackrel{#1}{\rightarrow\!\!\!\!\!\!\leftarrow}}\mkern1mu}

\newcommand{\iedge}[1][]{\mkern1mu\mathord{\stackrel{#1}{{\color{\implcolor}\rightarrow}}}\mkern1mu}
\newcommand{\oedge}[1][]{\mkern1mu\mathord{\stackrel{#1}{{\color{\implcolor}\rightarrow}_\bullet}}\mkern1mu}
\newcommand{\iiedge}[1][]{\mkern1mu\mathord{\stackrel{#1}{{\color{\implcolor}\leftarrow}}}\mkern1mu}

\newcommand{\medge}[1][]{\mkern1mu\mathord{\stackrel{#1}{{\color{\modcolor}\rightsquigarrow}}}\mkern1mu}
\newcommand{\imedge}[1][]{\mkern1mu\mathord{\stackrel{#1}{{\color{\modcolor}\leftsquigarrow}}}\mkern1mu}
\newcommand{\axlink}[1][]{\mkern1mu\mathord{\stackrel{#1}{{\gclr\rightharpoonup}}}\mkern1mu}

\newcommand{\axeq}[1][]{\mkern1mu\mathord{\stackrel{#1}{{\gclr\sim}}}\mkern1mu}
\newcommand{\naxeq}[1][]{\mkern1mu\mathord{\stackrel{#1}{{\gclr\not\sim}}}\mkern1mu}

\newcommand{\omedge}[1][]{\mkern1mu\mathord{\stackrel{#1}{{\color{\implcolor}\rightsquigarrow}}}\mkern1mu_{\partial}}

\newcommand{\nmedge}[1][]{\!\not\!\!\!\!\!\medge[#1]}

\newcommand{\niedge}[1][]{\not\!\!\iedge[#1]}
\newcommand{\diedge}[1][]{\mkern1mu\mathord{\stackrel{#1}{{\color{\implcolor}\rightarrow\!\!\!\!\!\!\leftarrow}}}\mkern1mu}
\newcommand{\dmedge}[1][]{\mkern1mu\mathord{\stackrel{#1}{{\color{\modcolor}\leftrightsquigarrow}}}\mkern1mu}
\newcommand{\daxlink}[1][]{\mkern1mu\mathord{\stackrel{#1}{{\gclr\leftharpoondown\hspace{-8pt}\rightharpoonup}}}\mkern1mu}

\newbox\auxbox
\setbox\auxbox=\hbox{$\mathord\curvearrowright$}
\def\auxsym{\rlap{\copy\auxbox}\raise.3ex\hbox{\copy\auxbox}}

\def\modsym{\ssbox\ssdia}

\newcommand{\vertices}[1][]{V_{#1}}
\newcommand{\avertices}[1][]{V^{\atoms}_{#1}}
\newcommand{\bvertices}[1][]{V^\ssbox_{#1}}
\newcommand{\dvertices}[1][]{V^{\ssdia}_{#1}}
\newcommand{\mvertices}[1][]{V^{\modsym}_{#1}}



\tikzstyle{edgestyle}=[>=stealth,overlay,remember picture,thin, opacity=1]

\newcommand{\flowvertex}[2]{\mathord{%
    \tikz[remember picture,baseline=(#2\nodecode.base)]%
    \node[inner sep=0pt](#2\nodecode){$#1\strut$};}}

\newcommand{\va}[1]{\flowvertex{a}{a#1}}

\newcommand{\vb}[1]{\flowvertex{b}{b#1}}

\newcommand{\vv}[1]{\flowvertex{v}{v#1}}
\newcommand{\vw}[1]{\flowvertex{w}{w#1}}
\newcommand{\vu}[1]{\flowvertex{u}{u#1}}

\newcommand{\vc}[1]{\flowvertex{c}{c#1}}

\newcommand{\vd}[1]{\flowvertex{d}{d#1}}

\newcommand{\ve}[1]{\flowvertex{e}{e#1}}
\newcommand{\vf}[1]{\flowvertex{f}{f#1}}

\newcommand{\vlbox}[1]{\flowvertex{\lbox}{box#1}}
\newcommand{\vldia}[1]{\flowvertex{\ldia}{dia#1}}
\newcommand{\vunit}[1]{\flowvertex{\unit}{unit#1}}

\def\rootsym{{\iclr\blacktriangleright}}

\newcommand{\vrroot}[1]{\flowvertex{\iclr\blacktriangleright}{m#1}}
\newcommand{\vlroot}[1]{\flowvertex{\iclr\blacktriangleright}{m#1}}

\newcommand{\vimp}[1]{\flowvertex{\imp}{imp#1}}
\newcommand{\vand}[1]{\flowvertex{\land}{and#1}}


\definecolor{burgundy}{rgb}{0.5, 0.0, 0.13}
\definecolor{britishracinggreen}{rgb}{0.0, 0.26, 0.15}

\def\implcolor{burgundy}
\def\modcolor{britishracinggreen}			
\def\indcolor{violet}
\def\linkcolor{blue}

\def\skewcolor{magenta}

\tikzstyle{treestyle}=[>=stealth,overlay,remember picture,thin, draw,fill,opacity=1]  
\tikzstyle{arenastyle}=[>=stealth,overlay,remember picture,thin, draw=\implcolor,fill=\implcolor,opacity=1]  
\tikzstyle{coarenastyle}=[densely dotted,>=stealth,overlay,remember picture, draw=\indcolor,fill=\indcolor,opacity=1]  
\tikzstyle{modstyle}=[>=stealth,overlay,remember picture,thin, draw=\modcolor,fill=\modcolor,opacity=1]
\tikzstyle{omodstyle}=[>=stealth,overlay,remember picture,thin, draw=\implcolor,fill=\implcolor,opacity=1]
\tikzstyle{linkstyle}=[>=left to,overlay,remember picture,thick, dashed, draw=\linkcolor,opacity=1]  
\tikzstyle{skewstyle}=[densely dotted,>=stealth,overlay,remember picture,thin, draw=\skewcolor,opacity=1]

\newcommand{\skewedge}[2]{%
	\tikz[skewstyle]%
	\draw[->] (#1\nodecode) --  (#2\nodecode);
}

\newcommand{\skewedges}[1]{
	\foreach \aaa/\bbb in {#1} {\skewedge{\aaa}{\bbb}}
}

\newcommand{\treeedge}[2]{%
	\tikz[treestyle]%
	\draw[-] (#1\nodecode) -- (#2\nodecode);
}
\newcommand{\treeedges}[1]{
	\foreach \aaa/\bbb in {#1} {\treeedge{\aaa}{\bbb}}
}

\newcommand{\dedge}[2]{%
  \tikz[arenastyle]%
  \draw[->] (#1\nodecode) -- (#2\nodecode);
}
\newcommand{\dedges}[1]{
  \foreach \aaa/\bbb in {#1} {\dedge{\aaa}{\bbb}}
}

\newcommand{\multidedges}[2]{
\foreach \aaa in {#1}\foreach\bbb in {#2}{\dedge{\aaa}{\bbb}}
}

\newcommand{\modedge}[2]{%
  \tikz[modstyle]%
  \draw[->,decorate,decoration=snake,segment amplitude=1pt,segment length=5pt] (#1\nodecode) -- (#2\nodecode);
}
\newcommand{\modedges}[1]{
  \foreach \aaa/\bbb in {#1} {\modedge{\aaa}{\bbb}}
}

\newcommand{\multimodedges}[2]{
	\foreach \aaa in {#1}\foreach\bbb in {#2}{\modedge{\aaa}{\bbb}}
}

\newcommand{\axedge}[2]{%
  \tikz[linkstyle]%
  \draw[-] (#1\nodecode) -- (#2\nodecode);
}

\newcommand{\axedges}[1]{
  \foreach \aaa/\bbb in {#1} {\axedge{\aaa}{\bbb}}
}

\newcommand{\axlbent}[2]{%
  \tikz[linkstyle]%
  \draw[-] (#1\nodecode) to [bend left] (#2\nodecode);
}

\newcommand{\axlbents}[1]{
  \foreach \aaa/\bbb in {#1} {\axlbent{\aaa}{\bbb}}
}

\newcommand{\axllbent}[2]{%
	\tikz[linkstyle]%
	\draw[-] (#1\nodecode) to [bend left=10] (#2\nodecode);
}

\newcommand{\axllbents}[1]{
	\foreach \aaa/\bbb in {#1} {\axllbent{\aaa}{\bbb}}
}

\newcommand{\dlbent}[2]{%
	\tikz[arenastyle]%
	\draw[->] (#1\nodecode) to [bend left] (#2\nodecode);
}

\newcommand{\drbent}[2]{%
	\tikz[arenastyle]%
	\draw[->] (#1\nodecode) to [bend right] (#2\nodecode);
}

\newcommand{\axrbent}[2]{%
  \tikz[linkstyle]%
  \draw[-] (#1\nodecode) to [bend right] (#2\nodecode);
}

\newcommand{\axrbents}[1]{
  \foreach \aaa/\bbb in {#1} {\axrbent{\aaa}{\bbb}}
}

\def\nodecode{}

%
\theoremstyle{plain}
\newtheorem{theorem}{Theorem}

\newtheorem{corollary}  [theorem]{Corollary}
\newtheorem{lemma}      [theorem]{Lemma}
\newtheorem{proposition}[theorem]{Proposition}

\theoremstyle{definition}
\newtheorem{definition}[theorem]{Definition}
\newtheorem{example}[theorem]{Example}

\newtheorem{notation}[theorem]{Notation}
\newtheorem{remark}[theorem]{Remark}
%
%


\def\IMLL{\mathsf{IMLL}}
\def\IMLLX{\IMLL\mbox{-}\X}
\def\IMLLK{\IMLL\mbox{-}\CK}
\def\IMLLD{\IMLL\mbox{-}\CD}

\def\X{\mathsf{X}}

\def\K{\mathsf{K}}
\def\D{\mathsf{D}}

\def\CK{\mathsf{CK}}
\def\CD{\mathsf{CD}}

\def\CX{\mathsf{CX}}

\def\kax{\mathsf{k}}

\def\Ip{\mathsf{LI}}
\def\IpX{\mathsf{L}\X}
\def\IpK{\mathsf{L}\CK}
\def\IpD{\mathsf{L}\CD}

\newcommand*\halfcirc[1][1ex]{%
	\begin{tikzpicture}
		\draw[fill] (0,0)-- (90:#1) arc (90:270:#1) -- cycle ;
		\draw (0,0) circle (#1);
	\end{tikzpicture}
}

\def\funnysymbol{{\resizebox{3.5pt}{!}{\halfcirc}}}
\def\POL#1{{#1}^\funnysymbol}

\def\DOWN#1{#1_\downarrow}

\def\PIp{\POL \Ip}
\def\PX{\POL \IpX}
\def\PK{\POL\IpK}
\def\PD{\POL\IpD}

\def\PIMLL{\POL\IMLL}
\def\PIMLLX{\POL\IMLLX}
\def\PIMLLK{\POL\IMLLK}
\def\PIMLLD{\POL\IMLLD}

\def\mathacronym{\mathsf}

\def\dagmacro{$\mathacronym{dag}$}
\def\mamacro{$\mathacronym{MA}$}
\def\ddagmacro{$\operatorname{2-\mathacronym{dag}}$}

\def\ma{\mamacro\xspace}
\def\mas{\mamacro s\xspace}

\def\dag{\dagmacro\xspace}
\def\dags{\dagmacro s\xspace}
\def\ddag{\ddagmacro\xspace}
\def\ddags{\ddagmacro s\xspace}

\def\gsum{+}
\def\gimp{{\iclr-\hspace{-2pt}\lcoseq}}
\def\gmod{{\mclr\sim\hspace{-4.2pt}\lcoseq}}
\def\unit{\bot}

\def\labsym{\ell}
\def\lab#1{\labsym(#1)}

\newcommand{\arof}[1]{{\llbracket#1\rrbracket}}
\newcommand{\ssdia}{{\diamond}}
\newcommand{\ssbox}{{\scriptscriptstyle\Box}}
\newcommand{\singlevertex}[1][]{{#1}}


\makeatletter
\newcommand{\oset}[3][.5ex]{%
	\mathrel{\mathop{#3}\limits^{
			\vbox to#1{\kern-2\ex@
				\hbox{$\scriptstyle#2$}\vss}}}}
\makeatother

\def\spalletto#1{\lfloor#1\rfloor}
\newcommand{\dirrof}[1][]{\oset{\diredge}{R}_{\!#1}}
\newcommand{\irof}[1][]{\oset{\iedge}{R}_{\!#1}}
\newcommand{\mrof}[1][]{\oset{\medge}{R}_{\!#1}}

\def\isdirroot#1{#1\not\!\!\!\diredge \;}

\def\isdirroot#1{#1\not\!\!\!\diredge}



\newcommand{\icone}[2]{\set{#1\iedge ^{#2}}}
\def\mconeof#1{\oset{\medge}{\mathsf C}\!\!({#1})}
\def\pmv#1{{\mclr\hat{\color{black}#1}}}

\def\pformula{$\mathsf P$-formula\xspace}
\def\pformulas{$\mathsf P$-formulas\xspace}

\def\ftree#1{\mathcal T_#1}

\def\cons#1{\{#1\}}

\def\labelset{\mathcal L}
\def\normto{\rightsquigarrow}

\def\oddsym{\bullet}
\def\evensym{\circ}

\def\depsym{{\iclr d}}
\def\dep#1{\depsym(#1)}
\def\iseven#1{#1^{\evensym}}
\def\isodd#1{#1^\oddsym}
\def\linkgraph#1{\overset{\color{blue} \curvearrowright}{#1}}

\def\sameframesymbol{\;\mathord{{\mclr{\downarrow}}\!\!\!\!\raisebox{2pt}{$\axeq$}}}

\newcommand{\sameframe}[2]{#1 \sameframesymbol #2}

\def\der#1{\partial({#1})}

\def\id{1}

\def\cf{{\sclr f}}
\def\cg{{\sclr g}}

\def\cid{{\sclr \id}}
\def\cempty{{\sclr \emptyset}}
\newcommand{\cpair}[2]{{{\sclr [{\color{black}#1}, {\color{black}#2}]}}}

\newcommand{\conj}[1][]{\curlywedge_{#1}}
\newcommand{\nconj}[1][]{\not\!\!\curlywedge_{#1}}
\newcommand{\disj}[1][]{\curlyvee_{#1}}
\newcommand{\ndisj}[1][]{\not\mkern-4mu\curlyvee_{#1}}

\def\feq{\oset{\mathsf f } \sim}

\def\strat{\mathcal S}

\def\evensym{\circ}
\def\oddsym{\bullet}
\def\evenmove{$\evensym$-move\xspace}
\def\oddmove{$\oddsym$-move\xspace}
\def\evenmoves{$\evensym$-moves\xspace}
\def\oddmoves{$\oddsym$-moves\xspace}

\def\jpath{\mathsf p}
\def\jpathi#1{\jpath_{#1}}	

\newcommand{\stratlink}[2]{\axlink[{#1}_{#2}]}
\newcommand{\dstratlink}[2]{\daxlink[{#1}_{#2}]}
\newcommand{\strateq}[2]{\axeq[{#1}_{#2}]}
\newcommand{\strateqone}[2]{\axeq[{#1}_{#2}]_1}

\newcommand{\nstrateq}[2]{\naxeq[{#1}_{#2}]}	
\def\view{\mathsf{p}}
\def\viewi#1{\view_{#1}}
\def\aview{\tilde{\view}}
\def\aviewi#1{\tilde{\view}_{#1}}
\def\astrat{\tilde \strat}

\def\fviewof#1{\mathcal F(#1)}

\def\unchecked#1{{#1}^\downharpoonright}

\def\ICP{$\mathsf{ICP}$\xspace}
\def\ICPs{$\mathsf{ICP}$s\xspace}
\def\CPs{$\mathsf{CP}$s\xspace}
\def\XICP{$\X$-\ICP}
\def\XICPs{$\X$-\ICPs}
\def\CKICP{$\CK$-\ICP}
\def\CKICPs{$\CK$-\ICPs}

\def\CDICPs{$\CD$-\ICPs}

\def\WIS{$\mathsf{WIS}$\xspace}
\def\WISs{$\mathsf{WIS}$s\xspace}
\def\XWIS{$\X$-\WIS}
\def\XWISs{$\X$-\WISs}
\def\CKWIS{$\CK$-\WIS}
\def\CKWISs{$\CK$-\WISs}

\def\CDWISs{$\CD$-\WISs}

\begin{document}

\title{Towards a Denotational Semantics for Proofs in Constructive Modal Logic}

\author{Matteo Acclavio \and Davide Catta \and Lutz Stra\ss burger}
\date{}

\maketitle

\begin{abstract}	
In this paper we provide two new semantics for proofs in the constructive modal logics $\CK$ and $\CD$.

The first semantics is given by extending the syntax of combinatorial proofs for propositional intuitionistic logic, in which proofs are factorised in a linear fragment (arena net) and a parallel weakening-contraction fragment (skew fibration).
In particular we provide an encoding of modal formulas by means of directed graphs (modal arenas),
and 
an encoding of linear proofs as modal arenas equipped with vertex partitions satisfying topological criteria.

The second semantics is given by means of winning innocent strategies of a two-player game over modal arenas.
This is given by extending the Heijltjes-Hughes-Stra\ss burger correspondence between  intuitionistic combinatorial proofs and winning innocent strategies in a Hyland-Ong arena.
Using our first result, we provide a characterisation of winning strategies for games on a modal arena corresponding to proofs with modalities.

\end{abstract}

\section{Introduction}

	Semantics is the area of logic concerned with specifying the meaning of the logical constructs. 
	We distinguish between two main kind of semantic approach to logic. 
	The first, the model-theoretic approach, is concerned with specifying the meaning of formulas in terms of truth in some model. 
	The second, the denotational semantic approach, is concerned with specifying the meaning of proofs of the logic under a compositional point of view.
	Proofs are interpreted as mathematical objects called denotation, and the meaning of composed proofs is obtained by composing denotations.

Modal logics are extensions of classical logic making use of \emph{modalities} to qualify the truth of a judgement.
According with the interpretation of such modalities, modal logics find applications, for example, in knowledge representation~\cite{knowledgerepresentation}, artificial intelligence~\cite{MEYERAI} and verification~\cite{horne:quasi}.
More precisely, modal logics are obtained by extending classical logic with a modality operator $\lbox$ (together with its dual operator $\ldia$), which are usually interpreted as \emph{necessity} (respectively  \emph{possibility}).

When we move from the classical to the intuitionistic setting
we are forced to make some choices since there are many different flavours of ``intuitionistic modal logics''~(see, e.g., \cite{fitch1948intuitionistic,prawitz1965proof,plotkin1986framework,simpson:phd,bie:paiva:intuitionistic,davies2001modal}).
This range of possible extensions of the intuitionistic logic depends on the fact that the classical $\kax$-axiom $\lbox(A \imp B) \imp (\lbox A \imp \lbox B)$  is no longer sufficient to express the behaviour of the modality~$\ldia$ as it is no longer the dual of $\lbox$.
We here consider the minimal approach and only add the axiom $\lbox(A \imp B) \imp (\ldia A \imp \ldia B)$, leading to what in the literature is now called \emph{constructive modal logics} \cite{prawitz1965proof,bie:paiva:intuitionistic,heilala2007bidirectional,mend:multimodal,fairtlough1997propositional,kojima2012semantical}. 

	Both the denotational approach and the model-theoretic approach have been developed in the literature on constructive modal logics. 
	One of the desired feature of denotational models is full completeness: in a full complete model every denotation is the interpretation of some proof. 
	Reasoning about the property of full complete models allows one to have a syntax-free characterization of the property of proofs. We say that  a denotational model is \emph{concrete} if its elements are not  obtained by the quotient on proofs induced by cut-elimination.	 
To our knowledge, the only  full complete denotational model for this logic is 
not concrete since 
defined by the quotient of their  $\lambda$-calculi with respect to $\beta$-reduction~\cite{bel:deP:rit:extended,bie:paiva:intuitionistic}.

The purpose of this paper is to lay the foundations 
for a concrete denotational full complete model in terms of 
a \emph{game semantics}~\cite{abr:full,hyl:ong:PCF,mcc:FPC} for this logic
by providing a definition of proofs denotations.
Game semantics is a denotational semantics where proofs are denoted by  winning strategies for a two-player game.
In \cite{ICP} it is shown how the syntax of intuitionistic combinatorial proofs (or \ICPs), a graphical proof system for propositional intuitionistic logic, provides some intuitive insights about the innocent winning strategies (or \WISs) in a Hyland-Ong arena~\cite{hyl:ong:PCF,mur:ong:evolving}.
In order to define \WIS for constructive modal logics 
we extend this correspondence. 
For this, we first provide the definition \ICPs for these logics  as shown in \Cref{fig:intro}.

\begin{figure*}[t!]
	\begin{adjustbox}{max width=\textwidth}
		$\begin{array}{c|c}
			\vlderivation{
			\vliq{}{\rimprule}{ \lbox (b\imp b)\imp a) \imp( \ldia c\imp \ldia(a\land a))}{
				\vlin{}{\Kdrule}{\ldia c, \lbox ((b\imp b)\imp a)\vdash \ldia(a\land a)}{
					\vlin{}{\Wrule}{c, (b\imp b)\imp a\vdash a\land a}{
						\vlin{}{\Crule}{(b\imp b)\imp a\vdash a\land a}{
							\vliin{}{\rlandrule}{(b\imp b)\imp a , (b\imp b)\imp a \vdash a\land a}
							{\vliin{}{\limprule}{(b\imp b)\imp a \vdash a}{\vlin{}{\rimprule}{\vdash b \imp b}{\vlin{}{\AXrule}{b \vdash b}{\vlhy{}}}}{\vlin{}{\AXrule}{a \vdash a}{\vlhy{}}}}
							{\vliin{}{\limprule}{(b\imp b)\imp a \vdash a}{\vlin{}{\rimprule}{\vdash b \imp b}{\vlin{}{\AXrule}{b \vdash b}{\vlhy{}}}}{\vlin{}{\AXrule}{a \vdash a}{\vlhy{}}}}
						}
					}
			}}
		}
		&
		\begin{array}{cc}
			\vlderivation{
				\vliq{}{\rimprule}{ \lbox (b\imp b)\imp a) \imp( \ldia c\imp \ldia(a\land a))}{
				\vlin{}{\Kdrule}{\ldia c, \lbox ((b\imp b)\imp a)\vdash \ldia(a\land a)}
				{\vliin{}{\limprule}{c , (b \imp b)\imp a \vdash a \land a}
					{\vlin{}{\rimprule}{c \vdash b \imp b}{\vlin{}{\Wrule}{c,b \vdash b}{\vlin{}{\AXrule}{b \vdash b}{\vlhy{}}}}}
					{\vlin{}{\Crule}{a \vdash a\land a}{
							\vliin{}{\llandrule}{a, a \vdash a\land a}
							{\vlin{}{\AXrule}{a \vdash a}{\vlhy{}}}
							{\vlin{}{\AXrule}{a \vdash a}{\vlhy{}}}
						}
					}}}
			}
			&
			\vlderivation{
				\vliq{}{\rimprule}{ \lbox (b\imp b)\imp a) \imp( \ldia c\imp \ldia(a\land a))}{
				\vlin{}{\Kdrule}{\ldia c, \lbox ((b\imp b)\imp a)\vdash \ldia(a\land a)}
				{\vliin{}{\limprule}{c , (b \imp b)\imp a \vdash a \land a}
					{\vlin{}{\rimprule}{\vdash b \imp b}{\vlin{}{\AXrule}{b \vdash b}{\vlhy{}}}}
					{\vlin{}{\Wrule}{c,a\vdash a \land a}{\vlin{}{\Crule}{a \vdash a\land a}{
							\vliin{}{\llandrule}{a, a \vdash a\land a}
							{\vlin{}{\AXrule}{a \vdash a}{\vlhy{}}}
							{\vlin{}{\AXrule}{a \vdash a}{\vlhy{}}}
						}}
				}}}
			}
		\end{array}	
		\\[7em]
		\begin{array}{ccccccccccccccc}
		&\vb1 	& \vb0 & \va1 &   &  &&&&& \va0
	 \\[1em]
				&  		&  		&  \vb3 & \vb2 &\va3&&&& &&&\va2 
	 \\[1em]
					&		&			&	     & 	&		&	&&&	 \vldia 0
		\\[1em]	
		\vlbox1	 &		&		&			&	     & 	&&		\vldia 1 {\color{white}c}	
		\\[1.5em]
		\vlbox3	&((\vb5	&\imp	&	\vb4)&\imp &\va7  )	&\imp(& \vldia 3 c &\imp& \vldia 2 & (\va4 &\land &\va6&))
		\end{array}
		\dedges{b1/b0,b3/b2,b0/a1,b2/a3}
		\multimodedges{box1}{a3,a1}
		\multimodedges{dia0}{a0,a2}
		\multidedges{dia1, box1,a1,a3}{a0,a2,dia0}
		\skewedges{box1/box3,dia1/dia3,dia0/dia2,b1/b5,b0/b4,a1/a7,b3/b5,b2/b4,a3/a7,a0/a4,a2/a6}
		\axlbents{b1/b0,b3/b2}
		\axllbents{box1/dia0,dia1/dia0,a1/a0,a3/a2}
		&
		\begin{array}{ccccccccccccccc}
			&\vb1 	&& \vb0 &  \va1 &  & &  &&& \va0
			\\[1em]
			&		&  	 &  &&\va3&&& &&&&\va2 
			\\[1em]
			&			&		&			&	     & 	&		&&&	 \vldia 0
			\\[1em]
			\vlbox1	 &		&		&			&	     & 	&&		\vldia 1 {\color{white}c}	 &	& 
			\\[1.5em]
			\vlbox3	&((\vb5	&\imp	&	\vb4)&\imp &\va7  )	&\imp(& \vldia 3 c &\imp& \vldia 2 & (\va4 &\land &\va6&))
		\end{array}
		\dedges{b1/b0,b0/a3,b0/a1}
		\multimodedges{box1}{a3,a1}
		\multimodedges{dia0}{a0,a2}
		\multidedges{dia1, box1,a1,a3}{a0,a2,dia0}
		\skewedges{box1/box3,dia1/dia3,dia0/dia2,b1/b5,b0/b4,a1/a7,a3/a7,a0/a4,a2/a6}
		\axlbents{b1/b0}
		\axllbents{box1/dia0,dia1/dia0,a1/a0,a3/a2}
	\end{array}
		$
		\end{adjustbox}
		\begin{adjustbox}{max width=\textwidth}
		$
		\small
		\begin{array}{c@{\; , \;}c@{\; , \;}c}
			\begin{array}{|c|c@{}c@{\;}c@{\;}c@{}c@{}c@{}c@{}c@{}c@{}c@{}c@{}c@{}c@{}c@{}c@{}c@{}c|}
				\hline
				&\lbox & (( &b& \imp& b&) \imp &a&) \imp (&\ldia& c&\imp& \ldia&( &a&\land& a &))
				\\\hline
				\evensym &&&&& &&&&& && &&&& \va0 &
				\\\hdashline
				\oddsym &&&&&&&\va1&&&&&&&&&&
				\\\hdashline
				\evensym &&&&&\vb0&&&&&&&&&&&&
				\\\hdashline
				\oddsym	&&&\vb1&&&&&&&&&&&&&&
				\\\hline
			\end{array}
			\dedges{b1/b0,b0/a1,a1/a0}
			\axlbents{b1/b0}
			\axllbents{a1/a0}
			&
			\begin{array}{|c|c@{}c@{\;}c@{\;}c@{}c@{}c@{}c@{}c@{}c@{}c@{}c@{}c@{}c@{}c@{}c@{}c@{}c|}
				\hline
				&\lbox & (( &b& \imp& b&) \imp &a&) \imp (&\ldia& c&\imp& \ldia&( &a&\land& a &))
				\\\hline
				\evensym &&&&&&&&&&&&&&\va0&&&
				\\\hdashline
				\oddsym &&&&&&&\va1&&&&&&&&&&
				\\\hdashline
				\evensym &&&&&\vb0&&&&&&&&&&&&
				\\\hdashline
				\oddsym	&&&\vb1&&&&&&&&&&&&&&
				\\\hline
			\end{array}
			\dedges{b1/b0,b0/a1,a1/a0}
			\axlbents{b1/b0}
			\axllbents{a1/a0}
			&
			\begin{array}{|c|c@{}c@{\;}c@{\;}c@{}c@{}c@{}c@{}c@{}c@{}c@{}c@{}c@{}c@{}c@{}c@{}c@{}c|}
				\hline
				&\lbox & (( &b& \imp& b&) \imp &a&) \imp (&\ldia& c&\imp& \ldia&( &a&\land& a &))
				\\\hline
				\evensym &&&&&&&&&&&& \vldia0&&&&&
				\\\hdashline
				\oddsym		&&&&&&&&&\vldia1&&& &&&&&
				\\\hline
			\end{array}
			\dedges{dia1/dia0}
			\axlbents{dia1/dia0}
		\end{array}
	$
	\end{adjustbox}	

	\caption{
		\textbf{Above:} three derivations of the formula $F=\lbox ((b\imp b)\imp a) \imp( \ldia c\imp \ldia(a\land a))$ together with their corresponding \CKICP. 
		\textbf{Below:} the three maximal views on the modal arena of $F$ in the \CKWIS corresponding to the above proofs.
	}
	\label{fig:intro}
\end{figure*}

\textbf{Intuitionistic combinatorial proofs.}

The syntax of \emph{combinatorial proofs} has been introduced to address the problem of proof equivalence for classical logic  \cite{hughes:pws,hughes:invar}.
In the last years this syntax has been extended to modal logics~\cite{acc:str:CPK}, multiplicative linear logic with exponentials~\cite{acc:EHPN}, relevant logics~\cite{acc:str:wollic19,ral:str:epiccube}, first order logic~\cite{hughes:fopws}, and intuitionistic propositional logic~\cite{ICP}.
	Combinatorial proofs allow to represent ``syntax-free'' proofs, that is, to represent proofs independently from a specific proof system \cite{acc:str:IJCAR18,str:FSCD17}. 
	As consequence, we are able to identify proofs up to some rules permutations, which is the reason why we also refer to combinatorial proof as a semantics for proofs.

In the syntax of \ICPs, formulas are represented by arenas, which are specific directed acyclic graphs,
\begin{equation}\label{eq:firstAr}
	\arof{
		((b\imp b)\imp a )\imp (a\land a)
	}
	=
	\begin{array}{ccccccccc}
		\vb5_1	&	\vb4_0&\va7_1 &\va4_2 &\va6_0
	\end{array}
	\dedges{b5/b4,b4/a7,a7/a4}
	\drbent{a7}{a6}
\end{equation}
and proofs of a formula $F$ are represented as specific graph homomorphisms, called \emph{skew fibrations}, from an arena net (i.e. an arena with an equivalence relation $\axeq$ over vertices satisfying some topological conditions) to the arena of $F$.
In the example below, we represent the $\axeq$-partitions by dashed edges, 
 the skew fibration by dotted arrows, and 
 we write the conclusion $F$ as formula on the left and as arena on the right.
\begin{equation}\label{eq:firstICP}
\hspace{-10pt}
\begin{array}{c@{\;}c@{\;}c@{\;}c@{\;}c@{\;}c@{}c@{\;}c@{\;}c@{}c}
	\vb1 	&  &\vb0&& \va1      && \va0
	\\
	&  		&   &  &&\va3&&&\va2 	
	\\[1.5em]
	((\vb5	&\imp	&\vb4)&\imp &\va7 &)\imp(&\va4 &\land &\va6&)
\end{array}
\dedges{b1/b0,b0/a1}
\multidedges{a1,a3}{a0,a2}
\skewedges{b1/b5,b0/b4,a1/a7,a3/a7,a0/a4,a2/a6}
\axlbents{b1/b0}
\axlbents{a1/a0}
\axrbents{a3/a2}
\hskip2em
\begin{array}{c@{\quad}c@{\quad}c@{\;}c@{\;}c@{\;}c@{\quad}c@{\;\;}c@{\;\;}c}
	\vb1  &\vb0&& \va1    &&& \va0
	\\
	&  		&   &  &&\va3&&\va2 	
	\\[1.5em]
	\vb5&\vb4&& &\va7 &&\va4 &\va6
\end{array}
\dedges{b1/b0,b0/a1}
\multidedges{a1,a3}{a0,a2}
\skewedges{b1/b5,b0/b4,a1/a7,a3/a7,a0/a4,a2/a6}
\axlbents{b1/b0}
\axlbents{a1/a0}
\axrbents{a3/a2}
\dedges{b5/b4,b4/a7,a7/a4}
\drbent{a7}{a6}
\end{equation}
But both represent the same \ICP.

In order to represent proofs of modal formulas, 
we define modal arenas (or \mas), i.e. we characterise labeled \dags with two type of edges which can be uniquely associated to formulas
\begin{equation}
	\arof{(\lbox(b\imp b)\imp \ldia a )\imp \ldia (a\land a)}
	=
	\begin{array}{ccccc}
		&\vlbox1 &\vldia 1& \vldia 0
		\\[1em]
		\vb5	&	\vb4&\va7 &\va4 &\va6
	\end{array}
	\dedges{b5/b4}
	\multidedges{a7,dia1}{a4,dia0}
	\multidedges{b4,box1}{a7,dia1}
	\drbent{a7}{a6}
	\dedges{dia1/a6}
	\modedges{box1/b4,dia1/a7,dia0/a4,dia0/a6}
\end{equation}
We then identify the topological conditions which allows us to represent proofs as skew fibrations from a modal arena net to the \ma  representing the formula to prove.
$$
\begin{array}{c@{}cccccccc@{\;}c@{\;}c@{\;}c@{\;}c@{\;}c@{}c}
	&&\vb1 	&& \vb0 &&  \va1 &  &  &&& \va0
	\\[1em]
	&&		&  	 &  &&&\va3&& &&&&\va2 
	\\[1em]
	\vlbox1&			&&&&&     \vldia 1 &&&	 \vldia 0
	\\[1.5em]
	\vlbox3	&((&\vb5	&\imp	&	\vb4&)\imp &\vldia 3 &\va7  	&)\imp& \vldia 2 & (&\va4 &\land &\va6&)
\end{array}
\dedges{b1/b0,b0/a3,b0/a1}
\multimodedges{box1}{b0}
\multimodedges{dia0}{a0,a2}
\multimodedges{dia1}{a3,a1}
\multidedges{dia1, a1,a3}{a0,a2,dia0}
\skewedges{box1/box3,dia1/dia3,dia0/dia2,b1/b5,b0/b4,a1/a7,a3/a7,a0/a4,a2/a6}
\axlbents{b1/b0}
\axllbents{dia1/dia0,a1/a0,a3/a2}
$$
In particular, each $\axeq$-class (which we represent by linking its vertices by means of dashed edges) in the partition represent either an axiom pairing two atom-labeled vertices, 
or a set of modalities  introduced by a single application of an axiom $\kax$.

\textbf{Game semantics for constructive modal logic.}
In intuitionistic propositional logic, we consider two-players games 
played on the arena of a formula $F$ 
(we denote the players by $\evensym$ and $\oddsym$, but in the literature they are denoted by $\mathsf{O}$ and $\mathsf{P}$, standing for \emph{opponent} and \emph{proponent}).

Each play consists of an alternation  of
\evenmoves and
\oddmoves,
that are vertices of the arena of $F$.
The first move in a play is a \evenmove selected among the $\iedge$-roots of the arena of $F$.
Each subsequent move of a player must be \emph{justified} by a previous move of the other player, that is, the selected vertex must $\iedge$-point a vertex previously played by the other player.
The game terminates when one player has no possible moves, losing the play.

A winning innocent strategy (for $\oddsym$) is a set of plays which takes into account every possible \evenmove, while each \oddmove is uniquely determined (and justified) by one of the previous \evenmoves.

As shown in  \cite{ICP}, the winning strategy over the arena in \Cref{eq:firstAr} with maximal views
$$
\iseven a_0 \isodd a_1 \iseven b_0 \isodd b_1
\qquad\mbox{and}\qquad
\iseven a_2 \isodd a_1 \iseven b_0 \isodd b_1
$$
can be seen as the image of specific paths in the arena net 
by the skew fibration 
in the \ICP in \Cref{eq:firstICP}. Such paths are the ones in which the \evenmove is followed by the unique  \oddmove in the same $\axeq$-class.
In this paper we show that a similar correspondence can be established  for constructive modal logics,
provided some additional conditions on winning strategies for modal arenas
(see \Cref{fig:intro}).

In fact, the presence of modalities requires a new notion of \emph{frames} in a play: 
whenever $\evensym$ plays a move in the scope of a new modality, that is, a modality whose scope contains no previous moves of the play, 
the next \oddmove must be in the scope of the same number of modalities.
This allows us to establish a relation between the modalities in the modal arena and to group them in frames accordingly.
Intuitively, frames allow to certify the correct application of the modal axioms.

\textbf{Outcomes of the paper.}
In this paper we provide the definition of \ICPs and \WISs for the constructive modal logics $\CK$ and $\CD$.
For this purpose, we show a decomposition theorem allowing to transform proofs of a formula 
into factorised proofs, that are proofs consisting of a linear part and a weakening-contraction part.
We show soundness and completeness of these semantics and we prove the following full completeness result:
$$
\Set{\mbox{factorised proof of } F}
\twoheadrightarrow
\Set{\mbox{\ICPs of }F}
\twoheadrightarrow
\Set{\mbox{\WISs on } \arof F}
$$
To our knowledge no game semantics for modal logic are discussed in the literature. Our game semantics approach paves the way to  concrete full-complete denotational models for modal logics.

\textbf{Organisation of the paper.}
In \Cref{sec:background} we show a decomposition theorem by providing a polarized sequent calculus~\cite{lamarche:essential,marin:str:aiml14} 
which also include some deep inference rules~\cite{gug:str:01,brunnler:tiu:01,guglielmi:SIS};
in \Cref{sec:arenas} we establish a correspondence between certain labeled directed graphs (modal arenas) and modal formulas;
these graphs, enriched with a partition of their vertices, are used in \Cref{sec:arenaNets} to encode linear proofs;
moreover, in \Cref{sec:skew} we show how to represent structural derivations by means of skew fibrations between modal arenas;
in \Cref{sec:CP} we provide a definition of \ICPs for $\CK$ and $\CD$, and in \Cref{sec:games} we use them to define winning innocent strategies for these logics.

\begin{figure*}[!t]
	\centering
	\def\myskip{\hskip.3em}
			\begin{tabular}{c@{\myskip}c@{\myskip}c@{\myskip}c@{\myskip}c@{\myskip}c@{\myskip}c@{\myskip}c}
				$\vlinf{}{\AXrule}{a \vdash a}{}$
				&
				$\vliinf{}{\cutr}{ \Gamma, \Delta \vdash B}{ \Gamma \vdash A}{ \Delta, A \vdash B}$
				&
				$\vlinf{}{\Crule}{ \Gamma, A \vdash B}{ \Gamma, A, A \vdash B}$
				&
				$\vlinf{}{\Wrule}{ \Gamma, A \vdash B}{ \Gamma \vdash B}$
				\\[1em]
				$\vlinf{}{\rimprule}{ \Gamma  \vdash A \imp B}{ \Gamma , A \vdash B}$
				&
				$\vliinf{}{\limprule}{ \Gamma, \Delta , A \imp B\vdash C}{ \Gamma \vdash A}{ \Delta, B \vdash C}$
				&
				$\vliinf{}{\rlandrule}{ \Gamma, \Delta \vdash A \land B}{ \Gamma \vdash A}{ \Delta \vdash B}$
				& 
				$\vlinf{}{\llandrule}{ \Gamma, A \land B \vdash C}{ \Gamma, A, B \vdash C}$
				\\[1em]
				$\vlinf{}{\Kbrule}{ \lbox \Gamma \vdash \lbox A }{ \Gamma \vdash A}$
				&
				$\vlinf{}{\Kdrule}{\ldia A, \lbox \Gamma \vdash \ldia B}{A, \Gamma \vdash B}$
				&
				$
				\vlinf{}{\Kurule}{ \ldia \unit, \lbox \Gamma \vdash \ldia A }{ \Gamma \vdash A}
				$
				&
				$\vlinf{}{\Drule}{\lbox \Gamma \vdash \ldia A}{\Gamma \vdash A}$
			\end{tabular}
			
			$$\begin{array}{rcl@{\qquad}rcl}
				\IMLL& =& \set{\AXrule, \rimprule, \limprule,\llandrule, \rlandrule }
				&
				\Ip&=& \IMLL\cup\set{\Crule,\Wrule}
				\\
				\IMLLK& =& \IMLL\cup\set{\Kbrule,\Kdrule,\Kurule}
				&
				\IpK &=& \Ip\cup \set{\Kbrule,\Kdrule
				}
				\\
				\IMLLD& =& \IMLL\cup\set{\Kbrule,\Kdrule,\Drule}
				&
				\IpD&= &\Ip \cup \set{\Kbrule,\Kdrule,\Drule}
			\end{array}
			$$
	\caption{Sequent rules and sequent systems for constructive modal logics considered in this paper}
	\label{fig:seqCalc}
\end{figure*}

\section{Preliminaries on Constructive modal logics}\label{sec:background}

\begin{figure*}[!t]
	\def\myskip{\hskip.5em}
			\begin{tabular}{c@{\myskip}c@{\myskip}c@{\myskip}c@{\myskip}c}
				$\vlinf{}{\axrule}{ \isodd{a}, \iseven{a}}{}$
				&
				&
				$\vlinf{}{\worule}{ \isodd \Gamma, \isodd B, \iseven A}{ \isodd \Gamma , \iseven A}$
				&
				$\vlinf{}{\crule}{ \isodd \Gamma,\isodd  A \iseven B}{ \isodd \Gamma,\isodd  A, \isodd A ,\iseven B}$
				\\[1em]
				$\vlinf{}{\rlimpr}{\isodd \Gamma , \iseven{( A \limp  B)}}{ \isodd \Gamma , \isodd A ,\iseven B}$
				&
				$\vliinf{}{\llimpr}{ \isodd \Gamma , \isodd \Delta,  \isodd{( A \limp  B)} , \iseven C}{ \isodd \Gamma ,\iseven A}{ \isodd\Delta, \isodd B ,\iseven C}$
				&
				$\vliinf{}{\rltensr}{ \isodd \Gamma, \isodd \Delta ,\iseven{( A \ltens  B)}}{ \isodd\Gamma ,\iseven A}{ \isodd \Delta, \iseven B}$
				&
				$\vlinf{}{\lltensr}{ \isodd \Gamma, \isodd{( A \ltens  B)} , \iseven C}{ \isodd \Gamma, \isodd A, \isodd B , \iseven C}$
				\\[1em]
				$\vlinf{}{\kbrule}{ \isodd {\lbox \Gamma},\iseven {\lbox A} }{ \isodd \Gamma ,\iseven A}$
				&
				$\vlinf{}{\kdrule}{\isodd {\ldia A}, \isodd {\lbox \Gamma} , \iseven {\ldia B}}{\isodd A, \isodd \Gamma ,\iseven{B}}$
				&
				$\vlinf{}{\kurule}{\isodd {\ldia \unit}, \isodd {\lbox \Gamma} , \iseven {\ldia B}}{ \isodd \Gamma ,\iseven{B}}$
				&
				$\vlinf{}{\drule}{\isodd {\lbox \Gamma} , \iseven {\ldia A}}{ \isodd \Gamma ,\iseven{A}}$
			\end{tabular}
			$$\begin{array}{r@{\;}c@{\;}l@{\qquad}r@{\;}c@{\;}l}
				\PIMLL & =&  \set{\axrule,\rlimpr , \llimpr, \lltensr,\rltensr}     
				&
				\PIp&=& \PIMLL \cup \set{\crule,\worule} 
				\\
				\PIMLLK &=& \PIMLL \cup \set{\kbrule, \kdrule,\kurule}
				&
				\PK &=& \PIp\cup \set{\kbrule,\kdrule
				}
				\\
				\PIMLLD&= &\PIMLL \cup \set{\kbrule,\kdrule,\drule}
				&
				\PD&= & \PIp \cup \set{\kbrule, \kdrule,\drule}
			\end{array}
			$$

	        \caption{Polarised sequent rules and polarised sequent systems for the constructive modal logics considered in this paper}
	\label{fig:PolSeqCalc}\label{fig:paritySystems}
\end{figure*}

\begin{figure*}[!t]
	\def\myskip{\hskip1em}
	\hbox to\textwidth{\hfil   
		$
			\DOWN{\PIp}
			\;
			=
			\Set{
			\;
			\vlinf{}{\deep\wdrule}{\Gamma\cons {\isodd{\ldia A}}}{ \Gamma\cons{ \isodd{\ldia \unit} }}
			\; , \;
			\vlinf{}{\deep\wtrule}{\Gamma\cons {\isodd{ A \ltens  B}}}{ \Gamma\cons{ \isodd A }}
			\;    , \;
			\vlinf{}{\deep\wirule}{\Gamma\cons {\iseven{B\limp A}}}{ \Gamma\cons{ \iseven A }}
			\;    , \;
			\vlinf{}{\deep \crule}{\Gamma\cons {\isodd A}}{ \Gamma\cons {\isodd{ (A \ltens  A)}}}
			\;
			}
		$
		\hfil}
	\caption{Deep inference rules for weakening and contraction}
	\label{fig:deepRules}
\end{figure*}

In this paper we consider the \emph{modal formulas} generated by a countable set of (atomic) propositional variables $\atoms= \set{a, b, \dots}$ via the following grammar
$$A,B ::=\  a \mid   A\imp B \mid A\land B \mid \lbox A \mid \ldia A
\mid \ldia \unit
$$
and we say that a formula is \emph{modality-free} if it contains no occurrences of $\lbox $  and $\ldia$.

We consider the variant of intuitionistic modal logic $\CK$ called \emph{constructive modal logic} \cite{ari:das:str:constructive, bie:paiva:intuitionistic, mend:multimodal, wij:constr, kuz:mar:str:Justification} defined by adding to the intuitionistic propositional logic 
the necessitation rule
$$
\mbox{If $F$ is provable, then $\lbox F$ is provable}
$$
and the two following axiom schemes $\krule_1$ and $\krule_2$.
\begin{equation*}\label{eq:Kaxioms}
\!	\krule_1  \colon  \lbox(A \imp B) \imp (\lbox A \imp \lbox B)  
	\qquad
	\krule_2    \colon        \lbox(A \imp B) \imp (\ldia A \imp \ldia B)      
\end{equation*}

A further extension of this logic, denoted $\CD$, can be obtained by adding the following axiom scheme
$$
	\drule     \colon        \lbox A \imp \ldia A
$$

In this paper we consider the fragment of $\CK$ and $\CD$ containing only implication and conjunction given by the rules and sytems in \Cref{fig:seqCalc}, since these suffice to express $\lambda$-calculi with pairs for these logics.

We remark that the presence of $\ldia \unit$ is not standard for the unit-free fragment.
In fact, no rule can introduce this formula in $\IpK$ and $\IpD$.
However, it plays a special role in some results in this paper since its purpose is to represent a ``placeholder'' for a $\ldia$-formula which may be introduced by a weakening rule but whose occurrence is not a negligible information in the proof.

\begin{theorem}\label{thm:soundCompleteLogic}
	The sequent system $\IpX$ is a sound and complete proof system for the disjunction-free fragment of the logic $\X$ for all $\X \in \set{\CK,\CD}$.
	Moreover, $\IpX$ satisfies cut-elimination property.
\end{theorem}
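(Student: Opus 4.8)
The plan is to prove the three assertions separately: soundness and completeness of $\IpX$ extended with the cut rule $\cutr$, and then the eliminability of $\cutr$ (recall that $\IpX$ as defined contains no cut, so ``cut-elimination'' here means that $\IpX+\cutr$ and $\IpX$ derive the same sequents). Since $\Ip$ is the $\set{\imp,\land}$-fragment of the intuitionistic sequent calculus with explicit weakening and contraction, whose equivalence with the corresponding fragment of intuitionistic propositional logic is classical, I would discharge all purely propositional obligations by citing that result and concentrate on the modal rules $\Kbrule,\Kdrule$ (and $\Drule$ when $\X=\CD$) together with the modal axioms $\krule_1,\krule_2$ (and $\drule$).

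For soundness I would show by induction on a derivation of $\Gamma\vdash A$ in $\IpX+\cutr$ that $\X$ proves the implication $\bigwedge\Gamma\imp A$, reading $\cutr$ as modus ponens and treating the structural and propositional rules as in the intuitionistic case. The only new steps are the modal rules. For $\Kbrule$, from a proof of $\bigwedge\Gamma\imp A$ I apply necessitation and then distribute the box across the implications by repeated use of $\krule_1$, reaching $\bigwedge\lbox\Gamma\imp\lbox A$. For $\Kdrule$, I first curry the premise $A,\Gamma\vdash B$ so that $A\imp B$ is the innermost implication preceded by the context $\Gamma$; necessitation followed by the same $\krule_1$-distribution over $\Gamma$ yields $\lbox\Gamma\vdash\lbox(A\imp B)$, and then a single application of $\krule_2$ introduces the two diamonds. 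For $\Drule$, I combine the $\Kbrule$-style distribution with the axiom $\drule$.

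For completeness I would derive the sequent form of each modal axiom in cut-free $\IpX$ and simulate the Hilbert rules. The sequent $(A\imp B),A\vdash B$ is derivable by $\limprule$ from the derivable identity sequents $A\vdash A$ and $B\vdash B$; applying $\Kbrule$ and then two right-implication rules yields a derivation of $\krule_1$, whereas applying $\Kdrule$ instead (with $A$ in the diamond slot and $A\imp B$ in the box context) yields $\krule_2$. The axiom $\drule$ comes from $A\vdash A$ by $\Drule$ and $\rimprule$. Necessitation is the special case of $\Kbrule$ with empty context, and modus ponens is an instance of $\cutr$; together with propositional completeness this shows that every disjunction-free theorem of $\X$ is provable in $\IpX+\cutr$.

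The remaining and most delicate part is cut-elimination. I would run the standard Gentzen procedure, a double induction on the size of the cut formula and on the heights of the two premises of a topmost cut. The commutative cases permute the cut towards the leaves and the principal propositional cases are routine; the genuinely new work is in the principal modal cuts. A cut on $\lbox A$ whose left premise is a $\Kbrule$ and whose right premise carries $\lbox A$ inside the box-context of a $\Kbrule$, $\Kdrule$ or $\Drule$ reduces to a cut on the strictly smaller formula $A$, after which the modal rule is reapplied; symmetrically, a cut on $\ldia A$ between a $\Kdrule$ (or, for $\CD$, a $\Drule$) on the left and the diamond-principal premise of a $\Kdrule$ on the right again reduces to a cut on $A$. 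Here the rigid shape of the modal rules --- every antecedent formula must be boxed and at most one diamond occurs on the left --- is exactly what lets the contexts recombine, and checking that these recombinations respect the side conditions of $\Kdrule$ and $\Drule$ is the main obstacle. The other delicate point is the interaction of cut with $\Crule$: permuting a cut above a contraction on the cut formula duplicates it, so I would use the usual rank-based argument (equivalently, a mix rule cutting all copies simultaneously) to keep the induction well-founded.
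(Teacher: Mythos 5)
Your proposal is correct in outline but takes a genuinely different route from the paper. The paper's entire proof is a citation: it invokes the sound, complete and analytic sequent systems of Kuznets--Marin--Stra\ss burger for $\CK$ and $\CD$ and then \emph{extracts} the disjunction-free calculi $\IpX$, the point being that cut-free proofs enjoy the subformula property, so a cut-free proof of a disjunction-free sequent never mentions a disjunction and hence already lives in $\IpX$. You instead reprove everything from scratch: soundness by translating each modal rule into necessitation plus $\krule_1$/$\krule_2$/$\drule$ (your currying of the $\Kdrule$ premise so that $A\imp B$ is innermost, followed by a single $\krule_2$, is exactly right), completeness by deriving the axiom schemes in cut-free $\IpX$ and simulating modus ponens with $\cutr$, and then a Gentzen-style cut-elimination whose only new principal cases are the box--box and diamond--diamond interactions, which indeed recombine cleanly because all antecedent side formulas of the modal rules are boxed. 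What the paper's approach buys is brevity and the automatic handling of the fragment extraction; what yours buys is a self-contained argument that does not depend on the cited systems matching the rules of Figure 1.

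There is one gap worth naming. Your completeness argument establishes that every theorem of the \emph{disjunction-free Hilbert axiomatization} is provable in $\IpX+\cutr$. If ``the disjunction-free fragment of the logic $\X$'' is read, as the paper intends, as the set of disjunction-free theorems of the full logic $\X$, you additionally need conservativity: a disjunction-free formula provable in $\X$ might a priori require a Hilbert proof passing through formulas containing $\lor$. This is precisely what the paper's citation delivers for free (cut-elimination for the \emph{full} calculus plus the subformula property), whereas your cut-elimination is carried out only for the already-restricted calculus $\IpX$ and so cannot supply it. The fix is standard --- either run your cut-elimination argument for the full calculi with the disjunction rules included and then restrict, or cite conservativity of $\CK$ and $\CD$ over their $\set{\imp,\land,\lbox,\ldia}$-fragments --- but as written the step is missing.
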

\begin{proof}
	In \cite{kuz:mar:str:Justification} there are provided sound and complete systems for these logics.
	These systems are proven to be analytic, i.e. satisfying cut-elimination property.
	Thus we can extract the desired disjunction-free calculi.
\end{proof}

In order to define combinatorial proofs for a given logic, we need to have a \emph{decomposition theorem} which lets us factorize proofs in a linear part, capturing the logic interactions between the components of the proof, and a resource management part, capturing resources duplication or erasing.

To achieve this decomposition result for the logics considered in this paper, 
we use the sound and complete cut-free sequent systems provided in \cite{marin:str:aiml14} to define new rule systems in which we make use of deep inference rules~\cite{gug:gun:par:2010,gug:str:01,brunnler:tiu:01,guglielmi:SIS}, that is,  rules which can be applied deep inside a formula in any context.
The use of deep inference rules allows us to push down in a derivation all the occurrences of weakening and contractions.
In particular, as done in \cite{acc:str:CPK} for classical modal logic, we consider $\Krule$ and $\Drule$ as part of the logic interaction of a proof. 

\subsubsection*{Polarized formulas}

However, a difficulty arises in applying such permutations in the intuitionistic setting since weakening and contraction rules may be performed only on the left-hand-side formulas in a sequent.
In order to assure the correctness of deep applications of  weakening and contraction rules, 
we introduce a syntax using \emph{polarized formulas} (or \emph{\pformulas}) to represent a two-sided single-conclusion calculus by a polarized one-sided sequent calculus, as done in \cite{marin:str:aiml14}. This allows us to  keep track of which subformulas in a sequent $\Gamma$ occurred on the left-hand-side of a sequent occurring in a derivation of $\Gamma$.

We define the set of \pformulas as the set generated by $\atoms = \set{a, b, \dots}$ using the following grammar
$$
\begin{array}{c@{::=}l}
	\iseven A, \iseven B
	&
	\iseven a \mid \iseven A \ltens \iseven B \mid \isodd A \limp \iseven B \mid \lbox \iseven A \mid \ldia \iseven A
	\\
	\isodd A, \isodd B
	&
	\isodd a \mid \isodd A \ltens \isodd B \mid \iseven A \limp \isodd B \mid \lbox \isodd A \mid \ldia \isodd A \mid \ldia \isodd \unit
\end{array}
$$
A \emph{context} is a sequent $\Gamma{\cons{}}$ in which one atom occurrence is been replaced by the hole~$\cons{}$. 

In order to improve readability, we omit to write polarities on subformulas since they can be deduced as follows:
\begin{itemize}
	\item if $\iseven{(A \limp B)}$, then $\isodd A$ and $ \iseven B$;
	\item if $\isodd{(A \limp B)}$, then $\iseven A$ and $ \isodd B$;
	\item if $\iseven{(A \ltens B)}$, then $\iseven A$ and $ \iseven B$;
	\item if $\isodd{(A \ltens B)}$, then $\isodd A$ and $ \isodd B$;
	\item if $\iseven{\lbox A}$ or $\iseven{\ldia A}$, then $\iseven A$; 
	\item if $\isodd{\lbox A}$ or  $\isodd{\ldia A}$, then $\isodd A$;
\end{itemize}

\begin{figure*}
	\centering
	\small
        \def\myskip{-1.5ex}
	$
	\begin{array}{c@{\hskip.5em}c@{\hskip.5em}c}
		\vlderivation{
			\vlin{}{\kbrule}{\isodd{\lbox \Gamma}, \isodd{\lbox A} ,\iseven{\lbox B}}{
				\vlin{}{\worule}{\isodd \Gamma, \isodd A ,\iseven B}
				{\vlhy{\isodd \Gamma, \iseven B}}
		}}
		\normto
		\vlderivation{
			\vlin{}{\worule}{\isodd{\lbox \Gamma}, \isodd{\lbox A} ,\iseven{\lbox B}}{
				\vlin{}{\kbrule}{\isodd {\lbox\Gamma}, \iseven{\lbox B}}
				{\vlhy{\isodd \Gamma, \iseven B}}
		}}
		&
		\vlderivation{
			\vlin{}{\kdrule}{\isodd{\lbox \Gamma}, \isodd{\ldia A} ,\iseven{\ldia B}}{
				\vlin{}{\worule}{\isodd \Gamma, \isodd A , \iseven B}
				{\vlhy{\isodd \Gamma, \iseven B}}
		}}
		\normto
		\vlderivation{
			\vlin{}{\deep\wdrule}{\isodd{\lbox \Gamma}, \isodd{\ldia A} ,\iseven{\ldia B}}{
				\vlin{}{\kurule}{\isodd {\lbox\Gamma},\isodd{\ldia \unit}, \iseven{\ldia B}}
				{\vlhy{\isodd \Gamma, \iseven B}}
		}}
		%
		\\ \\[\myskip]
		\vlderivation{
			\vlin{}{\drule}{\isodd{\lbox \Gamma}, \isodd{\lbox A} ,\iseven{\ldia B}}{
				\vlin{}{\worule}{\isodd \Gamma, \isodd A , \iseven B}
				{\vlhy{\isodd \Gamma, \iseven B}}
		}}
		\normto
		\vlderivation{
			\vlin{}{\worule}{\isodd{\lbox \Gamma}, \isodd{\lbox A} ,\iseven{\ldia B}}{
				\vlin{}{\drule}{\isodd {\lbox\Gamma},\iseven{\ldia B}}
				{\vlhy{\isodd \Gamma, \iseven B}}
		}}
		%
		&
		\vlderivation{\vlin{}{\lltensr}{ \isodd \Gamma, \isodd{( A \ltens  B)} , \iseven C}{\vlin{}{\worule} {\isodd \Gamma, \isodd A, \isodd B , \iseven C}{\vlhy{ \isodd \Gamma, \isodd A, \iseven C}}}}
		\normto
		\vlderivation{\vlin{}{\deep\wtrule}{ \isodd \Gamma, \isodd{( A \ltens  B)} , \iseven C}{{\vlhy{ \isodd \Gamma, \isodd A, \iseven C}}}}
		%
		\\\\[\myskip]
		\vlderivation{\vlin{}{\rlimpr}{ \isodd \Gamma, \iseven{B\limp A}}{\vlin{}{\worule} {\isodd \Gamma,  \isodd B , \iseven A}{\vlhy{ \isodd \Gamma, \iseven A}}}}
		\normto
		\vlderivation{\vlin{}{\deep\wirule}{ \isodd \Gamma, \iseven{  B\limp A } }{{\vlhy{ \isodd \Gamma, \iseven A}}}}
                &
		\vlderivation{
			\vliin{}{\llimpr}{\Gamma, A \limp B, \Delta \vdash C}
			{\vlpr{\dD}{\PX}{\isodd \Gamma, \iseven  A }}
			{\vlin{}{\worule}{\isodd B,\isodd \Delta , \iseven C}
				{\vlhy{\isodd \Delta, \iseven C}}
		}}
		\quad
		\normto
		\quad
		\vlderivation{
			\vliq{}{\worule}{\isodd \Gamma, \isodd{A \limp B}, \isodd \Delta, \iseven C}
			{\vlhy{\isodd \Delta, \iseven C}}
		}
	\end{array}
	$
	\caption{Rule permutations for $\worule$}
	\label{fig:permutations}
\end{figure*}

For \pformulas, we define the sequent rules as systems in \Cref{fig:PolSeqCalc}, and deep rules in \Cref{fig:deepRules}.
In particular, the rules  in \Cref{fig:PolSeqCalc}
can be obtained by the ones in \Cref{fig:seqCalc} by encoding any two sided sequent $\Gamma, B \vdash A$ as the one-side sequent $\isodd \Gamma, \isodd B, \iseven A$.

Polarized formulas allow us to restrain the application of the deep rules only to specific subformulas. 
In particular, we can apply  $\crule, \wdrule, \worule$
($\wirule$) to the formulas which occurs as a $\oddsym$-formula (respectively $\evensym$-formula) in a sequent occurring in the derivation.

If $H$ is a \pformula, we denote by $\spalletto H$ the formula obtained by removing all polarities occurring in $H$ and replacing the $\ltens$ and $\limp$ symbols respectively with  $\land$ and $\imp$.
This translation induces a correspondence between the systems in \Cref{fig:seqCalc} and in \Cref{fig:PolSeqCalc}.
However, the interest in introducing the polarized systems depends on the following result.

\begin{notation}
	If $S$ is a set of rules, we write $F'\provevia S F$ if there is a derivation from $F'$ to $F$ using rules in $S$.
	Moreover, we write $\provevia S F$  if there is a proof of $F$ in $S$, i.e. a derivation using rules in $S$ form the empty premise.
\end{notation}

\begin{theorem}\label{thm:pol}
	Let $\X \in \set{\CK, \CD}$ and  $H$ be a \pformula, then 
	\begin{itemize}
		\item $\provevia{\IpX} \spalletto H$ iff $\provevia{\PX} H$;
		\item $\provevia{\IMLLX} \spalletto H$ iff $\provevia{\PIMLLX} H$.
	\end{itemize}
\end{theorem}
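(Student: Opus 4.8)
The plan is to exhibit a bijection between derivations in the two-sided systems and derivations in the polarised one-sided systems, and to show that it restricts to a bijection on proofs (i.e.\ derivations from the empty premise). The crucial observation is that the encoding described after \Cref{fig:PolSeqCalc}, sending a two-sided sequent $\Gamma \vdash A$ to the one-sided sequent $\isodd\Gamma, \iseven A$, is invertible: in every sequent occurring in a $\PX$- or $\PIMLLX$-derivation there is exactly one $\evensym$-polarised \pformula, all the others being $\oddsym$-polarised. This invariant is immediate by induction on the derivation, since each rule in \Cref{fig:PolSeqCalc} has exactly one $\evensym$-formula in its conclusion and each premise again carries exactly one. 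Hence every such sequent is the image $\isodd\Gamma, \iseven A$ of a uniquely determined two-sided sequent, and applying $\spalletto{\cdot}$ componentwise recovers that two-sided sequent $\spalletto\Gamma \vdash \spalletto A$.

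First I would record the rule-by-rule correspondence: each rule of \Cref{fig:seqCalc} matches exactly one rule of \Cref{fig:PolSeqCalc} under the encoding, namely $\AXrule \leftrightarrow \axrule$, $\rimprule \leftrightarrow \rlimpr$, $\limprule \leftrightarrow \llimpr$, $\rlandrule \leftrightarrow \rltensr$, $\llandrule \leftrightarrow \lltensr$, $\Crule \leftrightarrow \crule$, $\Wrule \leftrightarrow \worule$, and the modal rules $\Kbrule,\Kdrule,\Kurule,\Drule$ with their lower-case counterparts $\kbrule,\kdrule,\kurule,\drule$. The only delicate point is the propagation of polarities into subformulas: reading polarities off a two-sided sequent assigns $\evensym$ to the succedent and $\oddsym$ to each antecedent formula, and then pushes them inward so that the antecedent of an implication flips polarity while $\land$ (resp.\ $\ltens$) and the modalities preserve it. I would check that this is precisely the polarity discipline imposed by the grammar of \pformulas (the bulleted clauses following the grammar), so that the encoded sequents are well-formed \pformulas and each rule instance is licensed on both sides.

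With the correspondence in hand, both implications follow by induction on the height of a derivation, uniformly for $\X\in\set{\CK,\CD}$ and for both the $\Ip$- and $\IMLL$-families. For $\provevia{\PX} H \Rightarrow \provevia{\IpX}\spalletto H$ I would apply $\spalletto{\cdot}$ to every sequent of a given $\PX$-proof: by the single-$\evensym$-formula invariant each sequent becomes a genuine two-sided sequent, by the rule correspondence each inference becomes a valid $\IpX$-inference, and the conclusion $H$ — necessarily $\evensym$-polarised, being the sole formula of a provable one-sided sequent — maps to $\vdash\spalletto H$. For the converse I would polarise each sequent of an $\IpX$-proof via the encoding, using the deterministic inward propagation of polarities to turn every subformula into a \pformula; the same rule correspondence makes each step a valid $\PX$-step, and the conclusion $\vdash\spalletto H$ polarises back to $H$ because polarisation with $\evensym$ top polarity is inverse to $\spalletto{\cdot}$ on well-formed even \pformulas. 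The argument for $\PIMLLX$/$\IMLLX$ is identical.

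The main obstacle is the bookkeeping around $\ldia\unit$ and the modal rules. Since the grammar admits $\ldia\unit$ only with $\oddsym$ polarity, I must verify that no rule ever forces $\ldia\unit$ into an $\evensym$ position: this holds because $\ldia\unit$ is introduced solely by $\Kurule$/$\kurule$, which places $\isodd{\ldia\unit}$ among the (odd) left-hand formulas, and no rule ever moves a left formula to the succedent. I would also confirm that the rule sets line up across the four system pairs — $\Kurule$ belongs to $\IMLLK$ and $\PIMLLK$ but to neither $\IpK$ nor $\PK$, and is absent from all $\CD$-systems — so that the bijection on rules indeed restricts to a bijection on each pair. Everything else is a routine check that the encoded premises and conclusion of each schematic rule coincide.
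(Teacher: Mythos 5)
Your proof is correct and matches the paper's intent: the paper states this theorem without an explicit proof, relying on the observation immediately preceding it that the polarised rules are obtained from the two-sided ones by encoding $\Gamma, B \vdash A$ as $\isodd\Gamma, \isodd B, \iseven A$, and your rule-by-rule bijection together with the single-$\evensym$-formula invariant is precisely the routine verification being elided there. The care you take with $\ldia\unit$ never occurring in $\evensym$ position and with the placement of $\Kurule$/$\kurule$ across the four system pairs addresses the only points that genuinely need checking.
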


\begin{theorem}[Decomposition]\label{thm:decomposition}
	Let $\X \in \set{\K, \D}$ and $H$ be a \pformula. Then 
	$\provevia{\PX} H $
	iff
	$ \provevia{\PIMLLX} H' \provevia{\DOWN{\PIp}} H$.
\end{theorem}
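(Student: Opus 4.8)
The plan is to prove both directions by \emph{rule permutation}, pushing every occurrence of the shallow structural rules $\worule$ and $\crule$ downward until they form a single chain of deep rules $\DOWN{\PIp}$ at the bottom, leaving a purely linear $\PIMLLX$ derivation above. The engine is a family of local permutation lemmas, of which the cases for $\worule$ are exactly those in \Cref{fig:permutations}; I would add the symmetric cases for $\crule$ permuting past each logical and modal rule. Each lemma rewrites the pattern ``structural rule above a logical or modal rule'' into an equivalent one with the structural content lowered, recording in particular how a shallow weakening becomes a deep rule once it crosses the logical rule introducing the affected connective (so $\lltensr$ above $\worule$ becomes $\deep\wtrule$, and $\rlimpr$ above $\worule$ becomes $\deep\wirule$).

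For the direction from right to left, I would read these lemmas backwards, rewriting the bottom block of deep rules into shallow $\worule$ and $\crule$ instances interleaved with $\PIMLLX$ rules. Every rule of $\PIMLLX$ other than $\kurule$ already belongs to $\PX$, and the single modal point is that each $\kurule$ comes paired with exactly one $\deep\wdrule$ — the one converting its $\isodd{\ldia \unit}$ into the intended $\isodd{\ldia A}$ — and the pair recombines into a $\kdrule$ of $\PX$. Since every rule so obtained lies in $\PX$, the factorised derivation unfolds into a genuine $\PX$ proof of $H$.

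For the main direction, I would start from a proof of $H$ in $\PX$ and repeatedly apply the permutation lemmas to move each $\worule$ and $\crule$ below every logical and modal rule beneath it in its branch. A single step either slides a structural rule past a logical or modal rule, or lets a weakening absorb and delete an entire subderivation (the $\llimpr$ over $\worule$ case), or — in the critical $\kdrule$ over $\worule$ case — turns the shallow weakening into $\deep\wdrule$ while downgrading the $\kdrule$ to $\kurule$. Since $\worule$, $\crule$ and all deep rules are unary, once no structural rule has a logical or modal rule below it the structural inferences form a single chain at the bottom; because the end-formula $H$ leaves no room for a weakened or duplicated top-level formula, every surviving structural rule acts on a proper subformula and is therefore a rule of $\DOWN{\PIp}$. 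Splitting the derivation at this frontier yields a $\PIMLLX$ proof of the intermediate $H'$ followed by a $\DOWN{\PIp}$ derivation of $H$, which is the claim.

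The delicate point is the modal bookkeeping. The rewriting of $\kdrule$ over $\worule$ into $\kurule$ over $\deep\wdrule$ is precisely what forces the placeholder $\ldia \unit$ and the rule $\kurule$ into the linear system: a diamond weakened in just before a $\kdrule$ leaves that rule without a principal premise, so necessitation must instead fire on $\isodd{\ldia \unit}$, the intended diamond being restored only afterwards, deep, by $\deep\wdrule$. Fitting this case — together with the interactions of $\crule$ with $\kbrule$, $\kdrule$ and $\drule$ that \Cref{fig:permutations} does not display — into one terminating strategy is the crux; I would close the argument with a well-founded measure, for instance the multiset over structural occurrences of the number of logical and modal inferences below each, checking that every permutation step (including the deleting weakening steps) strictly decreases it in the multiset order, so that the rewriting normalises to the factorised form.
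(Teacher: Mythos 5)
Your overall strategy coincides with the paper's: push the shallow structural rules downward via the permutations of \Cref{fig:permutations}, let a $\worule$ crossing the rule that introduces the affected connective become $\deep\wtrule$, $\deep\wirule$ or $\deep\wdrule$, handle the modal crux by rewriting $\kdrule$ over $\worule$ into $\kurule$ over $\deep\wdrule$ (or $\drule$ over $\worule$ when $\drule$ is available), observe that no shallow $\worule$ can survive at the bottom because the conclusion is a single \pformula, and obtain the converse by reverting the procedure. Your explicit multiset termination measure is a reasonable supplement to what the paper leaves implicit.

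There is, however, a genuine gap in your treatment of contraction. You propose to permute the shallow $\crule$ downward past each logical and modal rule by ``symmetric'' analogues of the $\worule$ permutations, expecting it to turn into $\deep\crule$ at the appropriate moment. But $\deep\crule$ only rewrites a subformula occurrence of the shape $\isodd{(A\ltens A)}$ into $\isodd A$: it presupposes that the two contracted copies have already been gathered under a single $\ltens$. A shallow $\crule$ keeps the two copies as separate members of the sequent, and when the rule below it \emph{absorbs} the contracted formula --- typically $\rlimpr$, which moves the single $\isodd A$ into $\iseven{(A\limp B)}$ --- the permuted configuration would need to contract a loose $\isodd A$ in the sequent against the antecedent of $\iseven{(A\limp B)}$, which is not an instance of any rule of $\DOWN{\PIp}$. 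The paper avoids this by a preliminary step you are missing: every $\crule$ is first replaced by an $\lltensr$ (which tensors the two copies into $\isodd{(A\ltens A)}$, a single formula that then travels through the derivation as one object, with $\lltensr$ itself belonging to the linear system $\PIMLLX$) followed by a $\deep\crule$; the deep contraction, acting inside a fixed subformula occurrence, then commutes freely below everything. Without this replacement your downward permutation of $\crule$ does not land in $\DOWN{\PIp}$, so this step needs to be added before the rest of your argument goes through.
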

\begin{proof}
	Let us consider an $\PX$-derivation of $H$.
	We replace every occurrence of a $\crule$ by a $\lltensr$ followed by a $\deep \crule$.
	We then permute every occurrence of $\worule$ and rules in $\DOWN{\PIp}$ by applying independent rule permutations 
	plus 
	the permutations in \Cref{fig:permutations} replacing a $\worule$ into a  $\deep\wtrule$, $\deep \wirule$ or $\deep \wdrule$.
	Observe since $H$ is a \pformula, then no $\worule$-rule occurs in the derivation at the end of this procedure.
	We conclude by permuting all deep-weakening and deep-contraction rules down in the derivation.
	
	During this process, depending on the presence of $\kurule$ or $\drule$ in $\CX$, an occurrence of a $\kdrule$ may be replaced in two different ways: either by $\kurule$ followed by a $\deep \wdrule$ 
	or
	by a $\drule$ followed by a $\worule$ 
	as shown below:
	{\small{
			$$
			\vlderivation{
				\vlin{}{\kdrule}{\isodd{\ldia A} , \isodd {\lbox B} , \iseven{ \ldia C}}{
					\vlin{}{\worule}{\isodd A, \isodd B, \iseven C}
					{\vlpr{}{}{\isodd B, \iseven  C}}
			}}
			\normto
			\quad
			\vlderivation{
				\vlin{}{\deep\wdrule}{\isodd{\ldia A }, \isodd{\lbox B}, \iseven {\ldia C}}{
					\vlin{}{\kurule}{\isodd{\ldia \unit}, \isodd{\lbox B},\iseven{\ldia C}}
					{\vlpr{}{}{\isodd B, \iseven C}}
			}}
			\;\mbox{ or }\;
			\vlderivation{
				\vlin{}{\worule}{\ldia \isodd A , \lbox \isodd B , \ldia \iseven C}{
					\vlin{}{\drule}{\lbox \isodd B, \ldia \iseven C}
					{\vlpr{}{}{\isodd B, \iseven C}}
			}}
			$$
	}}
	To prove the converse it suffice to revert the previous procedure. 
\end{proof}

	If $F$ is a formula, we call a \emph{factorised proof} of $F$ a derivation in 
	$\PIMLLX\cup\DOWN{\PIp}$ of the form $ \provevia{\PIMLLX} H' \provevia{\DOWN{\PIp}} H$,
	for $H$ and $H'$ \pformulas such that $F=\spalletto H$.

\section{Modal arenas}\label{sec:arenas}

In this section we establish a correspondence between modal formulas and a family of labeled directed graphs we call \emph{modal arenas}.
These are employed in this paper in the definition of intuitionistic combinatorial proofs and games.

A \emph{directed graph} $\gG =\tuple {\vertices[\gG], \diredge[\gG]}$ is given by a set of vertices $\vertices[\gG]$ and a set of direct edges $\diredge[\gG]\subseteq \vertices[\gG]\times \vertices[\gG]$. 
If $V' \subset \vertices[\gG]$, we say that $\tuple{V' , \diredge\cap(V' \times V'  )}$ is the \emph{subgraph of $\gG$ induced by $V'$}.
We write $u\diredge[\gG] v$, $u\not \!\!\!\diredge[\gG] v$, $u\codiredge[\gG] v$ and $u\sdiredge[\gG] v$ if respectively $uv\in \diredge[\gG]$, $uv\not\in \diredge[\gG]$, $vu\in \diredge[\gG]$ and $uv\in\diredge[\gG]\cup\codiredge[\gG]$.
A vertex $v$ is a \emph{$\diredge[\gG]$-root}, denoted $\isdirroot v$ 
if there is no vertex $w$ such that $v\diredge[\gG] w$.
We call  $\dirrof[\gG]=\set{v\mid \isdirroot v}$ the set of $\diredge[\gG]$-roots of $\gG$.

A \emph{path} from $v$ to $w$ of length $n$ is a sequence  of vertices $x_0\dots x_n$ such that $v=x_0$, $w=x_n$ and  $x_i\diredge[\gG] x_{i+1}$ for $i\in \set{0, \dots, n-1}$. We write $v\diredge[\gG]^{*}w$ ($v\diredge[\gG]^{n}w$) if there is a path (respectively a path of length $n$) from $v$ to $w$.
A \emph{directed acyclic graph} (or \dag for short) is a direct graph such that $v\diredge[\gG]^{n}v$ implies $n=0$ for all $v\in \vertices$.

A \emph{two-color directed acyclic graph}
(or \ddag for short) $\gG=\tuple {\vertices[\gG], \iedge[\gG],\medge[\gG]}$ is given by a set of vertices $\vertices[\gG]$ and two disjoint sets of edges $\iedge[\gG]$ and $\medge[\gG]$ such that the graph $\tuple {\vertices[\gG],\iedge[\gG]\cup \medge[\gG]}$ is acyclic.
We denote $\diedge[\gG]=\iedge[\gG]\cup \iiedge[\gG]$ and  $\dmedge[\gG]=\medge[\gG]\cup\imedge[\gG]$.
We omit the superscript when clear from context.

If $\labelset$ is a set, a \ddag is \emph{$\labelset$-labeled} if a \emph{label}  $\lab v\in \labelset$ is associated to each vertex $v\in V$.
In this paper we fix the set of labels to be the set $\labelset =\atoms \cup \set{\lbox, \ldia}$, where $\atoms$ is the set of propositional variables occurring in formulas.

\newcommand{\RtoR}[2]{R^#1_#2}

\begin{definition}
	Let $\gG$ and $\gH$ be \ddags, we denote by $\RtoR{\gG}{\gH}$ the set of edges from the $\iedge$-roots of $\gG$ to the $\iedge$-roots of $\gH$, that is $\RtoR{\gG}{\gH} = \set{(u,v)\mid u\in \irof[\gG], v\in \irof[\gH]}$.
	
	We define the following operations on \ddags:
	$$
	\begin{array}{r@{=}c@{\;}l@{\:,\:}l@{\:,\:}l@{\;}r}
		\gG \gsum \gH 
		& 
		\tuple{
			&
			\vertices[\gG]\cup \vertices[\gH]
			& 
			\iedge[\gG]\cup \iedge[\gH]
			& 
			\medge[\gG]\cup \medge[\gH]
			&
		}
		\\
		\gG \gimp \gH
		&
		\tuple{
			&
			\vertices[\gG]\cup \vertices[\gH]
			& 
			\iedge[\gG]\cup \iedge[\gH] \cup
			\RtoR\gG\gH
			&
			\medge[\gG]\cup \medge[\gH]
			&
		}
		\\
		\gG \gmod \gH
		&
		\tuple{
			&
			\vertices[\gG]\cup \vertices[\gH]
			&
			\iedge[\gG]\cup \iedge[\gH]
			&
			\medge[\gG]\cup \medge[\gH]\cup
			\RtoR\gG\gH
			&
		}
	\end{array}
	$$
	which can be pictured as follows, with $\rootsym$ representing the \hbox{$\iedge$-roots} of each graph.
	$$
	\begin{array}{c@{\;}|@{\quad}c@{\;}|@{\quad}c}
		\gG\gsum\gH &\gG\gimp\gH &\gG\gmod\gH \\
		\hline
		\begin{tikzpicture}
			\node (A) at (0,0) [trapezium, trapezium angle=80, minimum width=30pt, draw, thick,rotate=-90] {\rotatebox{90}{$\gG\,$}};
			\node at (0.55,0) {$\begin{array}{c}\vrroot1 \\[5pt] \vrroot2\end{array}$};
			\node (A) at (1,-1) [trapezium, trapezium angle=80, minimum width=30pt, draw, thick,rotate=-90] {\rotatebox{90}{$\gH$}};
			\node at (1.59,-1) {$\begin{array}{c}\vlroot3 \\[4.5pt]  \vlroot4\end{array}$};
		\end{tikzpicture}
		&
		\begin{tikzpicture}
			\node (A) at (0,0) [trapezium, trapezium angle=80, minimum width=30pt, draw, thick,rotate=-90] {\rotatebox{90}{$\gG\,$}};
			\node at (0.55,0) {$\begin{array}{c}\vrroot1 \\[5pt] \vrroot2\end{array}$};
			\node (A) at (1,-1) [trapezium, trapezium angle=80, minimum width=30pt, draw, thick,rotate=-90] {\rotatebox{90}{$\gH$}};
			\node at (1.59,-1) {$\begin{array}{c}\vlroot3 \\[4.5pt]  \vlroot4\end{array}$};
		\end{tikzpicture}
		\multidedges{m1,m2}{m4,m3}
		&
		\begin{tikzpicture}
			\node (A) at (0,0) [trapezium, trapezium angle=80, minimum width=30pt, draw, thick,rotate=-90] {\rotatebox{90}{$\gG\,$}};
			\node at (0.55,0) {$\begin{array}{c}\vrroot1 \\[5pt] \vrroot2\end{array}$};
			\node (A) at (1,-1) [trapezium, trapezium angle=80, minimum width=30pt, draw, thick,rotate=-90] {\rotatebox{90}{$\gH$}};
			\node at (1.59,-1) {$\begin{array}{c}\vlroot3 \\[4.5pt]  \vlroot4\end{array}$};
		\end{tikzpicture}
		\multimodedges{m1,m2}{m4,m3}
	\end{array}
	$$
\end{definition}

We use the notation $\singlevertex a$, $\ssbox$ and $\ssdia$ for the graph consisting of a single vertex labeled respectively by $a$, $\lbox$ and $\ldia$.
If $F$ is a formula, we define a $\labelset$-labeled \ddag $\arof F$ inductively as follows:
\begin{equation}
	\label{eq:translation}
	\begin{array}{c@{\;\;=\;\;}l}
		\arof a & \singlevertex[a]\\
		\arof{A \imp B}& \arof A \gimp \arof B\\
		\arof{A\land B} & \arof A\gsum \arof B 
	\end{array}
	\hskip1.5em
	\begin{array}{c@{\;\;=\;\;}l}
		\arof{\ldia \unit} & \ssdia \\
		\arof{\lbox A}& \, {\ssbox}\;  \gmod \arof A\\
		\arof{\ldia A} & {\ssdia}\; \gmod \arof A
	\end{array}
\end{equation}
Using the same notation, if $H$ is a \pformula and $F$ the formula such that $F=\spalletto H$, we denote by $\arof H$ the \ddag $\arof{F}$.

In order to characterize those \ddags that are encoding of formulas, we require some additional definitions.

\begin{definition}\label{def:pa}
	
	A $\labelset$-labeled \dag $\gG=\tuple {\vertices[\gG], \iedge[\gG]}$ is a \emph{arena} if 
	$\vertices[\gG]\neq \emptyset$ 
	and if it is
	\begin{itemize}
		\item L-free: if $a \iedge u$ and $a \iedge w \iedge v$ then $u\iedge v$;
		\item $\Sigma$-free: if $a\iedge v$, $a \iedge w$, $b\iedge w$ and $b\iedge u$ then $a\iedge u$ or $b\iedge v$;
	\end{itemize}
	That is, the following induced subgraphs are forbidden.
	\begin{center}
		\begin{tabular}{c@{\qquad}|@{\qquad}c}
			L-free &
			$\Sigma$-free 
			\\
			{$
				\begin{array}{c@{\;\;}c@{\;\;}c}
					&\vw1\\
					\va1&&\vv1\\
					&\vu1 
				\end{array}
				\dedges{w1/v1,a1/w1,a1/u1}
				$}
			&
			{$
				\begin{array}{c@{\;\;}c@{\;\;}c}
					\va1 & \vv1
					\\
					&\vw1
					\\
					\vb1 & \vu1
				\end{array}
				\dedges{a1/v1,a1/w1,b1/w1,b1/u1}
				$}
		\end{tabular}
	\end{center}
\end{definition}

We recall some results from~\cite{ICP} on arenas and modality-free formulas.

\begin{lemma}[\cite{ICP}]\label{lemma:cones}
	In an arena, if $v\iedge^n y$ and $w\iedge^m y$, then
	$\icone v n \subseteq \icone w m$ or $\icone w m \subseteq \icone v n$ 
	where $\icone u k=\set{x\mid u\iedge^k x}$.
\end{lemma}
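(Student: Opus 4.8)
The plan is to bring each cone into a normal form --- a one-step out-neighbourhood --- and then compare the two neighbourhoods, letting the two forbidden-pattern conditions of an arena play distinct roles: L-freeness drives the normalisation, and $\Sigma$-freeness yields the nesting. First dispose of the degenerate cases. If $n=0$ then $v\iedge^0 y$ forces $v=y$, so $\icone v 0 = \set v = \set y \subseteq \icone w m$ because $w\iedge^m y$ gives $y\in\icone w m$; the case $m=0$ is symmetric. So from now on assume $n,m\geq 1$.

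The key normalisation step uses L-freeness to show that all one-step successors of a vertex share the same $k$-cone for every $k\geq 1$. Indeed, if $a\iedge u$ and $a\iedge w$, then applying L-freeness to any edge $w\iedge z$ gives $u\iedge z$, and symmetrically, so $\icone u 1 = \icone w 1$; and since $\icone u 1 = \icone w 1$ forces $\icone u k = \icone w k$ for all $k\geq 1$ (both sets equal $\bigcup_{z\in\icone u 1}\icone z {k-1}$), the claim follows. Now fix a path $v=v_0\iedge v_1\iedge\cdots\iedge v_n=y$. Peeling edges off the front and applying this collapse at each step (the intermediate exponent staying $\geq 1$) yields $\icone v n = \icone{v_1}{n-1} = \cdots = \icone{v_{n-1}}{1}$, that is $\icone v n = \icone a 1$ for $a:=v_{n-1}$, which satisfies $a\iedge y$. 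Symmetrically $\icone w m = \icone b 1$ for some $b$ with $b\iedge y$.

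It then remains to compare $\icone a 1$ and $\icone b 1$, where $a$ and $b$ share the successor $y$. For arbitrary $p\in\icone a 1$ and $q\in\icone b 1$, I instantiate the $\Sigma$-free condition with $y$ in the role of the shared vertex (taking $a\iedge p$, $a\iedge y$, $b\iedge y$, $b\iedge q$); its conclusion $a\iedge q$ or $b\iedge p$ reads $q\in\icone a 1$ or $p\in\icone b 1$. Were the two sets incomparable, one could choose $p\in\icone a 1\setminus\icone b 1$ and $q\in\icone b 1\setminus\icone a 1$ and contradict this disjunction; hence $\icone a 1\subseteq\icone b 1$ or $\icone b 1\subseteq\icone a 1$, which is exactly the desired nesting of $\icone v n$ and $\icone w m$.

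I expect the normalisation of the second paragraph to be the delicate point: one must check that $\icone v n$ is genuinely independent of the chosen path to $y$ and really does collapse to the out-neighbourhood of a penultimate vertex, keeping track of the exponent bound $k\geq 1$ throughout so that the sub-lemma applies at each peeling step. Once this normal form for cones is in hand, the remaining argument is a one-line instantiation of $\Sigma$-freeness.
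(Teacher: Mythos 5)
The paper states this lemma as a recalled result from \cite{ICP} and gives no proof of its own, so there is nothing in-paper to compare your argument against; I therefore checked it on its own terms, and it is correct and self-contained. The two degenerate cases $n=0$, $m=0$ are handled properly (a length-$0$ path forces $v=y$, and $y\in\icone w m$). The normalisation sub-lemma is sound: L-freeness applied to $a\iedge u$ and $a\iedge w\iedge z$ gives $u\iedge z$, hence $\icone u 1=\icone w 1$ by symmetry, and since $\icone u k=\bigcup_{z\in\icone u1}\icone z{k-1}$ the equality propagates to all $k\ge1$; the peeling $\icone v n=\icone{v_1}{n-1}=\cdots=\icone{v_{n-1}}1$ only ever invokes the sub-lemma with exponent $\ge 1$, exactly as you flag, so the collapse to the out-neighbourhood of the penultimate vertex $a=v_{n-1}$ is legitimate (and path-independence is not an issue, since $\icone v n$ is defined set-theoretically and you only need one witnessing path). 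The final step is a clean instantiation of $\Sigma$-freeness with $y$ as the shared successor of $a$ and $b$, and the contrapositive argument from incomparability is valid. The division of labour you describe --- L-freeness for reduction to one-step cones, $\Sigma$-freeness for the dichotomy --- is exactly the structural content of the lemma.
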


\begin{theorem}[\cite{ICP}]\label{thm:prearena}
	A graph $\gG$ is an arena iff there is a modality-free formula $F$ such that $\gG=\arof{F}$.
\end{theorem}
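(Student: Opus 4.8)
The plan is to prove the two implications separately: that $\arof F$ is always an arena (soundness) by structural induction on the modality-free formula $F$, and that every arena has the form $\arof F$ (completeness) by induction on the number of vertices. For soundness, note first that a modality-free $F$ is built from atoms using only $\land$ and $\imp$, so $\arof F$ arises from single atom-labelled vertices by the operations $\gsum$ and $\gimp$ and has no $\medge$-edges. The base case $F=a$ is a single vertex, which trivially satisfies both conditions of \Cref{def:pa}. For $F=A\land B$ we have $\arof F=\arof A\gsum\arof B$, a disjoint union that adds no edges; since each forbidden configuration of \Cref{def:pa} is connected (its vertices all lie on edges sharing the apex $a$, resp.\ meeting at the common target $w$), any such configuration would live entirely inside $\arof A$ or inside $\arof B$ and is excluded by the induction hypothesis. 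For $F=A\imp B$ we have $\arof F=\arof A\gimp\arof B$, whose only new edges run from the $\iedge$-roots of $\arof A$ to the $\iedge$-roots of $\arof B$. I would rule out both patterns by a case analysis on the component of each vertex, using that an $\iedge$-root of $\arof A$ has no outgoing edge inside $\arof A$ and in $\arof F$ points to precisely the $\iedge$-roots of $\arof B$. The decisive step is that whenever a pattern employs a new edge, the inductive arena conditions on the component containing its tail force the remaining endpoints there to be $\iedge$-roots as well, so that the edge demanded by the condition is itself a new edge.

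For completeness, let $\gG$ be an arena; I would induct on $\sizeof{\vertices[\gG]}$ (the relevant arenas carry only atom labels, as they must if they are to equal $\arof F$ for modality-free $F$). A single vertex labelled $a$ is $\arof a$. If $\gG$ is disconnected, split its connected components into two nonempty blocks $V_1,V_2$; no edge crosses, so $\gG=\restr\gG{V_1}\gsum\restr\gG{V_2}$, each restriction is again an arena (induced subgraphs inherit both conditions), and the induction hypothesis gives $\gG=\arof{F_1\land F_2}$. The real work is the connected case, where I must produce a bipartition $\vertices[\gG]=V_1\sqcup V_2$ with $\gG=\restr\gG{V_1}\gimp\restr\gG{V_2}$; that is, all edges between the blocks run from $V_1$ to $V_2$ and these crossing edges form exactly the complete bipartite graph $\irof[{\restr\gG{V_1}}]\times\irof[{\restr\gG{V_2}}]$. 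I would take $R=\irof[\gG]$ to be the global roots, let $S$ be the vertices whose out-neighbourhood is all of $R$, and let $V_1$ consist of the vertices reaching $S$ along paths avoiding $R$, with $V_2$ the complement; one checks $S\cap R=\emptyset$, hence $R\subseteq V_2$, so $V_2$ is the consequent block.

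Two facts drive the verification, and this is where the obstacle lies. First, one-directionality: if $x\iedge r$ for a global root $r$ while $x\iedge y$ with $y\notin R$ (so $y\iedge z$ for some $z$), then L-freeness yields $r\iedge z$, contradicting that $r$ is a root; hence any vertex with an edge into $R$ has all its out-edges in $R$, which keeps crossing edges pointing from $V_1$ into $V_2$. Second, completeness of the bipartite graph: one must show that every $\iedge$-root of $\restr\gG{V_1}$ points to all of $R$, not merely some of it, and that no crossing edge lands on a non-root of $\restr\gG{V_2}$. This is exactly what $\Sigma$-freeness and the cone-nesting of \Cref{lemma:cones} deliver, since together they force the out-neighbourhoods of vertices sharing a common target to be nested, organising them into the uniform ``antecedent-to-consequent'' shape of an implication. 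Establishing this uniformity — that the chosen bipartition yields a genuine complete bipartite join rather than a ragged one — is the main difficulty; once it is in place, both $\restr\gG{V_1}$ and $\restr\gG{V_2}$ are strictly smaller arenas and the induction hypothesis finishes the proof with $\gG=\arof{F_1\imp F_2}$.
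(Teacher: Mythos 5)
The paper itself gives no proof of this statement---it is imported from \cite{ICP}---so the only in-house yardstick is the proof of its modal generalisation, \Cref{thm:arToFor}. Your left-to-right direction is correct and standard: the forbidden configurations are connected, so $\gsum$ is harmless, and for $\gimp$ the decisive observation you make (L-freeness forces any co-target of a vertex that already points into the new root-to-root edges to be an $\iedge$-root itself, so the edge required by L- or $\Sigma$-freeness is again a new edge) is exactly what closes the case analysis.

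For the converse you take a genuinely different decomposition from the one sketched in \Cref{thm:arToFor}. There the consequent block is $\irof[\gG]$ itself, justified by the dichotomy ``every non-root reaches every root, or the roots split''; your consequent block is the complement of the set of vertices reaching $S$ (those pointing to \emph{all} of $\irof[\gG]$) along root-avoiding paths. Your choice is finer and handles, e.g., $\arof{a\imp(c\land(b\imp(d\land e)))}$, where $b$ points to only two of the three roots: here $b$ must land in the consequent, neither branch of the dichotomy of \Cref{thm:arToFor} applies as literally stated, and your recipe nevertheless returns the right bipartition. That is what your approach buys. The genuine gap is the step you yourself defer: you must actually prove (i) that $S\neq\emptyset$ whenever $\gG$ is connected with at least two vertices, and (ii) that the crossing edges are exactly $\irof[{\restr\gG{V_1}}]\times\irof[{\restr\gG{V_2}}]$, with $\irof[{\restr\gG{V_2}}]=\irof[\gG]$. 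Neither follows by merely citing $\Sigma$-freeness and \Cref{lemma:cones}: for (i) one needs that the out-neighbourhoods of the depth-one vertices form a family in which any two intersecting members are nested (\Cref{lemma:cones} with $n=m=1$), that connectivity makes this family have a unique maximal member, and then an L-freeness argument to show that member is all of $\irof[\gG]$; for (ii) one needs L-freeness again to exclude a vertex of $V_1$ having out-edges both inside $V_1$ and into $V_2$, plus the depth argument showing every non-root of $V_2$ keeps an out-edge inside $V_2$. These verifications do go through, but they are the mathematical content of the theorem, and ``this is exactly what $\Sigma$-freeness and cone-nesting deliver'' is a promissory note where the proof has to be.
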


\begin{definition}\label{def:ma}
  	A \emph{modal arena} (or \ma) $\gG = \tuple {\vertices[\gG], \iedge[\gG],\medge[\gG]}$ is an $\labelset$-labeled \ddag such that

  \begin{itemize}

  	\item  $\tuple {\vertices, \iedge}$ is an arena;
  	
  	\item $\medge $ is modal, that is:
  	\begin{itemize}
  		\item		if $v\medge w$ and $w\medge u$, then $v\medge u$ (transitivity);
  		\item		if $v\medge w$ and $u\medge w$, then $u \dmedge v$;
  		\item		if $v\medge w$ and $v\medge u$, then $u \niedge w$;	
  		\item		if $v\medge w$ and $u\iedge v$, then $u\iedge w$;
  		\item		if $v\medge w$ and $v\iedge u$,  then $w\iedge u$;
  		\item		if $v\medge w$ and $w\iedge u$,  then $v\iedge u$;
  	\end{itemize}

  	\item $\gG$ is \emph{properly labeled}:
  	\begin{itemize}
  		\item   if $v\medge w$, then $\lab v\in \set{\lbox, \ldia}$;
  		\item	if $\lab v=\lbox$, then there is a $w$ such that $v\medge w$.
  	\end{itemize}
  \end{itemize}

  We denote by $\avertices[\gG]$, $\bvertices[\gG]$ and $\dvertices[\gG]$ the subsets of vertices of $\gG$ with labels respectively in $\atoms$, $\set\lbox$ and $\set\ldia$.
  We call \emph{atomic} the vertices in $\avertices[\gG]$ and \emph{modal} the ones in $\mvertices[\gG]=\bvertices[\gG] \cup \dvertices[\gG]$.
\end{definition}

From Lemma~\ref{lemma:cones} we can prove the following:
\begin{lemma}\label{lemma:modalities}
	Let $\gG$ be a \ma and $u,v,w\in \vertices[\gG]$. If $v\medge w$ then:
	\begin{itemize}
		\item 	$v$ is a $\iedge$-root  iff $w$ is a $\iedge$-root;
		\item	$v\iedge^n u$ iff $w\iedge^n u$;
		\item 	if $u\iedge^n v$ then $u\iedge^n w$.
	\end{itemize}
\end{lemma}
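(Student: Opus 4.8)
The plan is to read all three items directly off the three interaction axioms between $\medge$ and $\iedge$ in \Cref{def:ma}, with the multi-step cases reduced to a single edge-swap rather than a genuine path induction. Fix $v \medge w$ throughout.

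The key preliminary fact is that $v$ and $w$ have exactly the same immediate $\iedge$-successors. Indeed, the axiom ``if $v\medge w$ and $v\iedge u$ then $w\iedge u$'' gives $v\iedge u \Rightarrow w\iedge u$, and its companion ``if $v\medge w$ and $w\iedge u$ then $v\iedge u$'' gives the converse; hence $v\iedge u \iff w\iedge u$ for every $u$. The first item falls out immediately, since $v$ fails to be an $\iedge$-root exactly when it has some $\iedge$-successor, and by the equivalence this happens exactly when $w$ does.

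For the second item — which I read for $n\ge 1$, the case $n=0$ degenerating to the false $v=u \iff w=u$ — I would swap the \emph{first} edge of the path. If $v = x_0\iedge x_1 \iedge \cdots \iedge x_n = u$, then $v\iedge x_1$ yields $w\iedge x_1$ by the equivalence above, and keeping $x_1\iedge\cdots\iedge x_n$ unchanged gives $w\iedge^n u$; the converse swaps that first edge back. For the third item I would instead swap the \emph{last} edge, invoking the remaining axiom ``if $v\medge w$ and $u\iedge v$ then $u\iedge w$'': writing $u = x_0 \iedge \cdots \iedge x_{n-1} \iedge v$, the final edge $x_{n-1}\iedge v$ becomes $x_{n-1}\iedge w$, so $u\iedge^n w$.

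The main thing to get right is the bookkeeping: the second item is controlled by the two \emph{outgoing} axioms applied at the source end of the path, whereas the third is controlled by the single \emph{incoming} axiom applied at the target end; one must also flag the $n=0$ exception. In this sense the lemma is essentially a direct unfolding of \Cref{def:ma}, and \Cref{lemma:cones} enters only as the guarantee that the ambient arena is coherent enough for these local conditions to cohere — equivalently, one may phrase the second item as the cone identity $\icone v n = \icone w n$, obtained by unfolding the base equality $\icone v 1 = \icone w 1$. I do not expect any serious obstacle here.
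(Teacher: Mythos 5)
Your proof is correct, and for the second and third items it is in fact more direct than the paper's. The paper handles the first item exactly as you do, from the two axioms of \Cref{def:ma} giving $v\iedge u\iff w\iedge u$; but for the multi-step items it appeals to an induction on $n$ together with \Cref{lemma:cones}, essentially propagating the base equality $\icone v1=\icone w1$ up to $\icone vn=\icone wn$ level by level. Your observation that $\iedge^n$ is a bare existence-of-a-path statement lets you bypass both the induction and the cone lemma: a single edge replacement at the source end (item two) or at the target end (item three) of a witnessing path already does the job, and acyclicity of $\iedge\cup\medge$ guarantees the modified sequence is still a legitimate path. What the paper's phrasing buys is the set-level identity of $n$-cones, which is the currency of \Cref{lemma:cones} and is reused elsewhere in the development --- but, as you note, your argument yields that identity too by unfolding the length-one case along the path. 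Your flagging of the degenerate $n=0$ reading of item two is also right, and it applies equally to item three, since $u\iedge^0 v$ forces $u=v\neq w$; the lemma is implicitly stated for $n\geq 1$, which is how it is used. No gap here.
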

\begin{proof}
	The first statement follows the fact that in a \ma if $v\medge w$, then $v\iedge u$ iff $w\iedge u$.
	The second statement is proven using the same argument, proceeding by induction on $n$ making use of  Lemma~\ref{lemma:cones}.
	The third statement is also proven using Lemma~\ref{lemma:cones} and the fact that in a \ma if $v\medge w$ and $u\iedge v$, then $u\iedge w$.
\end{proof}

\begin{lemma}\label{thm:forToAr}
	If  $F$ is a formula, then the $\labelset$-labeled \ddag $\arof F$ is an \ma.
\end{lemma}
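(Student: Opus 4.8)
The plan is to argue by structural induction on $F$, following the clauses of the translation in \eqref{eq:translation}. The two base cases $\arof{a}=\singlevertex[a]$ and $\arof{\ldia\unit}=\ssdia$ are single vertices with no edges, so all the conditions of \Cref{def:ma} hold vacuously; note that the single vertex of $\ssdia$ is labeled $\ldia$, so the clause forcing a $\medge$-successor, which concerns only $\lbox$-labeled vertices, is not triggered. For the inductive step I assume that $\arof{A}$ and $\arof{B}$ are \mas and treat separately the three operations $\gsum$ (for $\land$), $\gimp$ (for $\imp$) and $\gmod$ (for $\lbox$ and $\ldia$), checking in each case that acyclicity, the arena condition, the modal conditions on $\medge$, and proper labeling are preserved. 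Note that in the translation $\gmod$ is only ever applied in the form $\ssbox\gmod\arof{A}$ or $\ssdia\gmod\arof{A}$, so its left operand is a single modal vertex $m$.

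First I would dispose of the arena condition (the first item of \Cref{def:ma}). All three operations build the relation $\iedge$ exactly as in the modality-free setting: $\gsum$ and $\gmod$ keep it a disjoint union, while $\gimp$ adds the root-to-root edges $\RtoR{\arof A}{\arof B}$. Since L-freeness and $\Sigma$-freeness refer only to $\iedge$, the verification is literally the one behind \Cref{thm:prearena}: for $\gsum$ and $\gimp$ it is identical to the modality-free case, and for $\gmod$ we merely adjoin an $\iedge$-isolated modal vertex $m$, which cannot occur in any forbidden pattern. Acyclicity is immediate since every edge added by $\gimp$ or $\gmod$ points from (a root of) the left component into (a root of) the right one, with no edge in the reverse direction, and the new edges lie in the relation left disjoint by the construction.

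The heart of the argument is the preservation of the modal conditions and of proper labeling. For $\gsum$ both $\iedge$ and $\medge$ are disjoint unions, so every instance of a modal condition, and every witness required by proper labeling, lives entirely inside one component and is provided by the induction hypothesis. For $\gimp$ the relation $\medge$ is unchanged, so the three purely-$\medge$ conditions and proper labeling again reduce to the components; the delicate points are the three mixed conditions (fourth to sixth of \Cref{def:ma}), where the $\iedge$-edge involved may be one of the new edges of $\RtoR{\arof A}{\arof B}$. In that situation the modal endpoint and its $\medge$-neighbour are both $\iedge$-roots of the same component by the first item of \Cref{lemma:modalities}, so the required new cross-component $\iedge$-edge indeed exists. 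For $\gmod$ the new edges are the $\medge$-edges from $m$ to the $\iedge$-roots of $\arof{A}$; here I verify each of the six modal conditions by a case analysis on which edges are new. Transitivity and the $\dmedge$-comparability condition use the first item of \Cref{lemma:modalities} to transport root-ness along $\medge$ (e.g.\ if $m\medge w$ and $w\medge u$ in $\arof A$ then $u$ is again an $\iedge$-root, so $m\medge u$); the third condition holds because an $\iedge$-root has no outgoing $\iedge$-edge, whence $u\niedge w$; and the conditions mixing $\medge$ with $\iedge$ reduce to $\arof A$ since no $\iedge$-edge touches $m$. Proper labeling for $\gmod$ holds because $\lab m\in\set{\lbox,\ldia}$, and when $\lab m=\lbox$ the arena $\arof A$ is nonempty and hence has at least one $\iedge$-root $r$, giving the witness $m\medge r$.

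I expect the main obstacle to be exactly these interactions between the freshly added cross-component edges and the modal relation: the mixed conditions for $\gimp$ and the full list of modal conditions for $\gmod$. In all of them the recurring device is \Cref{lemma:modalities}, used to pass between a modal vertex and its $\medge$-neighbours while preserving both root-ness and $\iedge$-reachability. A single degenerate point deserves care: in the second modal condition for $\gmod$, the sub-case where both $\medge$-edges into a vertex are new forces their sources to coincide (both equal to $m$), so the claimed $\dmedge$-comparability is trivial.
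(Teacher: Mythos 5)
Your proof is correct and follows essentially the same route as the paper's: structural induction on $F$, checking that each of the operations $\gsum$, $\gimp$ and $\gmod$ (the latter always applied with a single modal vertex on the left) preserves the modal-arena conditions. The paper's own proof merely asserts that these operations introduce no forbidden configurations, whereas you carry out the case analysis in full (including the use of Lemma~\ref{lemma:modalities} to transport root-ness along $\medge$), so your write-up is if anything more detailed than the published one.
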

\begin{proof}
	The right-to-left implication is proven by induction over the number of connectives and modalities of a formula.
	It suffices to remark that the graph operations $\gsum$ and $\gimp$ cannot introduce forbidden {\ma} configurations.
	Similarly, the operation $\gmod$ introduces no forbidden configurations whenever $\gG=\gG_1\gmod \gG_2$ with $\gG_1$ a single vertex graph of the form $\ssbox$ or~$\ssdia$.
\end{proof}

For proving the converse, we need the following concept.
If $v\in \mvertices$ is a vertex in a \ma,  then we define the \emph{$\medge$-cone of $v$} as the set of vertices 
$$\mconeof v = \set{w\mid  \mbox{there is } u \mbox{ such that } v\medge u \mbox{, } w\iedge^{*}u \mbox{ and } 	w\niedge^{*}v}$$
Intuitively, the cone of a modal vertex delimits the subformula in the scope of the corresponding modality.

\begin{example}\label{ex:cones}
	Consider the formula $F=\big(a\imp\lbox (b\land (c\imp \ldia d))\big)\imp \ldia (e\imp f)$ and its \ma
	$$
	\arof{F}
	=
	\begin{array}{ccccccc}
		\va1		&			&\vlbox1&		&& \vldia2'
		\\[1em]
		&		&&&\vldia1	& 	&\vf1
		\\[1em]
		& \vb1&	  &	\vc1	& \vd1	&\ve1	
	\end{array}
	\modedges{box1/dia1,box1/d1,box1/b1,dia2/f1,dia1/d1}
	\multidedges{c1}{d1,dia1}
	\multidedges{a1}{b1,dia1,d1,box1}
	\multidedges{b1,dia1,d1,box1}{dia2,f1}
	\dedges{e1/f1}
	$$
	The $\lbox$ modality has subformula  $b\land (c\imp \ldia d)$, the first $\ldia$ has subformula $d$ and the second $\ldia $ (denoted $\ldia'$ on the graph) has subformula $e\imp f$.
	The corresponding $\medge$-cones are $\mconeof{\lbox}=\set{b,c,\ldia, d}$, $\mconeof{\ldia}=\set d$ and $\mconeof{\ldia'}=\set{e,f}$.
\end{example}

If $v$ is a vertex of a \ma $\gG$, we call the \emph{principal modal vertex of $v$} the unique\footnote{Its uniqueness follows by definition of \ma.} vertex $\pmv v$ such that  $v \in \mconeof{\pmv v}$ and for all $m\neq \pmv v $ such that $v\in \mconeof m$, then $\pmv v \in \mconeof m$.
We write $v=\pmv v$ if there is no $m$ such that $m\medge v$.
To have an intuition consider a formula $F$, the \ma $\gG=\arof F$ and the formula tree $\ftree F$.
If $\pmv v\neq v$, then the vertex $\pmv v$ corresponds to the root of the smaller subtree of $F$ with root labeled by a modality which contains the node corresponding to $v$. If $\pmv v = v$, then such a node does not exist.
By means of example, in Example~\ref{ex:cones} we have $\pmv a=a$, $\pmv\ldia=\pmv b=\pmv c=\lbox$,  $\pmv d=\ldia$ and $\pmv e=\pmv f=\ldia'$, $\pmv\lbox=\lbox$ and $\pmv \ldia'=\ldia'$.

\begin{theorem}\label{thm:arToFor}
	Let $\gG$ be a  $\labelset$-labeled \ddag.
	If $\gG$ is a \ma, then  there is a formula $F$ such that $\gG=\arof F$.
\end{theorem}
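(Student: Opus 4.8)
The plan is to prove the statement by induction on the number of vertices $|\vertices[\gG]|$, mirroring the grammar of formulas and the operations $\gsum$, $\gimp$, $\gmod$ used to define $\arof{\cdot}$ in \Cref{eq:translation}; the claim is exactly the converse of \Cref{thm:forToAr}, and for the modality-free layer I intend to reuse \Cref{thm:prearena}. For the base case $|\vertices[\gG]|=1$ the single vertex $v$ cannot be labelled $\lbox$, since proper labelling would then require an $\medge$-successor; hence $\lab v=a$ gives $F=a$, and $\lab v=\ldia$ (necessarily with empty cone) gives $F=\ldia\unit$.

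For the inductive step I would first split on connectivity of the undirected graph underlying $\iedge\cup\medge$. If $\gG$ is disconnected, I take $\gG=\gG_1\gsum\gG_2$ where $\gG_1,\gG_2$ are the induced subgraphs on a partition of the connected components into two nonempty classes. Since every defining condition of \Cref{def:ma} (L- and $\Sigma$-freeness, the six modality clauses, and proper labelling) has hypotheses that only mention vertices joined by edges, each condition is inherited componentwise, so $\gG_1,\gG_2$ are again \mas and the induction hypothesis yields $F=F_1\land F_2$. Note that a top connective $\imp$, $\lbox$ or $\ldia$ always produces a connected graph, because $\gimp$ and $\gmod$ add edges from the $\iedge$-roots of the first component to those of the second; thus disconnectedness characterises conjunction exactly.

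If $\gG$ is connected, I read off the top connective from the set $T$ of \emph{top-level} vertices, i.e.\ those lying in no $\medge$-cone $\mconeof m$. When $|T|=1$, say $T=\set m$, every other vertex lies in $\mconeof m$, so $m$ is modal with $\mconeof m=\vertices[\gG]\setminus\set m$, and $m$ is an $\iedge$-root: an outgoing edge $m\iedge u$ would, via the clause ``$v\medge w$ and $v\iedge u$ imply $w\iedge u$'', force each $\medge$-successor of $m$ to $\iedge$-point to $u$, placing $u\neq m$ outside every cone and contradicting $|T|=1$. Hence $\gG=\ssbox\gmod\gG'$ or $\gG=\ssdia\gmod\gG'$ according to $\lab m$, with $\gG'=\restr{\gG}{\mconeof m}$; here I must verify that $\mconeof m$ is closed under the relevant reachability so that $\gG'$ is itself a \ma (in particular that a $\lbox$-vertex inside the cone keeps its $\medge$-successor inside the cone, by nesting of cones, and that the mixing clauses survive restriction), and the induction hypothesis then gives $A$ with $\gG'=\arof A$, so $F=\lbox A$ or $F=\ldia A$. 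When $|T|\ge 2$ the formula is an implication: I locate the consequent roots as $\irof[\gG]$ and the antecedent roots as the vertices $\iedge$-pointing to all of them, split the vertices as in the modality-free argument of \Cref{thm:prearena}, and obtain $\gG=\gG_1\gimp\gG_2$ and $F=F_1\imp F_2$.

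The main obstacle is this last, implication case. Beyond repeating the arena reconstruction of \Cref{thm:prearena} on the $\iedge$-skeleton, I must show the decomposition is compatible with the modal layer: no $\medge$-edge may cross the antecedent/consequent boundary, and each cone must fall entirely on one side. This is where \Cref{lemma:modalities} does the work, since $v\medge w$ forces $v$ and $w$ to be simultaneously $\iedge$-roots and to share their $\iedge$-reachability, so a modal edge cannot straddle a boundary defined purely in terms of $\iedge$-reachability. Combined with the cone-nesting used in the modality case, this guarantees that restricting $\medge$ to each side yields genuine \mas and that the only $\iedge$-edges between the two sides are exactly the $\RtoR{\gG_1}{\gG_2}$ introduced by $\gimp$, closing the induction.
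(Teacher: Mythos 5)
Your overall strategy is the same as the paper's: induction on $\sizeof{\vertices[\gG]}$, recovering the top connective of $F$ from the shape of $\gG$, reusing Lemma~\ref{lemma:cones}/Theorem~\ref{thm:prearena} for the $\iedge$-layer and Lemma~\ref{lemma:modalities} for compatibility of the $\medge$-layer with the decomposition. The base case, the disconnected ($\gsum$) case, and the connected case with a single top-level vertex ($\gmod$) are sound, modulo routine verifications you already flag (nesting of cones, restriction to $\mconeof m$ being a \ma).

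The genuine gap is in the exhaustiveness of your trichotomy, precisely at the step you identify as the main obstacle but then only partially address. You assert that a connected \ma with $\sizeof T\geq 2$ is an implication and immediately ``locate the antecedent roots as the vertices $\iedge$-pointing to all of $\irof[\gG]$''. But Lemma~\ref{lemma:cones} only gives a \emph{dichotomy}: either every non-root has $\iedge$-paths to all roots (the $\gimp$ case), or $\irof[\gG]$ splits as $R_1\uplus R_2$ with each vertex reaching roots on only one side. In the second branch no vertex points to all roots, so your antecedent is empty and the construction collapses; and that branch is \emph{not} excluded merely by connectivity plus $\sizeof T\geq 2$, because the two reachability classes could a priori be glued together by $\medge$-edges between roots (Lemma~\ref{lemma:modalities} guarantees $\medge$-related vertices are simultaneously $\iedge$-roots, so such edges live exactly on the boundary you are trying to rule out). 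What actually excludes this configuration is the \ma condition that $\medge$-predecessors of a common vertex are $\dmedge$-comparable: it chains all roots that are $\medge$-linked across the partition under a single $\medge$-maximal root $v$, whence either every other vertex falls into $\mconeof v$ (giving $\sizeof T=1$, your modality case) or the refined partition carries no edges at all (giving disconnectedness, your conjunction case). This is exactly sub-case (c) in the paper's proof of Theorem~\ref{thm:arToFor}; without it, or an equivalent argument, your claim that ``connected and $\sizeof T\geq 2$'' implies the $\gimp$-decomposition is unsupported.
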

\begin{proof}
	
	We proceed by induction on the size of $\gG$.
	If $\sizeof{\vertices[\gG]}=1$ then if $\lab v \in \atoms$, then $F=a\in \atoms$, if $\lab v=\ldia $ then $F=\ldia \unit$.
	Otherwise, since $\tuple{\vertices[\gG], \iedge[\gG]}$ is a arena, we conclude by Lemma~\ref{lemma:cones} (see \cite{ICP}) that
	\begin{enumerate}
		\item  \label{arena:imp} either 
		every vertex in $\vertices[\gG]\setminus\irof[\gG] $ has a $\iedge$-paths to all roots in $\irof[\gG]$,
		
		\item \label{arena:sum} or 
		$\irof[\gG]$ admits a partition $\irof[\gG]=R_1\uplus R_2$ such that any vertex in $\gG$ has $\iedge$-paths only to roots in one of the two sets.
	\end{enumerate}

	If  \ref{arena:imp} holds, then we define $\gG_2$  as the \ma obtained from $\gG$ taking the vertices in
	$$V_2=\irof[\gG]\cup \big(\bigcup_{v\in \irof[\gG]} \mconeof{v} \big)$$  
	and $\gG_1$ as the \ma over the remaining vertices $V_1=\vertices[\gG]\setminus V_2$.
	Since each vertex in $\gG$ has a path to all the roots in $\irof[\gG]$, 
	then there is a $\iedge$ from any root of $\gG_1$
	to any root of $\gG_2$.
	Since  by definition  $\irof[\gG_2]=\irof[\gG]$, then we have that $\gG=\gG_1\gimp \gG_2$.
	
	If \ref{arena:sum} holds and $\irof[\gG]=R_1\uplus R_2$ with $R_1$ and $R_2$ non-empty sets. 
	Since $\medge$ is modal, we have the following possibilities:
	\begin{enumerate}[(a)]
		\item \label{arIsForm:1}
		if $R_1= \set{v} $ and $v\medge w$ for all $w\in R_2$, then there is no $u$ such that $u\iedge v$. 
		Otherwise $u\iedge v$ and $u\iedge w$ for all $w$ such that $v\medge w$, that is for all $w\in R_2$. This implies that $u\medge w$ for all $w\in \irof[\gG]$, which contradicts the hypothesis \ref{arena:sum}.
		Thus we conclude that $\gG=v\gmod \gG'$ where $\gG'$ is the \ma with vertices $\mconeof{v}$;
		
		\item \label{arIsForm:2}
		if there are no $\medge$-edges between $R_1$ and $R_2$, then $\gG=\gG_1\gsum\gG_2$ where $\gG_1$ and $\gG_2$ are the the \mas with vertices $V_1=\set{v\mid v\iedge^*w \mbox{ for a } w\in R_1}$ and $V_2=\set{v\mid v\iedge^*w \mbox{ for a } w\in R_2}$.
		In fact by definition there are no $\iedge$-edges between vertices in $V_1$ and $V_2$ otherwise by Lemma~\ref{lemma:cones} we should have $R_1=R_2$. Similarly there are no $\medge$-edges between vertices in $V_1$ and $V_2$ since there are no $\medge$-edges between $R_1$ and $R_2$ (by hypothesis) and if there is $v\in V_1\setminus R_1$ and $w\in V_2$ such that $v\medge w$, then by Lemma~\ref{lemma:modalities} $w\notin R_2$ and we should have again $R_1=R_2$;
		
		\item
		otherwise, we pick a $v\in \irof[\gG]\cap \mrof[\gG]$ and define $R_1=\set{v}\cup \set{w\mid v\medge w}$ and $R_2=\irof[\gG]\setminus R_1$.
		If there is no $u\in \irof[\gG]$ such that $v\nmedge u$, then $R_1=\irof[\gG]$ and we conclude by (\ref{arIsForm:1}).
		If $R_2\neq \emptyset$, then we define $V_1=\set{v\mid v\iedge^*w \mbox{ for a } w\in R_1}$ and $V_2=\set{v\mid v\iedge^*w \mbox{ for a } w\in R_2}$ and we conclude by (\ref{arIsForm:2}).	
	\end{enumerate} 
\end{proof}

As result of Lemma~\ref{thm:forToAr} and Theorem~\ref{thm:arToFor}, we have the following correspondence between formulas and \mas:
\begin{theorem}\label{thm:Marena}
	A $\labelset$-labeled \ddag $\gG$ is a \ma iff there is a formula $F$ such that $\gG=\arof F$.
\end{theorem}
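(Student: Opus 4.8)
The plan is to observe that this theorem is precisely the conjunction of the two results already established, so the proof amounts to assembling them. First I would handle the right-to-left direction: if there is a formula $F$ with $\gG = \arof F$, then $\gG$ is a \ma by Lemma~\ref{thm:forToAr}. Then I would handle the left-to-right direction: if $\gG$ is a \ma, then Theorem~\ref{thm:arToFor} supplies a formula $F$ with $\gG = \arof F$. Since both implications are already in hand, the proof is a one-line citation of these two facts, and there is no genuine additional obstacle at this stage.

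What is worth emphasising is where the real content sits, so that the statement is not merely a restatement. The backward direction (Lemma~\ref{thm:forToAr}) proceeds by induction on the number of connectives and modalities, checking that each of the graph operations $\gsum$, $\gimp$ and $\gmod$ preserves the defining conditions of an \ma; the only delicate point there is that $\gmod$ must be applied with a single-vertex left argument of the form $\ssbox$ or $\ssdia$, which is exactly how \eqref{eq:translation} uses it, so no forbidden configuration is created. The forward direction (Theorem~\ref{thm:arToFor}) is the substantive one: it runs by induction on $\sizeof{\vertices[\gG]}$, using Lemma~\ref{lemma:cones} to split the $\iedge$-roots into either a ``connected'' case giving a $\gimp$-decomposition or a partition $R_1 \uplus R_2$, and then analysing the $\medge$-edges between the two parts to decide between a $\gmod$ and a $\gsum$ decomposition, with Lemma~\ref{lemma:modalities} ruling out spurious cross-edges.

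Accordingly, I would simply write the proof as: the statement follows immediately by combining Lemma~\ref{thm:forToAr}, which gives that $\arof F$ is an \ma for every formula $F$, with Theorem~\ref{thm:arToFor}, which gives that every \ma equals $\arof F$ for some formula $F$. The main difficulty, if any, lies entirely in the inductive decomposition carried out in Theorem~\ref{thm:arToFor} — in particular, verifying that the three cases of the root partition are exhaustive and that the notion of $\medge$-cone $\mconeof{v}$ correctly isolates the vertices forming the subformula under each modality — but that work is already done by the time we reach this theorem.
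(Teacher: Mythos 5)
Your proposal is correct and matches the paper exactly: the paper states this theorem as an immediate consequence of Lemma~\ref{thm:forToAr} (every $\arof F$ is a \ma) and Theorem~\ref{thm:arToFor} (every \ma is some $\arof F$), which is precisely the one-line combination you give. Your additional remarks about where the substantive work lies are accurate but not needed for the proof itself.
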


We define the \emph{formula isomorphism} as the equivalence relation over formulas $\feq$ generated by the following relations:
\begin{equation}\label{eq:formulaeq}
	\begin{array}{c}
		A\land B \feq B \land A
		\qquad
		A\land (B\land C) \feq (A \land B)\land C
		\\
		(A\land B)\imp  C \feq A\imp (B \imp C)
	\end{array}
\end{equation}

\begin{proposition}\label{prop:formEq}
	If $F$ and $G$ are two formulas and $\feq$ the equivalence relation defined 
	in \Cref{eq:formulaeq} then 
	$$F\feq G  \iff \arof F =\arof G$$
\end{proposition}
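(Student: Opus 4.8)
The plan is to prove the two implications separately, reading $\feq$ as the least congruence containing the instances of \Cref{eq:formulaeq} and comparing arenas up to label-preserving isomorphism (so that $\gsum$, $\gimp$ and $\gmod$ are well defined on isomorphism classes). For $F\feq G\Rightarrow\arof F=\arof G$ I would first check the three generators directly. Since $\gsum$ is the vertexwise and edgewise union of its two arguments, it is commutative and associative, which settles $A\land B\feq B\land A$ and $A\land(B\land C)\feq(A\land B)\land C$. For currying I would use that $\irof[\arof A\gsum\arof B]=\irof[\arof A]\cup\irof[\arof B]$ and $\irof[\arof B\gimp\arof C]=\irof[\arof C]$, and unfold both $(\arof A\gsum\arof B)\gimp\arof C$ and $\arof A\gimp(\arof B\gimp\arof C)$ to the same graph: the same vertices, the same $\medge$-relation, and the $\iedge$-relation obtained from the three internal relations together with $\set{(u,v)\mid u\in\irof[\arof A]\cup\irof[\arof B],\ v\in\irof[\arof C]}$. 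Since $\arof{\cdot}$ is compositional and each operation depends on its arguments only through their arenas, $\arof{F_1}=\arof{G_1}$ implies $\arof{C[F_1]}=\arof{C[G_1]}$ for every one-hole context $C$; the claim then follows by induction on the derivation of the congruence.

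For the converse I would argue by strong induction on $\sizeof{\vertices[\gG]}$, where $\gG=\arof F=\arof G$, following the decomposition of \Cref{thm:arToFor} and showing that each choice it makes is forced up to $\feq$. If $\gG$ is a single vertex, its label forces $F=G$ (an atom, or $\ldia\unit$). Otherwise, consider the connected components of the undirected graph underlying $\iedge[\gG]\cup\medge[\gG]$. A compound formula has a disconnected arena iff its outermost connective is $\land$: an implication links the roots of its antecedent to those of its succedent, and a modality links its vertex to its scope by $\medge$, so in both of those cases the arena is connected. Hence if there are $k\geq2$ components $\gG_1,\dots,\gG_k$, both $F$ and $G$ have outermost connective $\land$, their maximal conjuncts realize the multiset $\set{\gG_1,\dots,\gG_k}$, and commutativity, associativity and the induction hypothesis on each (strictly smaller) $\gG_i$ give $F\feq G$.

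If instead $\gG$ is connected and has more than one vertex, then by \Cref{thm:arToFor} it is either $v\gmod\gG'$ for a single modal root $v$, or a nontrivial $\gimp$; these are distinguished intrinsically, since in the first case $v$ is $\iedge$-isolated with $\mconeof v=\vertices[\gG]\setminus\set v$ (such a $v$ being unique by acyclicity of $\iedge\cup\medge$), whereas in a $\gimp$ every root receives an $\iedge$ from the nonempty antecedent and so no root is $\iedge$-isolated. In the modal case $\lab v$ fixes the outermost connective of both $F$ and $G$ ($\lbox$ or $\ldia$), and the induction hypothesis on the subarena induced on $\mconeof v$ gives $F\feq G$. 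In the implication case I would make the split canonical by setting $W=\set{v\mid v\iedge^{*}r\text{ for every }r\in\irof[\gG]}$, the set of vertices reaching all roots, and showing $\gG=(\restr\gG W)\gimp(\restr\gG{\vertices[\gG]\setminus W})$ with both parts strictly smaller; fully left-uncurrying $F$ and $G$ (repeated uses of the currying generator) brings them to the form $\Psi\imp C$ and $\Psi'\imp C'$ with $\arof\Psi=\arof{\Psi'}=\restr\gG W$ and $\arof C=\arof{C'}=\restr\gG{\vertices[\gG]\setminus W}$, so that the induction hypothesis and congruence give $F\feq G$.

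I expect the main obstacle to be this last step: proving that the subgraph induced on $W$ is exactly the arena of the fully left-uncurried antecedent while its complement is the arena of the non-implicational core, so that the currying generator accounts for the entire ambiguity of the $\gimp$-decomposition. Establishing the canonical component partition and the modal/implication dichotomy — and verifying throughout, via \Cref{lemma:cones} and \Cref{lemma:modalities}, that the $\medge$-structure never crosses between the parts of the $\gsum$- and $\gimp$-decompositions — is where the genuine work lies; the remaining steps are routine bookkeeping.
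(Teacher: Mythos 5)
Your strategy is the same as the paper's: the paper's entire proof of this proposition is the single sentence ``by induction using the definition of the \mas operations $\gsum$, $\gimp$ and $\gmod$'', and your two directions (checking the three generators and closing under congruence for the forward implication; structural induction following the decomposition of \Cref{thm:arToFor} for the converse) are precisely the details that sentence suppresses. The generator checks, the currying computation via $\irof[\arof A\gsum\arof B]=\irof[\arof A]\cup\irof[\arof B]$ and $\irof[\arof B\gimp\arof C]=\irof[\arof C]$, the connectedness criterion isolating the $\gsum$-case, and the $\iedge$-isolated-root test separating $\gmod$ from $\gimp$ are all correct.

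There is, however, one step that fails as written. Your canonical antecedent $W=\set{v\mid v\iedge^{*}r\ \text{for every}\ r\in\irof[\gG]}$ contains every root whenever $\irof[\gG]$ is a singleton, because $r\iedge^{*}r$ holds via the length-zero path. Already for $F=a\imp b$ this yields $W=\set{a,b}$ and an empty succedent, so the claimed split $\gG=(\restr{\gG}{W})\gimp(\restr{\gG}{\vertices[\gG]\setminus W})$ is false and the induction does not get started. The fix is the one the paper itself uses in the proof of \Cref{thm:arToFor}: take the succedent to be $V_2=\irof[\gG]\cup\bigcup_{r\in\irof[\gG]}\mconeof{r}$ and set $W=\vertices[\gG]\setminus V_2$. (The cone vertices are excluded from your $W$ automatically, by the clause $w\niedge^{*}v$ in the definition of $\mconeof{v}$, so the only genuine correction is removing the roots themselves.) With that substitution the remainder of your argument --- fully left-uncurrying $F$ and $G$ and applying the induction hypothesis to both halves --- goes through.
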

\begin{proof}
	By induction using the definition of the \mas operations $\gsum$, $\gimp$ and $\gmod$.
\end{proof}

\def\inv{{\mathsf{in}}}
\def\outv{{\mathsf{out}}}

\def\intv#1{#1^\inv}
\def\extv#1{#1^\outv}

\def\extb#1{#1^{\outv \oddsym}}
\def\extw#1{#1^{\outv \evensym}}
\def\intb#1{#1^{\inv \oddsym}}
\def\intw#1{#1^{\inv \evensym}}

\section{Modal Arena Nets}\label{sec:arenaNets}

In this section we show the correspondence between (linear) proofs in $\IMLLK$ and $\IMLLD$, and respectively $\CK$- and $\CD$-arena nets, that are, modal arenas equipped with a an equivalence relation over vertices satisfying specific topological conditions.

\begin{definition}
	
	A \emph{partitioned modal arena} $\cgG=\tuple {\vertices[\cgG], \iedge[\cgG],\medge[\cgG], \axeq[\cgG]}$ is 
	given by a \ma $\tuple {\vertices[\cgG], \iedge[\cgG],\medge[\cgG]}$ together with an equivalence relation $\axeq[\cgG]$ over vertices  such that:
	
	\begin{itemize}
		\item if $v\in \avertices[\cgG]$ and  $v\axeq[\cgG]w$, then $w\in \avertices[\cgG]$ and $\lab v=\lab w$;
		\item if $v\in \avertices[\cgG]$, then $v\axeq[\cgG]w$ for a unique $w\in\avertices[\cgG]$.
	\end{itemize}
	
	In a partitioned modal arena we represent the equivalence relation $\axeq$ by drawing a (dashed non-oriented blue) edge $\vv1\quad \vw1\axedges{v1/w1}$ between two distinct vertices in the same $\axeq$-class.
	For better readability, we only represent a minimal subset of these edges relying on the fact that $\axeq$ is an equivalence relation. By means of example, if $\set{u,v,w}$ is an $\axeq$-class, we may represent $\vu1\quad\vv1\quad\vw1\axedges{v1/w1,u1/v1}$ omitting the edge between $u$ and $w$.
	
	We say that a formula (or \pformula) $F$ is \emph{associated} to $\cgG$ if $\arof F=\tuple {\vertices[\cgG], \iedge[\cgG],\medge[\cgG]}$.
\end{definition}

\begin{remark}
	If $v$ and $w$ are vertices in a partitioned modal arena $\cgG$ such that $v\axeq[\cgG]w$, 
	then $v\in \mvertices[\cgG]$  iff $w\in\mvertices[\cgG]$.
\end{remark}

If $\gG$ is an arena, we define $\dep v$ as the length of the longest $\iedge$-paths from $v$ to a root $w\in\irof[\gG]$.
The \emph{parity of a vertex}  $v$ is the parity of $\dep v$.
We denote by $\iseven v$ and $\isodd v$ if the parity of $v$ is respectively even or odd.

\begin{remark}\label{rem:toRoot}
	In a arena $\gG$, if $v$ and $w$ are vertices  such that $\dep v=n>0$ and $v\iedge[\gG]^*  w$ and  $w\in \irof[\gG]$, then $v\iedge[\gG]^n w$.
\end{remark}

The \emph{parity} of a $\iedge$-edge $v\iedge w$ is the parity of $\dep w$.
We say that  an edge $v\iedge w$ is a \emph{chord} if there is a vertex $u$ such that 
either $v\iedge u$ and $u\medge w$;
or $u\iedge w$ and $u\medge v$.
By means of example, in the following \mas the edges $a\iedge b$ are chords.
$$
\begin{array}{c@{\qquad}c}
	\va1 & \vlbox1
	\\[1em]
	& \vb1
\end{array}
\modedges{box1/b1}
\dedges{a1/b1, a1/box1}
\qquad\qquad\qquad
\begin{array}{c@{\qquad}c}
	\vlbox1 & \vb1
	\\[1em]
	\va1 & 
\end{array}
\modedges{box1/a1}
\dedges{box1/b1,a1/b1}
$$

We denote by $\oedge[\cgG]$ the set of odd edges which are not chords.

Moreover, we define the set of edges
$$
\omedge[\gG]=\set{vw \mid
	\mbox{ either }\iseven v 
	\mbox{ and } w=\pmv v
	\mbox{ or }\isodd w 
	\mbox{ and } w= \pmv v
}
$$
Note that $\isodd v\omedge \isodd w$ implies $\isodd v\medge \isodd w$, while $\iseven v\omedge \iseven w$ implies $\iseven w\medge \iseven v$.
That is, if $\lbox A$ is right-hand side formula of a sequent (i.e., $\iseven{\lbox A}$), then we have a $\omedge$ from the vertex $\ssbox$ to all the $\iedge$-roots of $\arof{A}$; 
while if $\lbox A$ is left-hand side formula of a sequent (i.e., $\isodd{\lbox A}$), then we have a $\omedge$ from the $\iedge$-roots of $\arof{A}$ to the vertex $\ssbox$.

\begin{definition}\label{def:KarenaNet}
	A partitioned modal arena $\cgG$ is \emph{linked} if every $\axeq[\cgG]$-class is of the form $\set {\isodd v_1, \dots, \isodd v_n,\iseven w}$. 
	This induces the set directed edges $\axlink[\gG]= \set{ (v,w) \mid \isodd v \axeq[\gG] \iseven w}$.
	The \emph{linking graph} $\linkgraph \gG$ of a modal arena is the direct graph with vertices $\vertices[\gG]$ and edges 
	$ \oedge[\gG]\cup \omedge[\gG]\cup\axlink[\gG]$.	
	
	We say that path in $\linkgraph{\gG}$ is \emph{checked} if it starts from a vertex in  $\irof[\gG]\cap \mrof[\gG]$ and it contains no $\axlink$ with source $v$ with $\mconeof{v}\neq \emptyset$.
	
	A \emph{$\CK$-arena net} is a non-empty linked modal arena which satisfies conditions \ref{cond:acyclic}-\ref{cond:K} below:
	\begin{enumerate}
		\item \label{cond:acyclic} $\linkgraph \gG$ is acyclic: every checked path is acyclic;
		
		\item \label{cond:functional} $\linkgraph \gG$ is \emph{functional}: 
		every checked path in $\linkgraph \gG$ from a vertex $\isodd v$ to a root includes a vertex $\iseven w$ such that $v\iedge w$;

		\item \label{cond:functorial} $\cgG$ is \emph{functorial}: if $v\medge w$ and  $w\axeq w'$ 
		then there is $v'$ such that  $v\axeq v'$ 
		and $v'\medge w'$;
		
		\item \label{cond:mod} $\cgG$ has almost all \emph{non-empty modalities}
		\footnote{The only empty modality admitted is a $\isodd{\ldia}$, that is, the $\ssdia$ which corresponds to a $\isodd{\ldia \unit}$ introduced by a $\Kurule$-rule.}:
		if $v\in \mvertices[\cgG]$ and there is no $w\in \vertices[\cgG]$ such that $v\medge w$, then $v\in\dvertices[\cgG]$;

		\item \label{cond:K}  $\cgG$ is \emph{$\CK$-correct}:
		if $\set{\isodd v_1,\isodd v_2, \dots, \isodd v_n, \iseven w} \subset \bvertices[\cgG]\cup \dvertices[\cgG]$ is a $\axeq$-class, then 
		either 
		$v_1,v_2, \dots, v_n,w\in \bvertices[\cgG]$ 
		or
		there is a unique $i$ such that $v_i,w\in  \dvertices[\cgG]$.

	\end{enumerate} 
	A modal arena is a \emph{$\CD$-arena net} if it satisfies Conditions \ref{cond:acyclic}-\ref{cond:functorial} plus the following:
	\begin{enumerate}
		\setcounter{enumi}{5}
		\item \label{cond:modD} $\cgG$ has all \emph{non-empty modalities}:
		if $v\in \mvertices[\cgG]$, then there is $w\in \vertices[\cgG]$ such that $v\medge w$;
		
		\item \label{cond:D}  $\cgG$ is \emph{$\CD$-correct}: 
		if $\set{\isodd v_1,\isodd v_2, \dots, \isodd v_n, \iseven w} \subset \bvertices[\cgG]\cup \dvertices[\cgG]$ is a $\axeq$-class, then 
		either 
		$v_1,v_2, \dots, v_n,w\in \bvertices[\cgG]$ 
		or
		$w\in \dvertices[\cgG]$ there is at most  one $i\in \intset1n$ such that $v_i\in  \dvertices[\cgG]$.
	\end{enumerate}
	
	A \emph{modal arena net} is either a $\CK$- or a $\CD$-arena net.
	An \emph{arena net} is a modal arena net with $\mvertices=\emptyset$. 
	Note that in this case 
	Conditions \ref{cond:functorial}, \ref{cond:mod}, \ref{cond:K}, \ref{cond:modD} and \ref{cond:D}
	are vacuous.
\end{definition}

The intuition for Conditions \ref{cond:K} and \ref{cond:D} is that 
$\axeq$-classes represent either atoms paired by an $\AXrule$, or the set of modalities introduced by a same $\Kbrule$, $\Kdrule$, $\Kurule$ or $\Drule$-rule.
Following this intuition, if 
$\mathsf c = \set{v_0, v_1, \dots, v_n}\subset \mvertices[\cgG]$ is a $\axeq$-class, then the modal arena with vertices 
$\bigcup_{v\in \mathsf c} \mconeof{v}$
corresponds the sub-proof of the premise of any such rule.

\begin{figure*}[t!]
	\newcommand{\varman}[2]{\tuple{{#1}\mid\axeq[#2]}}
	\newcommand{\vvarman}[3]{\tuple{{ #1}\mid\axeq[#2]\cup\axeq[#3]}}
	\def\myskip{\hskip1.9em}
	\begin{center}
		\footnotesize
		\begin{tabular}{c@{\myskip}c@{\myskip}c@{\myskip}c@{\myskip}c@{\myskip}c}
			$\vlinf{}{\axrule}{ \va 1 \myskip \va 2}{} \dedges{a1/a2}\axlbents{a1/a2}$
			&
			$\vlinf{}{\rimprule}{ \cgF  \vdash \cgG \gimp \cgH}{ \cgF , \cgG \vdash \cgH}$
			&
			$\vliinf{}{\limprule}{ \cgF,  \cgJ , \cgG \gimp \cgK \vdash \cgH}{ \cgF \vdash \cgG}{ \cgJ, \cgK \vdash \cgH}$
			&
			$\vliinf{}{\rlandrule}{ \cgF,  \cgI \vdash \cgG \gsum \cgJ}{ \cgF \vdash \cgG}{ \cgI \vdash \cgJ}$
			& 
			$\vlinf{}{\llandrule}{ \cgF, \cgJ \gsum \cgI \vdash \cgK}{ \cgF, \cgJ, \cgI \vdash \cgK}$
		\end{tabular}
		\begin{tabular}{ccc}
			$\vlinf{}{\Kbrule}{ \vvarman{ \ssbox \gmod \cgG_1, \dots,  \ssbox \gmod \cgG_n \vdash \ssbox \gmod \cgH }{\gG}{\nu} }{\varman{ \cgG_1, \dots, \cgG_n \vdash \cgH}{\gG}}$ 
			&
			$\vlinf{}{\Kdrule}{ \vvarman{ \ssbox \gmod \cgG_1, \dots,  \ssdia \gmod \cgG_{i},\dots, \ssbox \gmod \cgG_n \vdash \ssdia \gmod \cgH }{\gG}{\nu} }{\varman{ \cgG_1, \dots, \cgG_n \vdash \cgH}{\gG}}$ 
			\\
			$\vlinf{}{\Kurule}{ \vvarman{ \ssdia, \ssbox \gmod \cgG_1, \dots, \ssbox \gmod \cgG_n \vdash \ssdia \gmod \cgH }{\gG}{\nu} }{\varman{ \cgG_1, \dots, \cgG_n \vdash \cgH}{\gG}}$ 
			&
			$\vlinf{}{\Drule}{ \vvarman{ \ssbox \gmod \cgG_1, \dots,  \ssbox \gmod \cgG_n \vdash \ssdia \gmod \cgH }{\gG}{\nu} }{\varman{ \cgG_1, \dots, \cgG_n \vdash \cgH}{\gG}}$ 
		\end{tabular}
		
		\vspace{10pt}
		where $\axeq[\nu]=\set{\set{x,y}\mid x\mbox{ and }y \mbox{ vertices in the rule conclusion not occurring in the premise} }$
		\vspace{-10pt}
	\end{center}
	\caption{Translation of the sequent rules in $\IMLLK$ and $\IMLLD$ into modal arena nets rules}
	\label{fig:marules}
\end{figure*}

\begin{lemma}\label{thm:arenaCompleteness}
	Let $\X\in \set{\CK,\CD}$.
	If $\provevia{\IMLLX} F$, then 
	there is a $\X$-arena net 
	$\cgG=\tuple{\vertices[\gG],\iedge[\gG], \medge[\gG],\axeq[\gG]}$ such that 
	$\arof F= \tuple{\vertices[\gG], \iedge[\gG], \medge[\gG]}$.
\end{lemma}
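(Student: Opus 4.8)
The plan is to prove the statement by induction on a given $\IMLLX$-derivation of $\vdash F$, after first strengthening it to arbitrary sequents: for every sequent $\Gamma\vdash A$ occurring in the derivation I would build, following the translation of \Cref{fig:marules}, a partitioned modal arena $\cgG$ whose underlying \ddag is $\arof{G}$ for the formula $G$ associated to $\Gamma\vdash A$ (via the one-sided encoding, so that the formulas of $\Gamma$ receive $\oddsym$-polarity and $A$ receives $\evensym$-polarity), and such that $\cgG$ satisfies the $\X$-arena net conditions. Specialising to the root $\vdash F$ then yields the lemma. That the underlying graph is always a genuine \ma equal to $\arof{G}$ is immediate from \Cref{eq:translation} and \Cref{thm:Marena}, since the translation applies exactly the operations $\gsum$, $\gimp$, $\gmod$ that build $\arof{G}$ (and \Cref{prop:formEq} guarantees the encoding of a sequent is well defined up to $\feq$). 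For the base case $\axrule$ I would observe that the resulting two-vertex arena carries the single class $\set{\isodd a,\iseven a}$: it is linked, has no modal vertex, so Conditions \ref{cond:functorial}--\ref{cond:D} hold vacuously, while \ref{cond:acyclic}--\ref{cond:functional} are trivial.

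For the multiplicative rules $\rimprule$, $\limprule$, $\rlandrule$, $\llandrule$ the operation on the underlying arena is a combination of $\gsum$ and $\gimp$ and the partition is the disjoint union of the premises' partitions; in particular no modal vertex and no $\medge$-edge is created or destroyed. Hence Conditions \ref{cond:functorial}, \ref{cond:mod}/\ref{cond:modD} and \ref{cond:K}/\ref{cond:D}, which constrain only the $\medge$-structure and the modal $\axeq$-classes, are inherited verbatim from the premises. For the acyclicity and functionality Conditions \ref{cond:acyclic}--\ref{cond:functional} of the linking graph $\linkgraph\gG$ I would argue as in the modality-free case of \cite{ICP}: the $\iedge$-edges added by $\gimp$ go only from roots to roots, and the fresh $\oedge$ and $\axlink$ edges respect the depth grading, so every checked path in the composite projects to checked paths in the components, which are acyclic and functional by the induction hypothesis.

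The modal rules $\Kbrule$, $\Kdrule$, $\Kurule$, $\Drule$ are the crux. Each prepends a fresh $\ssbox$ or $\ssdia$ to every component of the premise via $\gmod$ and collects these fresh vertices $m_0,\dots,m_n$ (with $m_0$ on the conclusion side and $m_1,\dots,m_n$ on the context side) into one new class $\nu$. By the polarity bookkeeping $m_0$ is the unique $\evensym$-vertex and $m_1,\dots,m_n$ are $\oddsym$-vertices, so $\nu=\set{\isodd{m_1},\dots,\isodd{m_n},\iseven{m_0}}$ and the net stays linked. The correctness Conditions \ref{cond:K}/\ref{cond:D} then hold by construction: $\Kbrule$ makes every vertex of $\nu$ a $\ssbox$; $\Kdrule$ and $\Kurule$ make exactly one context vertex and the conclusion vertex a $\ssdia$ (the ``unique $i$'' clause); and $\Drule$ makes only the conclusion vertex a $\ssdia$ (the $\CD$-clause). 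For the non-empty-modality Conditions \ref{cond:mod}/\ref{cond:modD}, each fresh $\ssbox$ $\medge$-points to the non-empty roots of its component, and the only empty modality produced is the $\ssdia$ of $\Kurule$, which is precisely the exception admitted in \ref{cond:mod}; since $\Kurule\notin\IMLLD$, the $\CD$-case never produces one. Functoriality, Condition \ref{cond:functorial}, demands that $\axeq$-related targets of the new $\medge$-edges be parallel: if $m_i\medge w$ and $w\axeq w'$, then $w'$ must be a root of some $j$-th component, so $m_j\medge w'$ with $m_j\axeq m_i$; this parallelism is what the premise's functionality and functoriality supply. Finally, for \ref{cond:acyclic}--\ref{cond:functional} I would note that the fresh vertices form a single top layer whose outgoing $\omedge$- and $\axlink$-edges all descend into the premise graph, so every checked path again decomposes through the premise.

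The hard part will be exactly this modal case, and in particular Conditions \ref{cond:acyclic}, \ref{cond:functional} and \ref{cond:functorial}: I must show that inserting the fresh modal layer and merging all its vertices into the single class $\nu$ neither creates a cyclic checked path in $\linkgraph\gG$ nor destroys functionality, and that the $\medge$-targets of $\nu$ are $\axeq$-parallel. The delicate point is that one $\axeq$-class now bundles several modalities introduced by a single rule, so acyclicity and functionality cannot be verified by a pointwise reduction to the premise but require tracking how the grouped modal edges reroute checked paths. I expect \Cref{lemma:modalities}, which asserts that $\medge$-linked vertices behave identically under $\iedge$, to be the key tool making this rerouting controllable.
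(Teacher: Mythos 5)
Your proposal follows essentially the same route as the paper: induction on the $\IMLLX$-derivation, translating each sequent rule into the corresponding modal-arena-net rule of \Cref{fig:marules} and checking that every rule preserves the $\X$-arena net conditions. The paper's own proof is in fact much terser---it simply asserts that each rule preserves the conditions by definition---so your case-by-case verification (including the modal rules and the role of \Cref{lemma:modalities}) supplies more detail than the published argument rather than diverging from it.
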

\begin{proof}
	Let $\pi$ be a derivation of $F$ in $\IMLLX$.
	We proceed by induction translating the derivation $\pi$ of the formula $F$ in a derivation of a modal arena $\cgG$ (with associated formula $F$) in the system described by the rules in  \Cref{fig:marules}.
	
	By definition, each rule in $\IMLLX$ preserves $\X$-arena net conditions, that is, if the premises of a rule are $\X$-arena nets, then the conclusion is. 
	In particular, Condition \ref{cond:K} fails for the rule $\Drule$.
	Similarly, each rule except $\Kurule$ preserves the $\CD$-arena net conditions.
\end{proof}

\begin{figure*}[!t]
	\centering
	$
	\begin{array}{c@{\quad}c@{\quad }c}
		\begin{array}{cccccc}
			\vc1 &				&				& \vb0	
			\\[1em]
			\vc0 & \vlbox1_1 & \vlbox0_0 & \vb1
			\\[1em]
			\qquad		&	\va1	&	\va0	&\qquad
			\axlbents{box1/box0,a1/a0, b1/b0}
			\modedges{box1/a1,box0/a0}
			\dedges{c0/box1,c0/a1,a0/b1,box0/b1,b1/b0,c1/b0}
			\multidedges{box1,a1}{a0,box0}
			\axedges{c1/c0}
		\end{array}
		%
		& \overset{\partial}{\rightsquigarrow} &
		%
		\begin{array}{cccccc}
			\vc1 &					&&		&							& \vb0	
			\\[1em]
			\vc0 & \vlbox1_1^\outv 	 & 	\vlbox3_1^\inv & \vlbox2_0^\inv&\vlbox0_0^\outv & \vb1
			\\[1em]
			&&	\va1	&	\va0
		\end{array}
		\axlbents{a1/a0, b1/b0}
		\dedges{c0/box1,box0/b1,b1/b0,c1/b0}
		\dedges{box3/a1,a0/box2, a1/a0}
		\axedges{box1/box3,box2/box0,c1/c0}
		\dlbent{box1}{box0}
		%
		\\[3em]
		\Updownarrow & & \Updownarrow
		\\[-.5em]
		%
		{\scriptsize
			\vlderivation{
				\vliin{}{\limprule}{ c ,  {((c\imp \lbox a) \imp \lbox a )\imp b } \vdash  b}
				{\vlin{}{\rimprule}{ c \vdash {(c\imp \lbox a) \imp \lbox a}}
					{\vliin{}{\limprule}{ c,  {c\imp \lbox a} \vdash { \lbox a}}
						{\vlin{}{\AXrule}{ c \vdash c}{\vlhy{}}}
						{\vlin{}{\Kbrule}{\lbox  a \vdash  \lbox  a}{
								\vlin{}{\AXrule}{ a \vdash  a}{\vlhy{}}}}}}
				{\vlin{}{\AXrule}{ b \vdash  b}{\vlhy{}}}
			}
		}
		%
		& \overset{\partial}{\rightsquigarrow} &
		%
		{\scriptsize
			\vlderivation{
				\vliin{}{\limprule}{ c,  {(\intv\lbox_1 \imp a) \land (a\imp \intv\lbox_0)} ,{((c\imp \extv \lbox_1 )\imp \extv \lbox_0 )\imp b} \vdash  b}
				{\vlin{}{\rimprule}{c, {(\intv \lbox_1 \imp a) \land (a\imp \intv\lbox_0 ) } \vdash {{(c\imp \lbox_1^\outv )\imp\lbox_0^\outv}}}
					{\vliin{}{\limprule}{c, {{c\imp \extv \lbox_1}} , {(\intv \lbox_1 \imp a) \land (a\imp \intv \lbox_0 )} \vdash \extv \lbox_0}
						{\vlin{}{\AXrule}{ c \vdash  c}{\vlhy{}}}
						{\vlin{}{\llandrule}{\extv \lbox_1 , {(\lbox_1^\inv \imp a) \land (a\imp \lbox_0^\inv )}\vdash \extv \lbox_0}	
							{{\vliin{}{\limprule}{\extv \lbox_1 , {\intv\lbox_1 \limp a} , {{a \limp \intv\lbox_0}} \vdash \extv\lbox_0}
									{\vliin{}{\limprule}{\extv\lbox_1 , {\intv\lbox_1\imp a}\vdash   a}
										{\vlin{}{\AXrule}{\intv\lbox_1\vdash \extv \lbox_1}{\vlhy{}}}
										{\vlin{}{\AXrule}{ a \vdash  a}{\vlhy{}}}}
									{\vlin{}{\AXrule}{\intv\lbox_0 \vdash \extv\lbox_0}{\vlhy{}}}}}}
				}}
				{\vlin{}{\axrule}{b \vdash b}{\vlhy{}}}
			}
		}
	\end{array}
	$
	\caption{A $\K$-arena net $\cgG$ with associated formula $ (c \land {((c\imp \lbox a) \imp \lbox a )\imp b }) \imp b$, its corresponding arena $\der\cgG$, the derivations associated to  $\cgG$ and $\der\cgG$
	}
	\label{fig:NetToDer1}
\end{figure*}

\begin{lemma}\label{thm:arenaSoundness}
	Let $\X\in\set{\CK,\CD}$ and $\cgG$ a modal arena with associated formula $F$. 
	If $\cgG$ is a $\X$-arena net, then  $\provevia{\IMLLX}F$.
\end{lemma}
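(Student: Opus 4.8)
The plan is to prove this by \emph{sequentialisation}: I exhibit the $\X$-arena net $\cgG$ as the conclusion of a derivation built from the modal-arena-net rules of \Cref{fig:marules}, and read off the corresponding sequent rules to obtain the required $\IMLLX$-proof of $F$; thus the lemma is exactly the inverse of the translation used in \Cref{thm:arenaCompleteness}. Since those rules act on sequents of arenas, I would in fact prove the more general statement that every $\X$-arena net encoding a sequent $\Gamma\vdash A$ is so derivable, the lemma being the instance $\vdash F$. Throughout, the split across $\vdash$ is recorded by the parities, which are fixed by the graph: the even $\iedge$-roots are the roots of the right formula $A$, the odd $\iedge$-roots are the roots of the left formulas in $\Gamma$, and this correspondence is an invariant of the decomposition. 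I proceed by induction on $\sizeof{\vertices[\cgG]}$, the case analysis being driven by the top-level decomposition of $A$ given by \Cref{thm:arToFor}.

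First I would dispose of the invertible rules, which leave the underlying graph untouched and only reinterpret $\vdash$: if $A=B\imp C$ the last rule is $\rimprule$ and I recurse on $\Gamma,B\vdash C$, and a conjunction among the left formulas is removed by $\llandrule$. This reduces to the case where $A$ is atomic or modal and every left formula is atomic, modal, or a left implication. If $A=B\land C$ the last rule is $\rlandrule$ and, as in case~\ref{arena:sum} of \Cref{thm:arToFor}, the even roots partition into those of $B$ and those of $C$; a splitting argument extends this to a decomposition of $\cgG$ into two strictly smaller subnets. If $A=\lbox B$ or $A=\ldia B$ the last rule is a modal one and I peel the outermost modality, leaving a smaller net supported on the relevant $\medge$-cones. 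If $A=a$ is atomic and $\Gamma$ reduces to its unique $\axeq$-partner, $\cgG$ is an $\axrule$; otherwise some left implication is decomposed by $\limprule$, splitting the context into two premises.

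In every branching case I must check that the premises are again $\X$-arena nets, i.e. that Conditions \ref{cond:acyclic}--\ref{cond:K} (respectively \ref{cond:acyclic}--\ref{cond:functorial}, \ref{cond:modD}, \ref{cond:D}) are inherited. The genuinely delicate point — and the main obstacle — is the \emph{splitting} underlying $\rlandrule$ and $\limprule$: I must produce a partition of $\vertices[\cgG]$ into two halves crossed by no $\axlink$, no $\oedge$, and no $\omedge$ edge, each half being a correct net. For $\rlandrule$ the partition is supplied by the root split, and acyclicity (\ref{cond:acyclic}) together with functionality (\ref{cond:functional}) of $\linkgraph\gG$ forbid any such edge from bridging the two halves. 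For $\limprule$ the splitting formula must first be located, for which I would adapt the standard splitting argument to $\linkgraph\gG$, exploiting Condition~\ref{cond:functional} (every checked path from an odd vertex $\isodd v$ to a root meets an even $\iseven w$ with $v\iedge w$).

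The remaining subtlety is modal: I must ensure that no $\axeq$-class is severed by a split and that the modal last-rule is chosen consistently. That modal classes stay intact follows from functoriality (\ref{cond:functorial}) and \Cref{lemma:modalities}: since $v\medge w$ keeps $v$ and $w$ above the same roots, an entire modal class together with its $\medge$-cones lies on one side of any split. For the choice of modal rule, Conditions \ref{cond:K}, \ref{cond:modD} and \ref{cond:D} are decisive: an even $\ldia$-root whose $\axeq$-class consists only of boxes forces $\Drule$ in the $\CD$ case, a class containing a genuine odd $\ldia$ forces $\Kdrule$, and a $\isodd{\ldia\unit}$ empty modality — admitted only in the $\CK$ case by Condition~\ref{cond:mod} — forces $\Kurule$. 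Finally I would verify that after peeling a modality the residual partition on the $\medge$-cones still satisfies the correctness conditions, matching the $\nu$-classes of the modal rules in \Cref{fig:marules}, which closes the induction.
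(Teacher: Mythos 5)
Your proposal takes a genuinely different route from the paper, and it has a gap at its crucial step. The paper does not sequentialise $\cgG$ directly: it defines a translation $\der\cgG$ that erases the $\medge$-edges and replaces every modal vertex $v$ by a pair of $\axlink$-linked fresh vertices $\intv v,\extv v$, so that each modal $\axeq$-class becomes an ordinary implicational gadget $H_{\mathsf c}$; it then shows $\der\cgG$ is a plain (non-modal) arena net by relating checked paths in $\linkgraph\cgG$ to paths in $\linkgraph{\der\cgG}$, invokes the sequentialisation result of \cite{ICP} to obtain an $\IMLL$-proof of $\der F$, and finally rewrites the forced subderivations around each $H_{\mathsf c}$ into applications of $\Kbrule$, $\Kdrule$, $\Kurule$ or $\Drule$ (cf.\ \Cref{fig:linearizedK}). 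The entire splitting problem is thereby delegated to the cited non-modal theorem. Your plan instead redoes sequentialisation from scratch by induction on $\sizeof{\vertices[\cgG]}$, and the step you yourself flag as the main obstacle --- locating a splitting left implication for $\limprule$ in a linking graph that now carries $\omedge$-edges and is traversed by \emph{checked} paths --- is precisely the content you would have to supply and do not: ``adapt the standard splitting argument to $\linkgraph\gG$'' names the missing lemma rather than proving it.

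There is also a concrete error in your case analysis. You assert that if $A=\lbox B$ or $A=\ldia B$ then the last rule is modal, reserving $\limprule$ for the case of an atomic succedent. But $\Kbrule$, $\Kdrule$, $\Kurule$ and $\Drule$ can only fire when the whole antecedent has the required modal shape; for a net whose associated sequent is, say, $c,\ c\imp\lbox a \vdash \lbox a$ (essentially the situation in \Cref{fig:NetToDer1}), the succedent is boxed yet the last rule must be $\limprule$. So the $\limprule$ case, and with it the splitting lemma, must be available for modal succedents as well, which your induction as structured never reaches. The modal bookkeeping you describe --- that $\axeq$-classes and their $\medge$-cones are not severed by splits, via Condition~\ref{cond:functorial} and \Cref{lemma:modalities}, and that Conditions \ref{cond:K}, \ref{cond:modD}, \ref{cond:D} select the modal rule --- is sound in spirit, but the proof stands or falls on the splitting argument you have deferred, which is exactly the work the paper's reduction to the non-modal case is designed to avoid.
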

\begin{proof}
	We prove the theorem for $\CK$-arena nets.

	To prove this theorem we define from the $\CK$-arena net $\cgG$, with associated formula $F$, an arena net $\der\cgG$ with associated formula $\der F$.
	We then use use of the result in \cite{ICP} on (non-modal) arena nets to produce an $\IMLL$-derivation of $\der F$.
	Then we conclude by showing how to define a $\IMLLX$-derivation of $F$ using the $\IMLL$-derivation of $\der F$.

	\textit{Step 1: definition of $\der\cgG$.}
	If $\cgG=\tuple{\vertices[\gG], \iedge[\gG], \medge[\gG], \axeq[\gG]}$ is a $\CK$-arena net, 
	we write $\sameframe vw$ either if $\pmv {v} \axeq \pmv{w}$,  or if $v=\pmv v $ and $w=\pmv w$,  that is, $\sameframe vw$ iff both $v$ and $w$ belongs to the scope of a same modality or in the scope of no modality.
	
	We define the arena $\der\cgG$ 
	by removing all $\medge$-edges in $\cgG$ and keeping only the $\iedge$ between vertices $v,w\in \vertices[\cgG]$ such that $\sameframe vw$.
	Then we replace each modal vertex $v$ by a pair of $\axlink$-linked vertices $\intv v,\extv v$ in such a way that the vertex $v^\inv$ keeps track of the subformulas of the modality, while $\extv v$ is a placeholder to keep track of the interaction of the subformulas with the context.
	
	Formally we define $\der\cgG=\tuple{\der{\vertices[\gG]}, \der{\iedge[\gG]\cup \medge[\gG]},{\axeq[\der{\gG}]}}$ 
	by:
	\begin{itemize}
		\item 
		$\der{\vertices[\gG]}= \avertices[\gG] \cup \set{\intv v, \extv v \mid v\in  \mvertices[\gG]}$

		\item 
		$\der{\iedge[\gG]\cup \medge[\gG]}$  is the union of the following sets where we assume 
		$u,v\in \avertices[\cgG]$ and $\isodd l,\iseven r,m,n,p\in \mvertices[\cgG]$:
		
		$$
		\begin{array}{l}
			\set{(u,v)\mid \sameframe{u}{v} \mbox{ and } u \iedge v}
			\\
			\set{(\extv l, \extv r)\mid l\axlink r}
			\\
			\set{(u,\intv r),(\intv l,v) \mid 
				l=\pmv v, r=\pmv u}
			\\
			\set{(u,\extv m),(\extv m, v) \mid 
				u\iedge m \iedge v,
				m\sameframesymbol u\sameframesymbol v}
			\\
			\set{(\extv m,\extv n),(\extv n, \extv p) \mid 
				m\iedge n\iedge p, 
				m\sameframesymbol n\sameframesymbol p}
		\end{array}
		$$

		\item 
		${\axeq[\der{\cgG}]}$ is defined as:
		$$\begin{array}{cl}
			v\axeq[\der{\cgG}]w 
			&
			\mbox{ if } v,w\in \avertices[\cgG]\subset \der{\vertices[\cgG]} \mbox{ and } v\axeq[\cgG]w
			\\
			\intv v \axeq[\der{\cgG}] \extv v 
			&
			\mbox{ for each } v\in \mvertices[\cgG]
		\end{array}$$

	\end{itemize}
	See the first line of \Cref{fig:NetToDer1,fig:NetToDer2} for running examples.
	
	We observe that if $\set{\iseven v_0, \isodd v_1, \dots, \isodd v_n}$ is a $\axeq[\cgG]$-class of modal vertices, then a \pformula associated to $\cgG$ is of the form 
	$$H=H\cons{\iseven{\lab {v_0}A_0}}\cons{\isodd{\lab{v_1}A_1}} \cdots \cons{\isodd{\lab{v_n}A_n}}$$
	for an $(n+1)$-ary context $H\cons{}\cdots\cons{}$.
	In this case, a \pformula associated to the arena $\der \cgG$ is of the form
	{
		$$
		\der H
		=
		\der H
		\cons{\iseven{{\extv v_0}}}
		\cons{\isodd{{\extv{v_1}}}}
		\cdots
		\cons{\isodd{{\extv{v_n}}}}
		\cons{\isodd{H_{\mathsf c}}}
		$$
	}
	with $\der H \cons{ }\cdots \cons{ }$ is an $(n+2)$-ary context, $\intv v_i,\extv v_i$ are fresh propositional variables for all $i\in\set{0,\dots, n}$ and  
	{\small$$
		\isodd H_{\mathsf c}=\isodd{\Big(\Big(\big(v_1^\inv \limp \der {\isodd A_1} \ltens  \cdots \ltens  v_n^\inv \limp \der {\isodd A_n}\big)\imp  \der {\iseven A_0}\Big)\limp v_0^\inv\Big)}
		$$}

	\def\path{\mathsf{p}}
	\def\derpath{\der\path}
	\textit{Step 2: prove that $\der\cgG$ is an arena net.}

	We observe that, by definition of $\der\cgG$, every path $\derpath$ in $\linkgraph{\der\cgG}$ can be constructed from a checked path $\path$ in $\linkgraph\cgG$ by induction: 
	\begin{itemize}
		\item the empty path is a path in both  $\linkgraph\cgG$ and $\linkgraph{\der\cgG}$;
		
		\item if $\path=v \cdot \path'$ then 	
		\begin{itemize}
			\item if ${v}\in \avertices[\cgG]$, then $\derpath=v \cdot \derpath'$;
			\item if $\isodd{v}\in\mvertices[\cgG]$, then $\derpath=\extv v \cdot v^\inv \cdot \derpath'$;
			\item if $\iseven v\in \mvertices[\cgG]$, then $\derpath=v^\inv  \cdot \extv v  \cdot \derpath'$;
		\end{itemize}
	\end{itemize}
	We remark that the parity of atomic vertices is preserved by $\partial$, while  the parity of a modal vertex $v\in\vertices[\cgG]$ is the same of  the corresponding vertex $\extv v \in \vertices[\der{\cgG}]$.
	Since if $\isodd v$ then $\extv v \axlink v^\inv $, and  if $\iseven v$ then $v^\inv \axlink \extv v $, then 
	we have that in $\der{\cgG}$ an even (odd) vertex may occur  only in a even (odd) position in a path in $\linkgraph{\cgG}$.
	We conclude since from any path in $\linkgraph{\der\cgG}$ we obtain a path in $\linkgraph\cgG$ by replacing every subpath $\extv v \axlink v^\inv $ and $v^\inv \axlink \extv v $ by a the corresponding modal vertex $v$ in $\cgG$.
	
	By this correspondence between checked paths in $\linkgraph{\cgG}$ and paths in $\linkgraph{\der\cgG}$ we conclude that $\linkgraph{\der\cgG}$ is acyclic and functional. That is, $\der\cgG$ is an arena net.

	\textit{Step 3: construct the derivation associated to $\der\cgG$.}
	Since $\der\cgG$ is an arena net, then we apply the result in \cite{ICP} to produce a derivation in $\IMLL$  of the formula $\der F$.
	In such a derivation, by functionality and functoriality of $\cgG$, 
	whenever $v$ and $w$ are modal vertices such that $v\axlink[\cgG]w$, then
	if a path in $\linkgraph{\der \cgG}$ contains $v^\inv$, then it also contains $\extv v$, $w^\inv$, $\extv w$.
	This means that  if $\mathsf c= \set{\iseven v_0,\isodd v_1, \dots, \isodd v_n}$ is an $\axeq[\cgG]$-class of vertices in $\medge[\cgG]$, then  any derivation of $\der F$ in $\IMLL$ contains a subderivation of the sequent 
	$
	\extv v_1, \dots, \extv v_n, \spalletto{\isodd H_c} \vdash \extv v_0
	$
	of the following form
	{\scriptsize
		$$
		\vlderivation{
			{\vliin{}{\llimpr}
				{\extv v_1 , \dots, \extv v_n, \big( {\bigwedge_{i=1}^n (\intv v_i\imp \der {A_i}) \imp \der{A_0}} \big)\imp \intv v_0 \vdash \extv v_0 }
				{\vlin{}{\rimprule}{\extv v_1 , \dots, \extv v_n \vdash {\bigwedge_{i=1}^n (\intv v_i\imp \der {A_i}) \imp \der{A_0}}}{
						\vliq{}{\llandrule}{\extv v_1 , \dots, \extv v_n, {\bigwedge_{i=1}^n (\intv v_i \imp \der {A_i})} \vdash {\der{A_0}}}{ 
							\vliq{}{\limprule}{ \extv v_1 , \dots, \extv v_n, {\intv v_1 \imp \der {A_1}}, \dots, {\intv v_n \imp \der{A_n}} \vdash {\der{A_0}}}
							{
								\vlhy{
									\vlderivation{\vlin{}{\AXrule}{\extv v_1\vdash  \extv v_1}{\vlhy{}}}
									\; \cdots \; 
									\vlderivation{\vlin{}{\AXrule}{\extv v_n \vdash \extv v_n}{\vlhy{}}}
									\qquad
									\vlderivation{\vlpr{}{\IMLL}{{\der {A_1}}, \dots, {\der{A_n}}, \vdash {\der{A_0}}}}
					}}}}
				}
				{\vlin{}{\axrule}{\intv v_0 \vdash \extv v_0}{\vlhy{}}}}}
		$$
	}
	
	In order to construct a derivation in $\IMLLX$ of the formula $F$  it suffices to proceed by induction over the number of 
	$\axeq[\cgG]$-classes of modal vertices.
	Starting from the top of the derivation, we replace every such subderivation in the derivation of $\der{F}$ with an application of a $\Kbrule$-, $\Kdrule$- or $\Kurule$-rule, we remove all the occurrences of the formula  $\spalletto{H_{\mathsf c}}=\big( {\bigwedge_{i=1}^n (\intv v_i\imp \der {A_i}) \imp \der{A_0}} \big)\imp \intv v_0$ in the derivation, and we replace for each $i\in \intset 0n$ the atom $\intv v_i$ with the corresponding formula $\lab {v_i}A_i$  
	as shown in \Cref{fig:linearizedK}. For some example, refer to the lower line of \Cref{fig:NetToDer1,fig:NetToDer2}.

	The proof for $\CD$-arena nets is similar by considering the rule $\Drule$ instead of $\Kurule$.
\end{proof}


\begin{figure*}[h]
	\centering
	$
	\begin{array}{c@{\quad}c@{\quad }c}
		%
		%
		\begin{array}{cc@{\qquad}cccccccc}
			&{\vlbox1}_1  &  					&			&\\
			&					&{\vlbox2}_2 &			&{\vlbox0}_0
			\\[1em]
			{\va2} 	&					& {\va1} \\
			& {\vb1} 		&					&			&{\vb2}
		\end{array}
		\dedges{a2/b1}
		\modedges{box1/b1,box2/a1,box0/b2}
		\multidedges{b1,box1,a1,box2}{b2,box0}
		\axrbents{b1/b2}
		\axedges{a1/a2}
		\axlbents{box1/box0,box2/box0}
		\axedges{box1/box2}
		%
		%
		& \overset{\partial}{\rightsquigarrow} &
		%
		%
		\begin{array}{cc@{\qquad}cccccccc}
			{\vlbox4}_1^\outv&						&									&  						&			&					&{\vlbox3}_0^\outv	\\
			& {\vlbox1}_1^\inv &{\vlbox5}_2^\outv		&{\vlbox2}_2^\inv &			&{\vlbox0}_0^\inv \\
			\\
			{\va2} 		&					&							& {\va1} \\
			& {\vb1} 		&							&						&			&{\vb2}
		\end{array}	
		\dedges{a2/b1, b1/b2,a1/b2}
		\dedges{box1/b1,box2/a1,b2/box0}
		\dedges{box0/box3,box5/box3, box4/box3}
		\axedges{a1/a2,box5/box2, box4/box1}
		\axrbents{b1/b2,box0/box3}
		%
		%
		\\[4em]
		\Updownarrow & & \Updownarrow
		\\[-.5em]
		%
		%
		{\scriptsize
			\vlderivation{\vlin{}{\Kbrule}{{\lbox(a\limp b)}, \lbox  a \vdash  \lbox  b}
				{\vliin{}{\limprule}{{a\imp b},  a \vdash a}{\vlin{}{\AXrule}{ a, \vdash a}{\vlhy{}}}{\vlin{}{\AXrule}{ b, \vdash b}{\vlhy{}}}}
		}}
		& \overset{\partial}{\rightsquigarrow} &
		%
		%
		{\scriptsize
			\vlderivation{
				{\vliin{}{\limprule}
					{\extv \lbox_1 , \extv\lbox_2, {\big(\big( (\intv \lbox_1 \limp (a\imp b)) \land ( \intv \lbox_2\imp a) \big) \imp b\big)\imp {\intv\lbox_0} }\vdash \extv \lbox_0 }
					{
						\vlin{}{\rimprule}{\extv\lbox_1 ,\extv \lbox_2 \vdash { \big( (\lbox_1^\inv\imp (a\limp b)) \land ( \lbox_2^\inv\imp a) \big) \imp b}}{
							\vlin{}{\llandrule}{ \extv\lbox_1 , \extv\lbox_2, {(\lbox_1^\inv\limp (a\imp b)) \land ( \lbox_2^\inv\imp a)} \vdash  b}{
								\vliq{}{\limprule}{ \extv\lbox_1, \extv\lbox_2, {\intv\lbox_1\imp (a\limp b)}, {\intv\lbox_2\imp a} \vdash  b}
								{
									\vlhy{
										\vlderivation{\vlin{}{\AXrule}{\extv\lbox_1\vdash \intv\lbox_1}{\vlhy{}}}
										\quad
										\vlderivation{\vlin{}{\AXrule}{\extv\lbox_2\vdash  \intv\lbox_2}{\vlhy{}}}
										\qquad
										\vlderivation{\vliin{}{\limprule}{{a\imp b},  a \vdash a}{\vlin{}{\AXrule}{ a\vdash a}{\vlhy{}}}{\vlin{}{\AXrule}{ b \vdash b}{\vlhy{}}}}
						}}}}
					}
					{\vlin{}{\AXrule}{\intv \lbox_0 \vdash  \extv \lbox_0}{\vlhy{}}}}}
		}
	\end{array}
	$
	\caption{A $\K$-arena net $\cgG$ with associated formula $\lbox(a\imp b)\imp(\lbox a\imp\lbox b)$ (the axiom $\krule_1$), its corresponding arena $\der\cgG$, the derivations associated to  $\cgG$ and  $\der\cgG$
	}
	\label{fig:NetToDer2}
\end{figure*}

\begin{figure*}[ht]
	\centering
	$
	\vlderivation{
		\vlde{\dD			
			\cons{\extv\lbox_0}
			\cons{\extv\lbox_1}
			\cdots
			\cons{\extv\lbox_n}
			\cons{\spalletto {H_{\mathsf c}} }
		}{\IMLL}
		{{\der F
				\cons{\extv\lbox_0}
				\cons{\extv\lbox_1}
				\cdots
				\cons{\extv\lbox_n}
				\cons{\spalletto{ H_{\mathsf c} }}}
		}
		{
			\vlde{\der{\dD'}}{\IMLL}
			{\extv\lbox_1 , \dots, \extv\lbox_n , \spalletto {H_{\mathsf c}}\vdash  \extv \lbox_0}
			{
				\vlhy{
					\vlderivation{
						\vlpr{{\dD''}}{\IMLL}{ {A_1}, \dots, {A_n}, \vdash {A_0}}}
		}}}
	}
	\rightsquigarrow
	\qquad
	\vlderivation{
		\vlde{\dD
			\cons{\lbox  A_0}
			\cons{\lbox  A_1}
			\cdots
			\cons{\lbox  A_n}
			\cons{\emptyset}		
		}{\IMLLX}{F\cons{\lbox_0 A_0}\cons{\lbox_1 A_1}\cdots\cons{\lbox_n A_n}}
		{
			\vlin{}{\Kbrule}{\lbox  A_1, \dots, \lbox A_n \vdash \lbox  A_0}{
				\vlpr{\dD''
				}{\IMLLX}{A_1, \dots,  A_n \vdash  A_0}
			}
		}
	}
	$
	\caption{An example of the construction of the derivation of $F$ from the derivation of $\der F$ assuming that in $\cgG$ there is only one $\axeq$-class of the form  $\set{\lbox_0, \dots, \lbox_n}$}
	\label{fig:linearizedK}
\end{figure*}

We summarize the results of this section (Lemmas~\ref{thm:arenaCompleteness} and \ref{thm:arenaSoundness}) by the following
\begin{theorem}\label{thm:linSeq}
	Let $\X\in\set{\CK,\CD}$ and $\cgG$ be a modal arena with associated formula $F$, then
	$$\cgG \mbox{ is a  $\X$-arena net} \iff  \provevia{\IMLLX}F$$
\end{theorem}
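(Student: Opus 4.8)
The plan is to recognise the stated equivalence as nothing more than the conjunction of the two principal lemmas of this section, with each implication supplied by one of them; no argument beyond these lemmas is needed. For the implication from left to right, that $\cgG$ being a $\X$-arena net with associated formula $F$ entails $\provevia{\IMLLX} F$, I would cite \Cref{thm:arenaSoundness} verbatim. For the converse, that $\provevia{\IMLLX} F$ entails that $\cgG$ is a $\X$-arena net, I would cite \Cref{thm:arenaCompleteness}, which manufactures a $\X$-arena net whose underlying modal arena is $\arof F$ by reading the partition $\axeq$ off the given derivation through the rule translation of \Cref{fig:marules}.

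The only delicate point is the matching of quantifiers, and it is this that I would be careful to phrase correctly rather than grind through. A $\X$-arena net carries the extra datum of the equivalence relation $\axeq$, whereas the hypothesis only fixes $\cgG$ as a modal arena with associated formula $F$, i.e.\ (by \Cref{thm:Marena}) as the underlying graph $\arof F$. Since a badly chosen partition of $\arof F$ need not satisfy the net conditions of \Cref{def:KarenaNet} even when $F$ is provable, the left-hand side of the equivalence must be read with $\axeq$ existentially quantified: ``$\cgG$ is a $\X$-arena net'' abbreviates ``$\arof F$ admits a partition making it a $\X$-arena net''. Under this reading \Cref{thm:arenaCompleteness} supplies exactly the witnessing partition, and \Cref{thm:arenaSoundness} applies to any such witness, so the two implications compose into the claimed biconditional.

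Consequently the theorem itself is immediate and the real work has already been discharged in the two lemmas, which I am entitled to assume: the completeness half hinges on verifying that each rule of \Cref{fig:marules} preserves the respective net conditions, while the soundness half hinges on the $\partial$-construction linearising a modal arena net to an ordinary arena net, the appeal to the non-modal correspondence of \cite{ICP}, and the reinsertion of the modal rules $\Kbrule,\Kdrule,\Kurule,\Drule$ to rebuild an $\IMLLX$-derivation. I expect no further obstacle at the level of the summary statement.
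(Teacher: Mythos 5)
Your proposal is correct and matches the paper exactly: the paper proves \Cref{thm:linSeq} purely as a summary of Lemmas~\ref{thm:arenaCompleteness} and~\ref{thm:arenaSoundness}, one for each implication. Your remark that ``$\cgG$ is a $\X$-arena net'' must be read with the partition $\axeq$ existentially quantified is a sensible clarification of a point the paper leaves implicit, not a deviation from its argument.
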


\section{Skew fibrations}\label{sec:skew}

In this section we define specific maps between \mas to model the application of the deep inference rules in \Cref{fig:deepRules}.

If $v,w$ are two vertices in an \ma, a \emph{meeting point} of $v$ and $w$ is a vertex $u$ such that $ v\iedge^* u$ and $ w \iedge^* u $, and such that if there is $u'$ such that 
$ v\iedge^* u'$ and $ w \iedge^* u' $, then $u\iedge^* u'$.
The \emph{meeting  depth} 
of $v$ and $w$ is the minimum of the depth of their meeting point or $-1$ if no such vertex exists.
Two distinct vertices $v$ and $w$ are \emph{conjunct}, denoted $v\conj w$ if  their meeting depth is odd;
they are \emph{disjunct}, denoted $v\disj w$ if their meeting depth is even.

\begin{definition}[Skew Fibration]
	An \emph{arena homomorphism} is 
	either
	a map $\cempty_\gG \colon \emptyset \to \gG$ from the empty \ddag to an \ma $\gG$, 
	or
	a map $\cf \colon \gH \to \gG$ between two \mas $\gH$ and $\gG$ mapping $\vertices[\gH]$ to $\vertices[\gG]$ in such a way it preserves:
	
	\begin{tabular}{c@{\quad : \quad}l}
		$\iedge$
		&
		if $v\iedge[\gH] w$ then $\cf(v) \iedge[\gG] \cf(w)$;
		\\
		$\medge$
		&
		if $v\medge[\gH] w$ then $\cf(v) \medge[\gG] \cf(w)$;
		\\
		$\depsym$
		&
		$\dep v = \dep{\cf(v)}$;
		\\
		$\labsym $
		&
		$\lab v = \lab {\cf(v)}$.
	\end{tabular}

	An arena homomorphism is \emph{modal} whenever:
	\begin{itemize}
		\item 
			if $\cf(v)\medge[\gG] \cf(u)$, then  $w \medge[\gH] u$ and $\cf(v)=\cf(w)$ for a $w\in\vertices[\gG]$.
	\end{itemize}
	
	An \emph{(even) skew fibration}  is a modal arena homomorphism $\cf \colon \gH \to \gG$ which:
	\begin{itemize}
		\item preserves $\conj$: if $v\conj[\gH] w$ then $\cf(v) \conj[\gG] \cf(w)$;
		
		\item is a \emph{skew lifting}: 
		if $\cf(v)\conj[\gG] w $, then there exists $u$ with $v \conj[\gH] u$ and $\cf(u)\nconj[\gG]w$;
	\end{itemize}
	
	An \emph{odd skew fibration}  is either a map $\cempty_\gG \colon \emptyset \to \gG$, or a modal arena homomorphism $\cf \colon \gH \to \gG$ which:
	\begin{itemize}
		\item preserves $\disj$: if $v\disj[\gH] w$ then $\cf(v) \disj[\gG] \cf(w)$;
		
		\item is a \emph{odd skew lifting}:
		if $\cf(v)\disj[\gG] w $, then there exists $u$ with $v \disj[\gH] u$ and $\cf(u)\ndisj[\gG]w$;
	\end{itemize}
\end{definition}

\begin{remark}
	In \cite{ICP} the definition of skew fibration only demands the weaker \emph{root preserving} condition (that is, if $v\in \irof[\gH]$ then $\cf(v)\in \irof[\gG]$) instead of the depth preserving condition we propose here.
	However, in the same paper it is proven that root preserving is equivalent to the depth preserving for even and odd skew fibrations between arenas.
\end{remark}

In order to prove the results in this section, it is useful to highlight the correlations between mutual position of nodes in the formula tree $\ftree F$ of a formula $F$ and the presence of $\iedge$- or $\medge$-edges between the corresponding vertices in  $\arof F$.

\begin{definition}
	If $F$ is a formula, the \emph{formula tree} of $F$ is the tree $\ftree F$ with nodes labeled by $\imp$, $\land$ or $\unit$ symbols,  atoms, and  modalities occurring in $F$. It is defined inductively as follows:
	
	\begin{itemize}
		\item if $F=a$, then $\ftree F$ is the with a single node labeled by $a$;
		
		\item if $F=A\imp B$ (respectively $F=A\land B$), then $\ftree F$ is the tree with root labeled by $\imp$ (respectively $\land$) with children the roots of $\ftree A$ and $\ftree B$;
		
		\item If $F=\lbox A$ ($F=\ldia A$) for a formula $A$, then $\ftree F$ is the tree with root labeled by $\lbox$ (respectively $\ldia$) which has as one child the root of $\ftree A$;
		
		\item If $F=\ldia \unit$, then $\ftree F$ is the tree with root labeled by $\ldia$ and a single child labeled by $\unit$;
	\end{itemize}
\end{definition}

\begin{example}\label{ex:ftree}
	Let $F= \lbox(\ldia\unit \imp (a\land b))   \imp ( (c\imp d) \land e)$ be a formula. The  formula tree $\ftree F$ is the following
	$$
	\begin{array}{ccccccccccccc}
		&				&		&			&	\vimp3		&								\\
		&\vlbox1 &			&			&  			&		&				&\vand2	\\[.5em]	
		&\vimp1	&			&				&		&		&\vimp2		&& \ve1		\\		
		\vldia1		 &				&			&\vand1&			&\vc1& &\vd1					\\[.5em]	
		\vunit1 	&				&\va1	&			&\vb1	
	\end{array}
	\treeedges{imp3/box1,imp3/and2, box1/imp1,imp1/dia1,dia1/unit1,imp1/and1,and1/a1,and1/b1,and2/imp2, and2/e1,imp2/c1,imp2/d1}
	$$
\end{example}

\begin{definition}
	In a formula tree $\ftree F$, we call a node the  \emph{left-hand side child} (\emph{right-hand side child}) of a $\imp$-node if it corresponds to the root of the left-hand side (respectively the right-hand side) subformula formula of the implication.
	
	If $v$ is a node of a formula tree $\ftree F$, we say that a node $w$ is a \emph{rightmost descendant} of $v$ if there is a path from $v$ to $w$ in $\ftree F$ containing no left-hand side child of any $\imp$-node.
	If $v$ is a $\imp$-node of a formula tree $\ftree F$, we say that a node $w$ is a \emph{second-rightmost descendant} of $v$ if it is a rightmost descendant of its left-hand side child.	
\end{definition}

By means of example, consider the formula tree of $F= (\lbox(\ldia\unit \imp (a\land b)))   \imp ( (c\imp d) \land e)$
given in Example~\ref{ex:ftree}. 
The left-hand side child of the  root of $\ftree F$ is the root of $\ftree{{\lbox(\ldia\unit \imp (a\land b))}}$ while its the right-hand side child is the root of $\ftree{{(c\imp d) \land e}}$.
The set of rightmost and second-rightmost nodes of the root $\imp$ are respectively  $\set{d,e}$ and $\set{\lbox, a,b}$.

\begin{remark}\label{rem:edgesAndTrees}
	Let $F$ be a formula and $\ftree F$ the formula tree of $F$. 
	If we identify the atom or modality $x$ occurring in $F$ with the corresponding the node of  $\ftree F$ and with the  unique $x$-labeled vertex $\singlevertex[x]$ in $\arof F$, then we have the following correspondence:
	\begin{itemize}
		\item $\singlevertex[a]\iedge[\arof F] \singlevertex[b]$  iff the least common ancestor of $a$ and $b$ in $\ftree F$ is a $\imp$, and $a$ and $b$ are respectively a second-rightmost descendant and a rightmost descendant of the least common ancestor;
		
		\item $\singlevertex[m]\medge[\arof F] \singlevertex[x]$ with iff $x$ is a rightmost descendant of $m\in \set{\lbox, \ldia}$.
	\end{itemize}
	Moreover, $\dep x$ is equal to the number of left-hand side children of $\imp$-nodes occurring in  the path from the root of $\ftree F$ to the node $v$.
\end{remark}

\begin{lemma}\label{lemma:skewComp}
	The composition of two skew fibrations is a skew fibration.
\end{lemma}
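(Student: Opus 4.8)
The plan is to verify, for (even) skew fibrations $\cf\colon\gH\to\gG$ and $\cg\colon\gG\to\gK$, that $\cg\circ\cf\colon\gH\to\gK$ satisfies every defining clause, treating the clauses in three groups: the plain arena-homomorphism clauses (preservation of $\iedge$, $\depsym$ and $\labsym$), the two modal clauses (preservation of $\medge$ and the modal lifting condition), and the two even skew-fibration clauses (preservation of $\conj$ and skew lifting). The first group composes immediately: each of $\iedge$, $\depsym$ and $\labsym$ is preserved by $\cf$ and then by $\cg$, hence by $\cg\circ\cf$; the same pointwise argument gives preservation of $\medge$.

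For the even skew-fibration clauses I would reduce to \cite{ICP}. Both $\conj$ and the meeting depths it is defined from depend only on the underlying $\iedge$-arena $\tuple{\vertices[\gH],\iedge[\gH]}$ of an \ma, which is an arena by definition. Hence, forgetting $\medge$, the maps $\cf$ and $\cg$ are even skew fibrations of arenas in the sense of \cite{ICP}, and since the underlying vertex map of $\cg\circ\cf$ is their set-theoretic composite, the composition argument for arena skew fibrations of \cite{ICP} shows that $\cg\circ\cf$ preserves $\conj$ and is a skew lifting. The odd case is dual, replacing $\conj$ by $\disj$.

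The real work is the modal lifting clause, which I expect to be the main obstacle. Suppose $v,u\in\vertices[\gH]$ with $\cg(\cf(v))\medge[\gK]\cg(\cf(u))$. Applying the modal clause of $\cg$ to $\cf(v),\cf(u)\in\vertices[\gG]$ produces $p\in\vertices[\gG]$ with $p\medge[\gG]\cf(u)$ and $\cg(p)=\cg(\cf(v))$. I would then like to lift $p$ through $\cf$, i.e. apply the modal clause of $\cf$ to $p\medge[\gG]\cf(u)$ to obtain $w\in\vertices[\gH]$ with $w\medge[\gH] u$ and $\cf(w)=p$, so that $\cg(\cf(w))=\cg(p)=\cg(\cf(v))$, as required. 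The difficulty is that the modal clause of $\cf$ applies only when the source of the edge already lies in the image of $\cf$, whereas the $p$ produced by $\cg$ need not, since the intermediate arena $\gG$ may contain modalities above $\cf(u)$ that are not hit by $\cf$.

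To close this gap I would exploit the rigidity of modalities. By the modal axioms (transitivity, and $x\medge z,\ y\medge z\Rightarrow x\dmedge y$) the set $C=\set{m\mid m\medge[\gG]\cf(u)}$ is totally ordered by $\medge$, and by \Cref{lemma:modalities} every vertex of $C$ has the same $\iedge$-behaviour, hence the same depth, as $\cf(u)$. Moreover $\cg$ is injective on $C$: identifying two distinct $\medge$-comparable vertices would create a $\medge$-self-loop in the acyclic $\gK$. The modal clause of $\cf$ at $u$ shows that $\cf$ maps the chain $\set{m\mid m\medge[\gH] u}$ exactly onto $C\cap\mathrm{im}(\cf)$. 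Using $\cg(p)=\cg(\cf(v))$ together with the injectivity of $\cg$ on $C$, I would identify $p$ with an element of $C\cap\mathrm{im}(\cf)$, i.e. with $\cf(w)$ for some $w\medge[\gH] u$. The delicate case — and precisely where I expect the interaction of the modal clauses with the $\conj$/$\iedge$ structure to be needed — is when $\cf(v)$ is \emph{not} itself above $\cf(u)$ in $\gG$ yet is $\cg$-identified with an ``extra'' modality $p\in C\setminus\mathrm{im}(\cf)$ introduced by weakening; ruling this out (or exhibiting the required $w$ directly) is the crux of the modal clause. Once it is established, combining it with the three groups above yields that $\cg\circ\cf$ is a skew fibration.
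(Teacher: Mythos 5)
Your handling of the arena-homomorphism clauses and of the $\conj$-preservation and skew-lifting clauses is correct and is essentially the paper's argument: the paper likewise observes that the composite preserves $\iedge$, $\medge$ and $\depsym$ pointwise and that the $\conj$/lifting conditions then follow, since meeting depths depend only on the underlying $\iedge$-arena, so the reduction to the non-modal composition result of \cite{ICP} is legitimate.

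The gap is exactly where you place it, and you do not close it. For the modal clause you need, from $\cg(\cf(v))\medge[\gK]\cg(\cf(u))$, some $w\medge[\gH]u$ with $\cg(\cf(w))=\cg(\cf(v))$. The modal clause of $\cg$ yields $p\medge[\gG]\cf(u)$ with $\cg(p)=\cg(\cf(v))$, and your chain argument (injectivity of $\cg$ on $C=\set{m\mid m\medge[\gG]\cf(u)}$, and $\cf$ mapping $\set{m\mid m\medge[\gH]u}$ onto $C\cap\cf(\vertices[\gH])$) settles the matter only when $\cf(v)\in C$, which forces $p=\cf(v)$. The remaining case --- $p\notin\cf(\vertices[\gH])$ while $\cg$ identifies it with a $\cf(v)\notin C$ --- is precisely what you flag as ``the crux'' and leave open, so the proposal is incomplete as a proof. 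The missing ingredient is the observation that a modality of $\gG$ whose $\medge$-scope meets the image of a skew fibration must itself lie in the image; this is where the fact that $\gG$ is the arena of a \emph{formula} enters (compare the paper's remark immediately after the lemma), and it is delivered either by the structural decomposition of skew fibrations (\Cref{lemma:arHomMod} and \Cref{thm:skewDec}: if $\cf$ hits the sub-arena $\singlevertex[p]\gmod\gG''$ at all, it hits $p$) or by inspecting that the deep rules of $\DOWN{\PIp}$ never introduce a modality above a pre-existing vertex. With that in hand, $p=\cf(v')$ for some $v'$ and the modal clause of $\cf$ finishes the argument. For what it is worth, the paper's own proof is a one-line assertion (``the modal condition of the composition is guaranteed as consequence of the preservation of $\medge$'') that does not engage with this case either; you have located the real content of the lemma more precisely than the paper does, but you have not supplied it.
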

\begin{proof}
	By definition of skew fibration (see \Cref{app:proofs:skew}).
\end{proof}
\begin{proof}
	By definition, the composition of two modal skew fibrations preserves $\iedge$, $\medge$, $\axeq$ and $\depsym$.
	Then the preservation of $\conj$ and the skew lifting condition of the composition are guaranteed as consequence of the preservation of $\depsym$ and $\iedge$.
	Similarly, the modal condition of the composition is guaranteed as consequence of the preservation of $\medge$.
\end{proof}

\begin{remark}
	Note that the proof of Lemma~\ref{lemma:skewComp} makes crucial use of the fact that
	we talk about arena homomorphisms, and an arena is always associated
	to a formula. In classical logic~\cite{hughes:pws,acc:str:CPK} a
	skew fibration is defined as a homomorphism between arbitrary
	graphs, and the composition of skew fibrations is only a skew
	fibration if those graphs are associated to formulas.
\end{remark}

We are now able to prove the correspondence between $\DOWN{\PIp}$ derivations from a \pformula $H'$ to $H'$  and skew fibrations between their corresponding arenas.

\begin{lemma}\label{thm:derToSkewfib}
	For any \pformulas $H'$ and $H$, if $H' \provevia{\DOWN{\PIp}} H$, then there is a skew fibration $\cf \colon \arof{H'} \to \arof{H}$.
\end{lemma}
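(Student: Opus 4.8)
The plan is to induct on the length of the $\DOWN{\PIp}$-derivation. When the derivation is empty we have $H'=H$, so $\arof{H'}=\arof{H}$ and the identity is a skew fibration; and since skew fibrations compose (Lemma~\ref{lemma:skewComp}), it suffices to produce a skew fibration $\arof P\to\arof C$ for a single instance of each of the four rules $\deep\wdrule,\deep\wtrule,\deep\wirule,\deep\crule$, where $P$ and $C$ are premise and conclusion. Composing these along the derivation then yields $\cf\colon\arof{H'}\to\arof{H}$.

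Each rule acts deep inside a context $\Gamma\cons{}$, so I would first isolate the \emph{principal} map on the active subformula and then propagate it through the context. Reading off the polarity annotations (an even-polarised active formula sits at even depth, an odd-polarised one at odd depth), the principal maps are: $\deep\wirule$ gives the inclusion $\arof A\hookrightarrow\arof B\gimp\arof A$ of the right $\gimp$-component; $\deep\wtrule$ gives the inclusion $\arof A\hookrightarrow\arof A\gsum\arof B$; $\deep\crule$ gives the fold $\arof A\gsum\arof A\twoheadrightarrow\arof A$; and $\deep\wdrule$ gives the map $\ssdia\to\ssdia\gmod\arof A$ sending the single $\ldia$-vertex to the $\ldia$-vertex. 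One checks directly that the first is an \emph{even} skew fibration and the last three are \emph{odd} ones: preservation of $\labsym$, $\depsym$, $\iedge$ and $\medge$ is immediate from the definitions of $\gsum,\gimp,\gmod$, while the $\conj$/$\disj$-preservation and the (odd) skew-lifting clauses follow because cross-component vertices are disjunct across a $\gimp$ but conjunct across a $\gsum$, and a fresh modal vertex $\ssdia$ has meeting depth $-1$ with every vertex of its cone (so it is conjunct, never disjunct), which makes the relevant lifting clause either vacuous or dischargeable using nonemptiness of the active arena. The modal-homomorphism clause is vacuous for $\deep\wdrule$ (the source is a single vertex, so no image $\medge$-edge can arise) and is inherited from the identity on the inactive part in the other three cases.

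The key device is a \emph{contextuality lemma} stating that the arena operations transport skew fibrations, with a parity flip occurring exactly at the left argument of $\gimp$: if $h$ is a skew fibration then $h\gsum\id$, $\id\gsum h$, $\id\gimp h$ and $\id_{\ssbox}\gmod h$ (likewise with $\ssdia$) preserve the type of $h$, whereas $h\gimp\id$ exchanges even and odd. Iterating this from the active formula up to the root of $\Gamma\cons{}$, an even skew fibration at an even-depth hole and an odd skew fibration at an odd-depth hole \emph{both} yield an even skew fibration on the whole arena, which is what the statement asks for. The non-modal instances ($\gsum$, $\gimp$) are adaptations of the corresponding results of \cite{ICP}; the new content is the $\gmod$-case, where I must verify that $\id_{\ssbox}\gmod h$ preserves the modal-homomorphism clause and the lifting conditions. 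The crucial observations there are that $\gmod$ adds $\medge$-edges only from the modal vertex to the $\iedge$-roots of the inner arena (which $h$ maps to $\iedge$-roots, so these edges reflect), that $\gmod$ adds no $\iedge$-edges and hence leaves every meeting point inside the inner arena unchanged, and that the modal vertex is conjunct to its entire cone; this last point lets the $v=\ssbox$ case of skew-lifting be settled by applying the lifting of $h$ to any vertex of $\vertices[\gP']\neq\emptyset$.

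I expect the main obstacle to be precisely this contextuality lemma for $\gmod$, together with getting the parity bookkeeping right so that the three odd-position rules and the one even-position rule all assemble into an \emph{even} skew fibration on the full arenas. A secondary subtlety is confirming that enlarging the cone of a modality by a weakening introduces no spurious $\conj$/$\disj$ relations and no forbidden \ma-configuration across the $\gmod$-boundary, since it is this that guarantees the modal-homomorphism and skew-lifting clauses survive the propagation through the context.
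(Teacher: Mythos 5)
Your proposal follows the same skeleton as the paper's proof --- reduce to a single instance of each of $\deep\wdrule$, $\deep\wtrule$, $\deep\wirule$, $\deep\crule$ and then compose via Lemma~\ref{lemma:skewComp} --- but it executes the per-rule check differently. The paper verifies each rule \emph{directly on the whole arenas}: it describes how $\ftree{H}$ is obtained from $\ftree{H'}$ and reads off preservation of $\iedge$, $\medge$, $\depsym$, $\conj$ and the lifting condition from Remark~\ref{rem:edgesAndTrees}, using the parity of the position of the active subformula only implicitly (e.g.\ ``the roots of $\arof{\isodd A}$ have odd depth in $\arof H$''). You instead factor each rule into a principal map on the active subarena (inclusion into $\gsum$ or the right component of $\gimp$, the fold, and $\ssdia\to\ssdia\gmod\arof A$), classify it as an even or odd skew fibration, and transport it through the context by a contextuality lemma for $\gsum$, $\gimp$, $\gmod$ with a parity flip at the left argument of $\gimp$. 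That transport lemma is essentially the \emph{converse} of the paper's Lemma~\ref{thm:skewDec} (built from the operations of Definition~\ref{def:skewOp}), which the paper proves only in the decomposition direction and uses only for the opposite implication (Theorem~\ref{thm:skewfibToDer}); so you would have to prove the composition direction yourself, with the $\gmod$ case being the genuinely new content, exactly as you flag. Your route is more modular and makes the parity bookkeeping explicit where the paper leaves it tacit, at the cost of an extra lemma; the paper's route is shorter but leans on Remark~\ref{rem:edgesAndTrees} to discharge the $\conj$/$\disj$ and lifting conditions in one or two sentences per rule. One reassurance for your stated worry about $\deep\wdrule$: when the fresh cone destroys a meeting point whose least common successor was the $\ssdia$ itself, the meeting depth moves from an odd value to $-1$, which is still odd, so no $\conj$/$\disj$ relation is disturbed; and meeting points strictly above the $\ssdia$ survive because the modal-arena axioms force the cone's $\iedge$-roots to inherit the $\iedge$-predecessors of the $\ssdia$.
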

\begin{proof}
	
	If we prove that for all $\rho\in\set{\deep\wdrule, \deep\wtrule ,\deep\wirule,\deep\crule}$ if  $\vlinf{}{\rho}{H}{H'}$ then there is a skew fibration $\cf \colon \arof{H'} \to \arof{H}$, we can conclude by Lemma~\ref{lemma:skewComp}.
	We proceed by case analysis.
	
	If $\rho= \deep\wtrule$, then $H'= \Gamma \cons{\isodd A}$ and $H= \Gamma \cons{\isodd{A \ltens B}}$.
	In particular, we can obtain $\ftree H$ from $\ftree{{H'}}$
	by
	removing the formula subtree $\ftree A$
	and replace its root 
	with a $\ltens$-node with children the roots of  $\ftree A$ and $\ftree B$.
	After Remark~\ref{rem:edgesAndTrees}, the arena homomorphism $\cf \colon \arof{H'} \to \arof{H}$ preserves 
	$\iedge$, $\medge$, $\axeq$, $\depsym$ and $\conj$ by definition. 
	Moreover, since the map is injective, modal  condition is trivially satisfied, while the skew lifting immediately follow by the fact that the roots of $\arof{\isodd A}$ have odd depth in $\arof H$.
	Thus $\cf$ is a skew fibration.

	If $\rho= \deep \wirule$, conclude by a similar reasoning. In this case, we obtain $\ftree H$ from $\ftree{{H'}}$
	by
	removing the formula subtree $\ftree A$
	and replace its root 
	with a $\limp$-node with left-hand side child the root of  $\ftree B$  and right-hand side child the root of $\ftree A$.
	Since the root of $\ftree A$ is the right-hand side child of a $\limp$, arena homomorphism $\cf \colon \arof{H'} \to \arof{H}$  is a skew fibration since it is injective and no new $\iedge$ or $\medge$ are added between the vertices in $\arof H$ image of vertices in $\cf$.

	If $\rho= \deep \wdrule$, conclude by a similar reasoning. In particular, it suffices to replace in $\ftree{H'}$ a leaf labeled by a $\unit$ with the tree of the weakened formula.

	If $\rho= \deep \crule$, then $H'= \Gamma\cons{\isodd{(A_1 \ltens A_2)}}$ and $H= \Gamma\cons{\isodd{A}}$, that is, 
	and $\ftree {H'}$ has the  two identical subtrees $\ftree F$ and their roots are children of a same node labeled by   $\ltens$. 
	Thus  $\ftree H$ can be obtained from $\ftree{H'}$ by removing the node labeled by $\ltens$ and replace it with the root of $\ftree F$.
	After Remark~\ref{rem:edgesAndTrees}, the arena homomorphism $\cf \colon \arof{H'} \to \arof{H}$ preserves 
	$\iedge$, $\medge$, $\axeq$ and $\depsym$. Moreover $\cf$ is surjective, then it is a skew lifting and preserves $\conj$.
	Moreover $\cf$ is modal since it is injective on $\Gamma\cons{}$ and whenever $\cf(v)\medge \cf(u)$, then either $v$ and $u$ are both vertices in $\arof{A_1}$ or in $\arof{A_2}$, or we have $v'$ and $u'$ such that $\cf(v)=\cf(v')$, $\cf(u)=\cf(u')$, $v\medge[\arof{A_i}] u'$, $v \medge[\arof{A_j}] v$ with $i,j\in \set{1,2}$, $i\neq j$.
	We conclude that $\cf$ is a skew fibration.
\end{proof}

In order to prove the converse result, we need some additional lemmas.

\begin{lemma}\label{lemma:arHomSum}
	If $\cf\colon \gH \to \gG$ is an arena homomorphism and  
	$\gG=\gG_1\gsum \gG_2$,
	then $\cf=\cf_1\gsum \cf_2$ with 
	$\cf_1\colon \gH_1 \to \gG_1$
	and 
	$\cf_2\colon \gH_2 \to \gG_2$
	arena homomorphisms
	for some $\gH_1, \gH_2$ such that $\gH=\gH_1\gsum \gH_2$.
\end{lemma}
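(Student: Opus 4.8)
The plan is to obtain the decomposition of $\gH$ by pulling back along $\cf$ the partition of $\gG$ induced by the sum. Recall that $\gG = \gG_1 \gsum \gG_2$ means that $\vertices[\gG]$ is the disjoint union $\vertices[\gG_1] \uplus \vertices[\gG_2]$ and that $\gsum$ creates no edges across the two parts, so that no $\iedge$- or $\medge$-edge of $\gG$ joins a vertex of $\gG_1$ to a vertex of $\gG_2$. Accordingly I would set $\vertices[\gH_1] = \cf^{-1}(\vertices[\gG_1])$ and $\vertices[\gH_2] = \cf^{-1}(\vertices[\gG_2])$, take $\gH_1$ and $\gH_2$ to be the subgraphs of $\gH$ induced by these vertex sets, and put $\cf_i = \restr{\cf}{\vertices[\gH_i]}$. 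Since $\vertices[\gG_1]$ and $\vertices[\gG_2]$ partition $\vertices[\gG]$, the sets $\vertices[\gH_1]$ and $\vertices[\gH_2]$ partition $\vertices[\gH]$.

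The crucial step is the separation property: $\gH$ has no $\iedge$- or $\medge$-edge between $\vertices[\gH_1]$ and $\vertices[\gH_2]$. Indeed, if $v \in \vertices[\gH_1]$ and $w \in \vertices[\gH_2]$ satisfied $v \iedge w$, then since $\cf$ preserves $\iedge$ we would get $\cf(v) \iedge \cf(w)$ with $\cf(v) \in \vertices[\gG_1]$ and $\cf(w) \in \vertices[\gG_2]$, contradicting the absence of crossing edges in $\gG_1 \gsum \gG_2$; the same argument handles $\medge$ and the reversed orientation. Hence every edge of $\gH$ lies entirely inside $\gH_1$ or inside $\gH_2$, so that $\iedge[\gH]$ and $\medge[\gH]$ are the disjoint unions of the edge sets of $\gH_1$ and $\gH_2$, which together with the vertex partition gives exactly $\gH = \gH_1 \gsum \gH_2$.

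Next I would check that $\gH_1$ and $\gH_2$ are \mas. Each is an induced subgraph of the \ma $\gH$, and every clause of Definition~\ref{def:ma} (the L-free and $\Sigma$-free conditions, modality of $\medge$, and proper labelling) is phrased purely through $\iedge$- and $\medge$-edges among a bounded set of vertices. By the separation property, the vertices appearing in any instance of such a clause relative to $\gH_i$ are mutually connected by edges and hence all lie in the same component; the instance is therefore already an instance relative to $\gH$, and its conclusion (an edge or a non-edge among these vertices) transfers back to the induced subgraph. Should one of the two preimages be empty, the corresponding summand is the empty \ddag and $\cf_i$ is the empty arena homomorphism $\cempty$, which the definition explicitly allows. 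It then remains to see that each $\cf_i \colon \gH_i \to \gG_i$ is an arena homomorphism: preservation of $\iedge$, $\medge$ and of the labelling is inherited directly from $\cf$ once we note that $\cf$ maps $\vertices[\gH_i]$ into $\vertices[\gG_i]$ by construction.

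The only delicate point, and the one I expect to be the main obstacle, is the depth-preservation clause $\dep v = \dep{\cf(v)}$ for $\cf_i$, because depth is a global quantity that could in principle shift when one passes to a subgraph. Here the separation property is again decisive: since no $\iedge$-edge leaves $\vertices[\gH_i]$, every $\iedge$-path out of a vertex of $\gH_i$ stays within $\gH_i$ and ends at an $\iedge$-root of $\gH$ lying in $\vertices[\gH_i]$, so the depth of $v$ computed in $\gH_i$ coincides with its depth in $\gH$; running the same reasoning inside $\gG$ shows the depth of $\cf(v)$ is unchanged on passing to $\gG_i$. Combining these two identities with the depth-preservation of $\cf$ yields the depth-preservation of $\cf_i$, which completes the verification and hence establishes $\cf = \cf_1 \gsum \cf_2$.
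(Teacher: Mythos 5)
Your proof is correct and follows essentially the same route as the paper's: both decompose $\gH$ by pulling back the partition of $\vertices[\gG]$ along $\cf$ (the paper phrases the two parts via $\iedge$-paths to $\irof[\gG_1]$ and $\irof[\gG_2]$, which coincides with your preimage description since $\gsum$ creates no crossing edges). You simply spell out in more detail the separation property, the verification that the induced subgraphs remain \mas, and the depth-preservation of the restrictions, all of which the paper leaves implicit.
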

\begin{proof}
	Since $\cf$ preserves $\iedge$, then if $v\iedge^*w$ for a $w\in \irof[\gG]$ then $\cf(v)\iedge^*\cf(w)$.
	Thus if $\gG= \gG_1\gsum \gG_2$,  then there is a partition\footnote{As remarked in the proof of \Cref{thm:Marena}, in construction such partition, because of $\medge$-coherence, whenever $v\medge w$ then  $v$ and $w$ belong to the same subset.} $\irof[\gG]=\irof[\gG_1]\uplus \irof[\gG_2]$. 
	Then we can define  $\vertices[\gH_1]$ and $\vertices[\gH_2]$ as the sets of vertices of $\gH$ which images by $\cf$ admit a $\iedge$-path to a vertex in $\irof[\gG_1]$ and $\irof[\gG_2]$ respectively.
	The \mas $\gH_1$ and $\gH_2$ are defined from $\gH$ by the sets $\vertices[\gH_1]$ and $\vertices[\gH_2]$ respectively.
\end{proof}

\begin{lemma}\label{lemma:impRoots}
	Let $\cf\colon \gH \to \gG$ is a even or odd skew fibration, with $\gG=\gG_1\gimp \gG_2$ and $\gG_1$ \mas.
	If there are two \mas $\gH'$ and $\gH''$ such that $\gH=\gH' \gimp \gH''$ and $\gH''$ cannot be written as $\gimp$ of two \mas,
	then  $\cf(v)\in \vertices[\gG_2]$ for all  $v\in \vertices[\gH'']$.
\end{lemma}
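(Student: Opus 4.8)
The plan is to reduce everything to the combinatorics of $\iedge$-reachability in the two $\gimp$-decompositions and then to play the conjunct/disjunct preservation properties of $\cf$ against those reachability facts. By \Cref{thm:Marena} I may write $\gH=\arof H$ and $\gG=\arof G$, and the hypotheses say $G=G_1\imp G_2$ and $H=H'\imp H''$ with $H''$ not an implication. First I record the structural facts coming straight from the definition of $\gimp$. In $\gG=\gG_1\gimp\gG_2$ the only new edges run from $\irof[\gG_1]$ to $\irof[\gG_2]$, so $\irof[\gG]=\irof[\gG_2]$, every out-edge of a $\gG_2$-vertex stays inside $\gG_2$, and the only edges leaving $\gG_1$ towards $\gG_2$ go from a root of $\gG_1$ to a root of $\gG_2$. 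Symmetrically, in $\gH=\gH'\gimp\gH''$ every out-edge of a $\gH''$-vertex stays in $\gH''$; hence for $v\in\vertices[\gH'']$ both $\dep v$ and the set of vertices reachable from $v$ are computed entirely inside $\gH''$, and $\irof[\gH]=\irof[\gH'']$. Since $\cf$ preserves depth it maps roots to roots, so $\cf(\irof[\gH''])\subseteq\irof[\gG]=\irof[\gG_2]\subseteq\vertices[\gG_2]$; this settles the claim for the roots of $\gH''$, and it remains to treat the non-roots.

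Next I isolate two facts. \emph{Fact (A):} every vertex of $\gG_1$ reaches every root of $\gG_2$ --- it reaches some root of $\gG_1$, which $\iedge$-points to all of $\irof[\gG_2]$ --- so its meeting point with any $r\in\irof[\gG_2]$ is $r$ itself, of depth $0$, giving $u\disj[\gG]r$ for all $u\in\vertices[\gG_1]$ and all $r\in\irof[\gG_2]$. \emph{Fact (B):} since $H''$ is not an implication, $\gH''$ is either $\arof A\gsum\arof B$, or $\ssbox\gmod\arof A$, or $\ssdia\gmod\arof A$, or a single vertex; in each case every non-root $v$ of $\gH''$ fails to reach some root $r$ of $\gH''$. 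For $\gsum$ a non-root lies in one summand and cannot reach the roots of the other (there are no edges between summands); for $\gmod$ the modal vertex is $\iedge$-isolated, hence a root reached by no non-root; for a single vertex there are no non-roots. As $r$ is a root, the only candidate meeting point of $v$ and $r$ is $r$, which exists iff $v\iedge^{*}r$; since $v$ does not reach $r$, the meeting depth is $-1$ and therefore $v\conj[\gH]r$.

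\textbf{Even skew fibration.} Suppose some non-root $v\in\vertices[\gH'']$ had $\cf(v)\in\vertices[\gG_1]$. By Fact (B) pick a root $r$ of $\gH''$ with $v\conj[\gH]r$. Preservation of $\conj$ yields $\cf(v)\conj[\gG]\cf(r)$. But $\cf(r)\in\irof[\gG_2]$ and, by Fact (A), $\cf(v)\disj[\gG]\cf(r)$ (in particular $\cf(v)\neq\cf(r)$); since no distinct pair can be simultaneously conjunct and disjunct, this is a contradiction. Hence $\cf(v)\in\vertices[\gG_2]$, completing the even case.

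\textbf{Odd skew fibration and the main obstacle.} Here $\conj$ is no longer preserved, so Fact (B) cannot be pushed through directly and one must argue with $\disj$-preservation together with the odd skew lifting. I would first reduce to a depth-$1$ witness: along any $\iedge$-path inside $\gH''$ from a hypothetical $v$ with $\cf(v)\in\vertices[\gG_1]$ to a root, the image path can enter $\gG_2$ only at its final, sink, vertex, so every earlier image lies in $\gG_1$ and the penultimate vertex has image a root of $\gG_1$ and hence depth $1$. For such a $v$ one then invokes the odd skew lifting against a root $r$ of $\gH''$ not reached by $v$ (so $\cf(v)\disj[\gG]\cf(r)$ by Fact (A)) and against a root $p$ of $\gH'$, which is disjunct from $v$, to force two distinct $\disj$-partners either to collapse under $\cf$ or to flip to $\conj$. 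The genuine difficulty --- and the step I expect to be hardest --- is exactly this asymmetry: reaching a common root always produces \emph{even} meeting depth, which aligns with $\disj$ rather than against it, so the odd case cannot be obtained by literally dualising the even argument and instead rests on extracting the contradiction from the odd skew lifting condition.
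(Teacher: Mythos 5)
Your even case is correct and follows essentially the paper's route. Fact (A) (every $\cf(v)\in\vertices[\gG_1]$ reaches every root of $\gG_2$, hence $\cf(v)\disj[\gG] r$ for $r\in\irof[\gG]$) and Fact (B) (since $\gH''$ is not a $\gimp$, every non-root of $\gH''$ fails to reach some root of $\gH''$ and is therefore $\conj$ to it, the meeting depth being $-1$) are both sound, and playing preservation of $\conj$ against Fact (A) closes the even case exactly as the paper does; the paper merely runs the $\gsum$ and $\gmod$ subcases separately where you unify them.

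The odd case, however, is a genuine gap: you describe a plan (``invoke the odd skew lifting \dots to force two distinct $\disj$-partners either to collapse under $\cf$ or to flip to $\conj$'') but never derive the contradiction, and you explicitly flag this as the step you have not carried out. The detour through a depth-$1$ witness and a root $p$ of $\gH'$ is not needed and does not visibly converge. What is actually required, and what the paper does, is the following. Take a root $z$ of $\gH''$ with $v\conj[\gH] z$ (Fact (B)) and set $w=\cf(z)\in\irof[\gG]$; Fact (A) gives $\cf(v)\disj[\gG] w$, so the odd skew lifting demands some $u$ with $v\disj[\gH] u$ and $\cf(u)\ndisj[\gG] w$. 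The contradiction comes from showing that \emph{no} such $u$ exists: if $v\disj[\gH] u$ then $v$ and $u$ share an even-depth meeting point $x$, which lies in $\gH''$ because it is reachable from $v$ and out-edges of $\gH''$ stay in $\gH''$; one then checks that $\cf(u)\iedge^{*}\cf(x)\iedge^{*}w$, so the meeting point of $\cf(u)$ and $w$ is the root $w$ itself, of depth $0$, whence $\cf(u)\disj[\gG] w$ after all. This universal statement about all $\disj$-partners of $v$ is precisely the non-dualisable step your sketch omits; until it is supplied (including the $\gmod$ subcase, where one first observes that the modal root of $\gH''$ must itself land in $\gG_2$), the lemma is not proved for odd skew fibrations.
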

\begin{proof}
	Let $v\in \gH''$ such that $\cf(v)\in \gG_1$.
	Since $\cf$ preserves $\depsym$, then  $v\notin\irof[\gH]$.
	Thus $\gH''$ cannot be a single-vertex \ma.
	If $\gH''$ is a $\gsum$ of two \mas, then there is $z\in \irof[\gH'']$ such that $v\niedge^* z$, hence $v\conj z$ in $\gH$ but $\cf(v)\not\conj \cf(z)$ in $\gG$. Therefore $\cf$ is not an even skew fibration.
	Let $\cf(z)=w$. Then  $\cf(v)\disj w$ because $\cf(v)\in \gG_1$ and $w\in \irof[\gG]$.
	If there is a $u$ with $v\disj u$ in $\gH$ then there is $x\in \vertices[\gH]$ such that $u\iedge \iseven x$ and $v\iedge \iseven x$. 
	Since $x\iedge^* w$ we have $\cf(u)\disj w$, which means that $\cf$ cannot be an odd skew fibration either.  
	Then  $\gH''$ has to be of the shape $w\gmod\gH''_2$ and $\cf(w)\in \gG_2$ because $v\in \irof[\gH]$.
	We can conclude as for the previous case that $\cf$ is not an even or odd skew fibration.
	Contradiction.
\end{proof}

\begin{lemma}\label{lemma:arHomImp}
	Let $\cf\colon \gH \to \gG$ is a even or odd skew fibration, with $\gG=\gG_1\gimp \gG_2$ for an \ma $\gG_1$.
	If there is an \ma $\gH'$ such that $\gH=\gH' \gimp \gH''$,
	then there are  $\gH_1$ and $\gH_2$ such that $\gH=\gH_1\gimp \gH_2$
	and
	$\cf=\cf_1\gimp \cf_2$
	where 
	$\cf_1\colon \gH_1\to \gG_1$ and $\cf_2\colon \gH_2\to \gG_2$ are modal arena homomorphisms.
\end{lemma}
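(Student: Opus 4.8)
The plan is to prove the statement by induction on the $\gimp$-nesting depth of $\gH''$ (equivalently, on $\sizeof{\vertices[\gH'']}$), reducing everything to the situation already handled by \Cref{lemma:impRoots}. First observe that the conclusion forces the decomposition to be the one read off from $\cf$: since $\cf=\cf_1\gimp\cf_2$ with $\cf_1\colon\gH_1\to\gG_1$ and $\cf_2\colon\gH_2\to\gG_2$, necessarily $\vertices[\gH_1]=\cf^{-1}(\vertices[\gG_1])$ and $\vertices[\gH_2]=\cf^{-1}(\vertices[\gG_2])$. So the real content is to show that this partition of $\vertices[\gH]$ into preimages is a legal $\gimp$-splitting, i.e. that no $\medge$ crosses the partition and that the crossing $\iedge$-edges are exactly $\RtoR{\gH_1}{\gH_2}$.

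For the inductive step, suppose $\gH''=\gK\gimp\gJ$ for \mas $\gK,\gJ$. The currying isomorphism $(A\land B)\imp C\feq A\imp(B\imp C)$ from \Cref{prop:formEq} has as its graph-level counterpart the identity $\gH'\gimp(\gK\gimp\gJ)=(\gH'\gsum\gK)\gimp\gJ$, so I would rewrite $\gH=(\gH'\gsum\gK)\gimp\gJ$. Since $\gH'\gsum\gK$ is again an \ma and $\sizeof{\vertices[\gJ]}<\sizeof{\vertices[\gH'']}$, the induction hypothesis applies verbatim to the same map $\cf$ with witness $\gH'\gsum\gK$, and yields the desired splitting directly.

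The base case is where $\gH''$ is $\gimp$-irreducible, and here \Cref{lemma:impRoots} does the crucial work: it gives $\cf(v)\in\vertices[\gG_2]$ for every $v\in\vertices[\gH'']$, hence $\cf^{-1}(\vertices[\gG_1])\subseteq\vertices[\gH']$. Taking $\gH_1,\gH_2$ to be the sub-\mas induced by $\cf^{-1}(\vertices[\gG_1])$ and $\cf^{-1}(\vertices[\gG_2])$, the routine gluing conditions follow from forward preservation of $\iedge$, $\medge$ and $\dep$ together with the shape of $\gG_1\gimp\gG_2$: the roots of $\gH$ land in $\gH_2$ (they have depth $0$), $\gH_2$ is $\iedge$-downward closed (because $\gG_2$ is), no $\medge$ crosses the partition, and any crossing $\iedge$ must run from the $\gG_1$-preimage to a root of $\gH$. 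The one genuinely delicate point, completeness of the crossing edges, is settled by the inclusion $\gH_1\subseteq\gH'$: if some $v\in\irof[\gH_1]$ had an outgoing $\iedge$ inside $\gH'$, its target would be forced, via the cross-edge analysis in $\gG$, to be a depth-$0$ vertex of $\gH$ lying in $\gH'$, contradicting that $\irof[\gH]\subseteq\vertices[\gH'']$; hence every root of $\gH_1$ is in fact a root of $\gH'$, and therefore $\iedge$-connected to every root of $\gH''=\irof[\gH]$.

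Once $\gH=\gH_1\gimp\gH_2$ is established, the restrictions $\cf_1=\restr{\cf}{\gH_1}$ and $\cf_2=\restr{\cf}{\gH_2}$ inherit preservation of $\iedge$, $\medge$, $\dep$ and $\labsym$, and the modal witness condition transfers because $\medge$ never crosses the partition; thus $\cf=\cf_1\gimp\cf_2$. I expect the main obstacle to be exactly the reduction to the $\gimp$-irreducible base case: working directly with the given, possibly reducible, $\gH''$ one cannot conclude $\gH_1\subseteq\gH'$, and completeness of the crossing edges then becomes a subtle question about the interaction of the skew-lifting condition with the fact that two vertices sharing two or more $\iedge$-roots are conjunct rather than disjunct. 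Channelling everything through \Cref{lemma:impRoots} by means of the currying identity is what keeps the argument elementary.
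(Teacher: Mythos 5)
Your proof is correct and follows essentially the same route as the paper's: reduce to the case where $\gH''$ is $\gimp$-irreducible via the currying identity $\gH'\gimp(\gK\gimp\gJ)=(\gH'\gsum\gK)\gimp\gJ$ (the paper phrases this as a ``we may assume'' rather than an explicit induction), invoke \Cref{lemma:impRoots} to push $\cf(\vertices[\gH''])$ into $\gG_2$, and then split $\gH'$ according to where it maps. The only difference is presentational: you characterise the split as the preimage partition $\cf^{-1}(\vertices[\gG_1])$, $\cf^{-1}(\vertices[\gG_2])$ and verify the root and crossing-edge conditions directly via depth preservation, whereas the paper decomposes $\gH'$ into its $\gsum$-components and shows each component lands entirely in one of $\gG_1$, $\gG_2$ --- the resulting $\gH_1$ and $\gH_2$ coincide.
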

\begin{proof}
	By hypothesis, we can assume that $\gH$ is of the form $\gH=\gH'\gimp  \gH''$ where  $\gH''$ is not a $\gimp$ of two \mas.
	We conclude by Lemma~\ref{lemma:impRoots} that $\cf(v)\in \vertices[\gG_2]$ for any $v\in \vertices[\gH'']$. 
	If $\vertices[\gG_2]=\cf(\vertices[\gH''])$, then we conclude that $\gH_1=\gH'$ and $\gH_2=\gH''$.
	Otherwise, let $\gH'=\gH'_1\gsum \cdots \gsum \gH'_n$ such that $\gH'_i$ is a $\gsum$ of two \mas for no $i\in \intset1n$.
	If $v,w\in \vertices[\gH']$, then there is a $(\diedge\cup\dmedge)$-path from $v$ to $w$ in $\vertices[\gH']$ iff there is $i\in \intset1n$ such that $v,w\in \vertices[\gH'_i]$.
	Since $\irof[\gG]\subset\cf(\vertices[\gH''])$, this implies that if there is $i\in \intset1n$ such that $v,w\in \vertices[\gH'_i]$, then there is $(\diedge\cup\dmedge)$-path from $\cf(v)$ to $\cf(w)$ in $\vertices[\gG]\setminus\irof[\gG]$.
	That is, $\cf(\vertices[\gH'_i])$ is either a subset of $\vertices[\gG_1]$ or a subset of $\vertices[\gG_2]$ for all $i\in \intset1n$.
	Without loss of generality we assume there is $j$ such that that $\cf(\vertices[\gH'_i])\subset\vertices[\gG_1]$ for all $i\leq j$.	
	We conclude that $\gH_1=\gH'_1\gsum\cdots\gsum\gH'_j$ and $\gH_2=(\gH'_{j+1}\gsum\cdots\gsum\gH'_n)\gimp \gH''$.
\end{proof}

\begin{lemma}\label{lemma:arHomMod}
	Let $\cf\colon \gH \to \gG$ be a modal arena homomorphism  and  $\gG=\singlevertex[v] \gmod \gG'$.
	If $\cf$ is an even skew fibration then, $\gH=\singlevertex[w] \gmod \gH'$ and $\cf=\cid_w \gmod \cf'$ with $\cf'\colon \gH' \to \gG'$ an even skew fibration.
	If $\cf$ is odd skew fibration, then 
	\begin{itemize}
		\item 
		either $\gH=\singlevertex[w] \gmod \gH_2$ and $\cf=\cid_w \gmod \cf_2$ with $\cf_2\colon \gH_2 \to \gG_2$ an odd skew fibration;
		
		\item
		or $\gH=(w\gmod \gH_1 )\gsum \gH_2$ and 
		$\cf=\cpair{\cf_1}{{\cf_2}}$ with $\cf_1 \colon (w\gmod \gH_1) \to (v\gmod \gG_2 )$ and $\cf_2\colon \gH_2\to (v\gmod \gG_2)$.
	\end{itemize}
\end{lemma}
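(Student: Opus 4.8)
The plan is to read off the decomposition of $\gH$ from the behaviour of $\cf$ on the distinguished vertex $v$. The starting point is that in $\gG=\singlevertex[v]\gmod\gG'$ the operation $\gmod$ adds no $\iedge$-edges and $\singlevertex[v]$ has none, so $v$ is a depth-$0$ root that is $\iedge$-isolated; consequently every other vertex $x$ of $\gG$ has meeting depth $-1$ with $v$, i.e. $v\conj x$ (never $v\disj x$), while $v\medge r$ holds for exactly the $\iedge$-roots $r$ of $\gG'$, there is no $\medge$-edge into $v$, and $\mconeof v=\vertices[\gG']$. Throughout I assume $\gH$ is nonempty (the empty map being degenerate), and I treat the even and odd cases separately, in each case first locating the preimages of $v$ and then using the modal condition together with \Cref{lemma:modalities} to place the corresponding modal vertex of $\gH$ as the $\gmod$-top.

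For the even case I first produce a preimage of $v$: for any $x\in\vertices[\gH]$ the image $\cf(x)$ is conjunct with $v$, so applying the skew-lifting condition with the distinguished vertex $v$ yields $u$ with $x\conj u$ and $\cf(u)\nconj v$; since every vertex of $\gG'$ is conjunct with $v$, this forces $\cf(u)=v$. Writing $w$ for this preimage, I then check it is unique and $\iedge$-isolated: a second preimage would be conjunct with $w$ (two $\iedge$-isolated roots) yet carry the same image, contradicting preservation of $\conj$, and $w$ inherits $\iedge$-isolation because $\cf$ preserves $\depsym$ and $v$ has no incoming $\iedge$. Finally, for every other root $r$ of $\gH$ we have $\cf(r)\in\irof[\gG']$, hence $\cf(w)=v\medge\cf(r)$; the modal condition then returns $w''$ with $w''\medge r$ and $\cf(w'')=v$, so $w''=w$ by uniqueness, giving $w\medge r$. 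Combined with \Cref{lemma:modalities} (a $\medge$-neighbour of a root is a root) and the absence of any $\medge$ into $v$, this shows the only edges touching $w$ are $w\medge r$ with $r\in\irof[\gH]\setminus\{w\}$, so $\gH=\singlevertex[w]\gmod\gH'$ for $\gH'=\restr{\gH}{\vertices[\gH]\setminus\{w\}}$. The corestriction $\cf'=\restr{\cf}{\gH'}$ lands in $\gG'$ and remains an even skew fibration, since no lifting witness can equal $w$ (its image $v$ is conjunct with everything), and then $\cf=\cid_w\gmod\cf'$.

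For the odd case the conjunction argument for a preimage is unavailable, because $v$ is conjunct — not disjunct — with the rest of $\gG$, and this is precisely what creates the second alternative. Here I would instead analyse the top-level decomposition of the \ma $\gH$ given by \Cref{thm:arToFor}. When $\gH=\singlevertex[w]\gmod\gH_2$ the same modal-condition/uniqueness bookkeeping as above (now propagating the restriction via preservation of $\disj$ and odd skew-lifting) yields the first alternative $\cf=\cid_w\gmod\cf_2$ with $\cf_2\colon\gH_2\to\gG'$ an odd skew fibration. When $\gH=\gH_a\gsum\gH_b$ I locate the modal vertex lying over $v$ inside one summand, say $\gH_a=\singlevertex[w]\gmod\gH_1$, set $\gH_2=\gH_b$, and check that $\cf_1=\restr{\cf}{\gH_a}$ and $\cf_2=\restr{\cf}{\gH_b}$ both land in $\gG=\singlevertex[v]\gmod\gG'$ and are odd skew fibrations, so that $\cf$ is the copairing $\cpair{\cf_1}{\cf_2}$; here $\gG_2=\gG'$.

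The step I expect to be the main obstacle is this last, $\gsum$-topped subcase of the odd analysis. Two things must be nailed down: that the conjunctive split of $\gH$ is genuinely compatible with the single modality of $\gG$ — exactly one summand carries the vertex over $v$, and both corestrictions stay odd skew fibrations into the whole of $\gG$ — and, more delicately, that an implication-topped $\gH=\gH_a\gimp\gH_b$ cannot occur. The latter exclusion is where the modal condition must do real work: an implication top would make every root of $\gH$ receive an incoming $\iedge$, so none could map to the $\iedge$-isolated $v$, yet some root of $\gG'$ in the image would then have its $\medge$-parent $v$ unwitnessed, which the modal condition forbids (it is what prevents a homomorphism from fabricating the modality $v$ without a preimage). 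Getting this reflection argument precise is the crux, and it is also the point where the even and odd cases genuinely diverge.
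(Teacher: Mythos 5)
Your proposal is correct in substance and reaches the same conclusion by essentially the same underlying idea---locate the preimage(s) of the distinguished modal vertex $v$ and peel off the $\gmod$-layer---but it is organised differently from the paper's (extremely terse, two-sentence) proof, and it supplies justifications the paper omits. In the even case the paper merely asserts that there is a unique $w$ with $\cf(w)=v$ ``since $v\in\irof[\gG]$''; your derivation of existence from skew lifting (every vertex of $\gG$ other than $v$ is conjunct with $v$, so the only admissible witness maps to $v$) and of uniqueness from preservation of $\conj$ is exactly the missing content, and your use of the modal condition plus \Cref{lemma:modalities} to force $w\medge r$ for every other root is the right way to obtain $\gH=\singlevertex[w]\gmod\gH'$. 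In the odd case the paper takes a preimage $w$ of $v$ as given and splits $\vertices[\gH]$ into $\set{w}\cup\mconeof{w}$ versus the rest, landing in the first or second bullet according to whether $\mconeof{w}$ exhausts $\vertices[\gH]\setminus\set{w}$; you instead case-split on the top-level connective of $\gH$ via \Cref{thm:arToFor} and exclude the $\gimp$-topped case. The two routes are equivalent, but note that your $\gimp$-exclusion is doing real work the paper silently skips: it is precisely the argument that an odd skew fibration into $\singlevertex[v]\gmod\gG'$ must hit $v$ at all.

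Two fine points. First, your exclusion argument (and the existence of a preimage of $v$ in the odd case generally) rests on reading the modal condition as a lifting property---every $\medge$-edge of $\gG$ whose target lies in the image of $\cf$ must be witnessed in $\gH$---rather than as a condition only on pairs of vertices both of which are images. The paper's wording is ambiguous (and contains a typo, $w\in\vertices[\gG]$), but your reading is the one the lemma needs: under the weaker reading the inclusion $\gG'\hookrightarrow\singlevertex[v]\gmod\gG'$ would be an odd skew fibration violating the statement. So this is not a flaw in your argument but a place where you are (correctly) resolving an ambiguity the paper leaves open. Second, your stated worry that ``exactly one summand carries the vertex over $v$'' in the $\gsum$-topped subcase is a non-issue and in fact false in general: contraction produces several summands each containing a preimage of $v$. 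The second bullet does not require uniqueness---it only asks that one summand of the form $w\gmod\gH_1$ be split off, with everything else collected into $\gH_2$, whose map $\cf_2$ lands in the whole of $v\gmod\gG_2$ and is decomposed recursively. With that worry discarded, the $\gsum$ subcase is immediate and your proof is complete.
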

\begin{proof}
	If $\cf$ is an even skew fibration, then to conclude it suffices to remark there is a unique $w$ such that $\cf(w)=v$ since $v\in \irof[\gG]$.
	
	If $\cf$ is an odd skew fibration, let $w$ such that $\cf(w)=v$.
	If $\vertices[\gH]\setminus\set w=\mconeof{w}$, then we can conclude.
	Otherwise we conclude with  $\gH_2$ be the \ma with vertices in $\vertices[\gH]\setminus (\set w \cup \mconeof{w })$.
\end{proof}

In order to prove the converse of the previous theorem we give some additional definitions and useful lemmas.
\begin{definition}\label{def:skewOp}
	If $\cf_1:\gH_1\to\gG_1$ and $\cf_2:\gH_2\to\gG_2$ are modal arena homomorphisms, we define the following modal arena homomorphisms:
	$$
	\begin{array}{c@{\;=\quad}r@{\; : \;}c@{\to}cc}
		\cid_v 						& \id 				& \singlevertex[v] 			& \singlevertex[v]
		\\
		\cf_1 \gsum \cf_2		& \cf_1\cup \cf_2 & \gH_1 \gsum \gH_2 & \gG_1 \gsum \gG_2
		\\
		\cf_1 \gimp \cf_2		& \cf_1\cup \cf_2 & \gH_1 \gimp \gH_2 & \gG_1 \gimp \gG_2	& (\gH_1, \gG_1\neq \emptyset)	
		\\
		\cf_1 \gmod  \cf_2		& \cf_1\cup \cf_2 & \gH_1 \gmod \gH_2 & \gG_1 \gmod \gG_2
		\\
		\cpair{\cf_1}{\cf_2}	& \cf_1\cup \cf_2 & \gH_1 \gsum \gH_2 & \gG 						& (\gG_1 = \gG_2=\gG)
	\end{array}
	$$
\end{definition}

\begin{lemma}\label{thm:skewDec}
	Every (even) skew fibration is of the form
	$$
	\cid_\gG
	\qquad
	\iseven \cf \gsum \iseven \cg
	\qquad
	\isodd \cf \gimp \iseven \cg
	\qquad
	\cid_{v} \gmod \iseven \cg
	$$
	and every odd skew fibration is of the form
	$$
	\cid_\gG
	\qquad
	\cpair{\isodd \cf}{\isodd \cg}
	\qquad
	\isodd \cf \gsum \isodd \cg
	\qquad
	\iseven \cf \gimp \isodd \cg
	\qquad
	\cid_{v} \gmod \isodd \cg
	\qquad
	\cempty_{\gG}
	$$
	where $\iseven \cf $ and $\iseven \cg$ are even skew fibrations, 
	$\isodd \cf $ and $\isodd \cg$ are odd skew fibrations, $v\in \mvertices[\arof{H}]$, and $\gG$ can be any \ma.
\end{lemma}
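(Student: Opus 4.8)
The plan is to run a single case analysis on the top-level shape of the target \ma $\gG$. By \Cref{thm:Marena} we have $\gG=\arof F$, and by the decomposition in the proof of \Cref{thm:arToFor} exactly one of the following holds: $\gG$ is a single vertex, $\gG=\gG_1\gsum\gG_2$, $\gG=\gG_1\gimp\gG_2$, or $\gG=\singlevertex[v]\gmod\gG'$ with $v$ a single modal vertex. For odd skew fibrations I would first dispose of the empty domain, which by definition is $\cempty_\gG$; even skew fibrations have non-empty domain by definition, so that subcase cannot arise. In each remaining case I would invoke the matching structural lemma — \Cref{lemma:arHomSum} for $\gsum$, \Cref{lemma:arHomImp} for $\gimp$, \Cref{lemma:arHomMod} for $\gmod$ — to split both $\gH$ and $\cf$, and then check that the resulting factors are skew fibrations of the parity the statement demands.

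The $\gsum$, $\gmod$ and single-vertex cases are the routine ones. Since $\gsum$ and $\gmod$ add no $\iedge$-edges, every depth, and hence every parity and every meeting depth computed inside a component, is unchanged; thus $\conj$ and $\disj$ restrict componentwise and the factors keep the parity of $\cf$, producing $\iseven\cf\gsum\iseven\cg$ or $\isodd\cf\gsum\isodd\cg$, and $\cid_v\gmod\iseven\cg$ or $\cid_v\gmod\isodd\cg$ (with the extra odd alternative of \Cref{lemma:arHomMod} supplying $\cpair{\isodd\cf}{\isodd\cg}$). Here I would record the even/odd asymmetry: over a $\gsum$ the $\nconj$-shaped conclusion of the even skew-lifting condition forbids an empty factor, so both pieces are genuine even skew fibrations, whereas an odd factor may legitimately be empty because $\cempty$ is itself an odd skew fibration. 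When $\gG$ is a single vertex, label- and depth-preservation force every vertex of $\gH$ to be a root with the same label; in the even case $\conj$-preservation rules out two distinct roots, so $\gH$ is a single vertex and $\cf=\cid_\gG$, while in the odd case several collapsing roots are permitted and are packaged by nested $\cpair{\isodd\cf}{\isodd\cg}$ (one root giving $\cid_\gG$, none giving $\cempty_\gG$).

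The $\gimp$ case is the crux. I would first exhibit $\gH=\gH'\gimp\gH''$ so that \Cref{lemma:arHomImp} (via \Cref{lemma:impRoots}) applies and yields $\cf=\cf_1\gimp\cf_2$ with modal arena homomorphisms $\cf_1\colon\gH_1\to\gG_1$ and $\cf_2\colon\gH_2\to\gG_2$; depth preservation pins the roots down and forces this shape. It then remains to upgrade $\cf_1$ to an odd and $\cf_2$ to an even skew fibration. The mechanism is that forming $\gG_1\gimp\gG_2$ raises the depth of every $\gG_1$-vertex by exactly one, since each $\gG_1$-root acquires an $\iedge$ to each root of $\gG_2$; hence parities on $\gG_1$ (and symmetrically on $\gH_1$) are inverted, and for any two $\gG_1$-vertices sharing a common $\iedge$-descendant inside $\gG_1$ the meeting depth in $\gG$ is one larger than in $\gG_1$, so $\conj[\gG]$ and $\disj[\gG_1]$ agree on them. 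The doable half is $\disj$-preservation of $\cf_1$: if $x\disj[\gH_1]y$ their $\gH_1$-meeting point becomes a $\gH$-conjunction, so $\cf(x)\conj[\gG]\cf(y)$, and since $\cf_1$ maps into $\gG_1$ and preserves $\iedge$ the image of that meeting point is a common descendant of $\cf_1(x),\cf_1(y)$ inside $\gG_1$, from which $\cf_1(x)\disj[\gG_1]\cf_1(y)$ is read off. The factor $\cf_2$ is easy because $\gG_2$ keeps its depths and meeting structure inside $\gG$, so it stays an even skew fibration, giving $\isodd\cf\gimp\iseven\cg$; the odd case is dual and gives $\iseven\cf\gimp\isodd\cg$.

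The step I expect to be the main obstacle is the odd skew-lifting of $\cf_1$ under this parity inversion. The inversion is not a clean interchange of $\conj$ and $\disj$: two $\gG_1$-vertices with no common descendant in $\gG_1$ are conjunct there but become disjunct in $\gG$ only when $\gG_2$ is single-rooted, and stay conjunct otherwise. One must therefore translate a target disjunction in $\gG_1$ into a conjunction in $\gG$, apply the even skew-lifting of $\cf$, and then show the witness it returns can be chosen inside the summand $\gH_1$ rather than $\gH_2$ — this is where the multiple-root subtlety bites and where \Cref{lemma:cones} on meeting points must be used carefully on both the source and target sides. Two secondary chores remain: confirming that the maps delivered by the pairing alternative of \Cref{lemma:arHomMod} are genuinely odd skew fibrations, and checking that in every case the named factors inherit the modal-homomorphism clause, both of which follow from preservation of $\medge$ and of depth but should be stated explicitly.
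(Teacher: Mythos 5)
Your proposal is correct and follows essentially the same route as the paper's proof: a case analysis on the top-level shape of $\gG$ (single vertex, $\gsum$, $\gimp$, $\gmod$), invoking Lemmas~\ref{lemma:arHomSum}, \ref{lemma:arHomImp} and \ref{lemma:arHomMod} to split $\gH$ and $\cf$, and then verifying the parities of the factors, with the empty domain and the $\cpair{-}{-}$ alternatives handled exactly as in the paper. The only difference is presentational: you single out the odd skew-lifting of the antecedent factor in the $\gimp$ case as the delicate step, which the paper asserts without elaboration, so your sketch of how to push the even skew-lifting witness into $\gH_1$ via Lemma~\ref{lemma:cones} is a welcome refinement rather than a divergence.
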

\begin{proof}
	By case analysis, let $\cf \colon \gH \to \gG$ be a modal arena homomorphism, remarking that for any \ma $\gG$, the identity map $\cid_\gG$ is by definition an even and an odd skew fibration.
	
	If $ \iseven \cf \colon \gH \to \gG$ is  an even skew fibration, then 
	
	\begin{itemize}
		
		\item 
		if $\gG$ is a single-vertex \ma, then $\gH$ cannot be either of the shape $\gH_1\gsum \gH_2$ or $\gH_1\gmod \gH_2$ otherwise $\cf$ would not preserve $\conj$, or of the shape $\gH_1\gimp \gH_2$ otherwise it would not preserve $\depsym$.
		Then $\cf=\id_v$ with $v$ the unique vertex in $\vertices[\gH]=\vertices[\gG]$.
		
		\item 
		if $\gG=\gG_1\gsum \gG_2$, then by Lemma~\ref{lemma:arHomSum} we have that $\iseven \cf = \cf_1 \gsum\cf_2$ with  $\cf_1$ and $\cf_2$ arena homomorphisms. Since $\iseven \cf$ is an even  skew fibration, it follows by definition of $\gsum$ that $\cf_1$  and $\cf_2$ are even skew fibrations;
		
		\item 
		if $\gG=\gG_1\gimp \gG_2$, then 
		we define $V_1=\set{v\in \vertices[\gH]\mid \cf(v)\in \gG_1}$ and $V_2=\set{v\in \vertices[\gH]\mid \cf(v)\in \gG_2}$.
		We have that $V_2\neq\emptyset$ since $\cf$ preserve $\depsym$.
		If $V_1=\emptyset$, then $\cf=\cempty_{\gG_1}\gimp \cf_2$ with $\cf_2\colon \gH\to \gG_2$.
		Otherwise, $V_1\neq\emptyset$ and $\gH$ cannot be a single vertex.
		Similarly, $\gH$ cannot be of the shape $\gH_1\gsum\gH_2$ otherwise $\cf$ would not preserve $\conj$, nor of the shape 
		$v\gmod\gH_2$ otherwise $\cf$ would not be modal.
		We conclude by Lemma~\ref{lemma:arHomImp} that $\cf=\cf_1\gimp \cf_2$. 
		Moreover, since $\cf$ is an even skew fibration if follows that $\cf_2$ also preserves $\conj$ and satisfies skew lifting while $\cf_1$ preserve $\disj$ and satisfies odd skew lifting. 
		
		\item 	
		if $\gG=\singlevertex[v]\gmod \gG_2$, we conclude by Lemma~\ref{lemma:arHomMod}.
	\end{itemize}

	If $ \isodd  \cf \colon \gH \to \gG$ is  an odd skew fibration, then we proceed similarly.
	If $\gG$ is a single-vertex \ma, then $\gH$ cannot be of the shape $\gH_1\gmod \gH_2$ otherwise $\cf$ it would not be modal, or of the shape $\gH_1\gimp \gH_2$ otherwise it would not preserve $\depsym$.
	Let $\gH=\gH_1\gsum \gH_2$ such that $\gH_1\neq \gH_1'\gsum \gH_1''$. 
	Since $\isodd\cf$ preserve $\depsym$ and $\medge$, then $\gH_1$ is a single-vertex \mas.
	Moreover, $\cf_2\colon \gH_2\to \gG_2$ is an odd skew fibration by definition.
	Then $\cf=\cpair{\id_v}{\cf_2}$ with $v$ the unique vertex in $\vertices[\gH]=\vertices[\gG]$;
	
	If $\gG=\gG_1\gsum \gG_2$, $\gG=\gG_1\gimp \gG_2$ or $\gG=\singlevertex[v]\gmod \gG_2$ we apply a similar reasoning of the case of $\cf$ even skew fibration.
\end{proof}

\begin{theorem}\label{thm:skewfibToDer}
	Let $H$ and $H'$ be  \pformulas.
	If there is a skew fibration $\cf \colon \arof{H'} \to \arof{H}$, then $H' \provevia{\DOWN{\PIp}} H$.
\end{theorem}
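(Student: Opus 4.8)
The plan is to prove the converse of \Cref{thm:derToSkewfib} by structural induction on the skew fibration, taking the decomposition of \Cref{thm:skewDec} as the backbone. Concretely, I would induct on $\sizeof{\vertices[\arof{H'}]}+\sizeof{\vertices[\arof H]}$ and, at each step, write $\cf\colon\arof{H'}\to\arof H$ in one of the canonical shapes listed in \Cref{thm:skewDec}. The whole proof then reduces to a dictionary that reads each shape as a (possibly empty) block of deep rules from \Cref{fig:deepRules} placed on top of the derivation(s) returned by the inductive hypothesis on the strictly smaller constituents. The structural fact that makes this possible is that the rules of $\DOWN{\PIp}$ are \emph{deep}, hence applicable in an arbitrary context $\Gamma\cons{}$, including under the constructors $\gsum$, $\gimp$ and $\gmod$; this is exactly what lets me lift a sub-derivation without disturbing the surrounding formula.

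The dictionary reads as follows. The identity $\cid_\gG$ yields the empty derivation (here $\arof{H'}=\arof H$, so by \Cref{prop:formEq} $\spalletto{H'}\feq\spalletto H$ and no rule is needed). The ``parallel'' shapes $\iseven\cf\gsum\iseven\cg$, $\isodd\cf\gsum\isodd\cg$, $\isodd\cf\gimp\iseven\cg$, $\iseven\cf\gimp\isodd\cg$, $\cid_v\gmod\iseven\cg$ and $\cid_v\gmod\isodd\cg$ with both constituents nonempty are handled by invoking the inductive hypothesis on each constituent and concatenating the two resulting derivations in context (first rewriting inside the left component with the right one fixed as context, then vice versa). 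The three weakening rules appear exactly when one constituent is an empty map, and I then recurse only on the nonempty one: an even $\isodd\cf\gimp\iseven\cg$ with $\isodd\cf=\cempty_{\gG_1}$ is one application of $\deep\wirule$ introducing the antecedent associated to $\gG_1$; an odd $\isodd\cf\gsum\isodd\cg$ with one summand $\cempty$ is one application of $\deep\wtrule$; and an odd $\cid_v\gmod\cempty$ with $v$ a $\ldia$-vertex (its source being the single $\ssdia$ of $\ldia\unit$) is one application of $\deep\wdrule$; here the modal-arena condition that every $\lbox$-vertex has a nonempty cone guarantees no $\deep\wdrule$ analogue is needed under a box. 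Finally, the pairing shape $\cpair{\isodd\cf}{\isodd\cg}$, in which two disjoint copies map onto the same target $\gG$, is realised by a single $\deep\crule$: the inductive hypothesis applied to each copy gives derivations into the formula of $\gG$, so I build $\spalletto{\gH_1}\ltens\spalletto{\gH_2}\provevia{\DOWN{\PIp}}\spalletto{\gG}\ltens\spalletto{\gG}$ in context and then contract.

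Two points require care. First, the polarity bookkeeping must line up: the decomposition of \Cref{thm:skewDec} places odd skew fibrations precisely on the $\oddsym$-polarised constituents and even ones on the $\evensym$-polarised constituents, which is exactly the polarity discipline of the deep rules ($\deep\crule,\deep\wtrule,\deep\wdrule$ act on $\oddsym$-formulas, while $\deep\wirule$ acts on $\evensym$-formulas); I would verify this alignment case by case so that every rule I invoke is legal on the indicated subformula. Second, and this is the main obstacle, the contraction case must be argued carefully: I must check that the two branches of $\cpair{\isodd\cf}{\isodd\cg}$ genuinely have the same target formula up to $\feq$, so that the $\ltens$ produced by the pairing is contractible by $\deep\crule$, and that the two recursive calls return derivations with matching conclusions that $\deep\crule$ can merge. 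The formula-isomorphism slack (commutativity and associativity of $\land$ and currying of $\imp$, \Cref{prop:formEq}) has to be absorbed silently by identifying each arena with its canonical associated formula, so that both the empty-derivation base case and the contraction merge are literally well-formed.
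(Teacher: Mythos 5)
Your proposal is correct and follows essentially the same route as the paper: decompose the skew fibration via \Cref{thm:skewDec}, read each $\cempty$ as one of the deep weakenings ($\deep\wirule$, $\deep\wtrule$, $\deep\wdrule$), each $\cpair{-}{-}$ as a $\deep\crule$, and reassemble the derivation using the formula/arena correspondence. The paper's proof is just a terse statement of this dictionary; your version supplies the inductive bookkeeping (polarities, the matching-target check for contraction) that the paper leaves implicit.
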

\begin{proof}
	By Lemma~\ref{thm:skewDec} we can decompose any skew fibration using the operations in Definition~\ref{def:skewOp}.
	In particular, each $\cempty_\gG$  occurring in the decomposition corresponds to an application of a $\deep\wdrule$, $\deep\wtrule$ or $\deep\wirule$, while each occurrence of $\cpair{-}{-}$ corresponds to an application of a $\deep\crule$.
	We conclude by reconstructing a derivation in $\DOWN{\PIp}$ using this decomposition and the correspondence between \pformulas and \mas (\Cref{thm:Marena}).
\end{proof}

\section{Combinatorial proof}\label{sec:CP}

Using the results of the previous sections, we are able to define combinatorial proofs for the logics $\CK$ and $\CD$ and prove sound and completeness results for them.

\begin{definition}
	Let  $F$ be a formula and $\X\in \set{\CK,\CD}$. 
	
	An \emph{$\X$-intuitionistic combinatorial proof} (or \XICP) is a skew fibration $\cf \colon \cgG \to \arof F$
	from an $\X$-arena net $\cgG$ to the modal arena of a formula $F$ containing no occurrences of $\ldia \unit$.
	
	In particular, \emph{intuitionistic combinatorial proofs} (or \ICPs) from~\cite{ICP} are the special cases where no modalities occur, that is, an \ICP is a  skew fibration $\cf \colon \cgG \to \arof F$ from an arena net $\cgG$ to the arena of a modality-free formula $F$.
\end{definition}

\begin{theorem}\label{thm:ICPsoundandcomplete}
	Let $F$ be a formula and $\X\in\set{\CK,\CD}$. 
	Then 
	$$
	\provevia{\IpX} F \iff \mbox{there is an \XICP~} \cf \colon \cgG \to \arof F
	$$
\end{theorem}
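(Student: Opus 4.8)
The plan is to wire together the equivalences established in the previous sections, so that the statement becomes the composite of the correspondences between sequent proofs, polarised proofs, factorised proofs, and combinatorial proofs. Throughout, for a formula $G$ without occurrences of $\ldia\unit$ I fix its canonical polarisation $\iseven G$ as a right-hand-side conclusion, so that $\spalletto{\iseven G}=G$ and $\arof{\iseven G}=\arof G$; in particular I write $H=\iseven F$, hence $\arof H=\arof F$.

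For the left-to-right direction, assume $\provevia{\IpX} F$. Then \Cref{thm:pol} gives $\provevia{\PX} H$, and the Decomposition theorem (\Cref{thm:decomposition}, with the $\K,\D$ of its statement read as the present $\CK,\CD$) yields a \pformula $H'$ with $\provevia{\PIMLLX} H' \provevia{\DOWN{\PIp}} H$. From the linear half $\provevia{\PIMLLX} H'$, the second item of \Cref{thm:pol} gives $\provevia{\IMLLX}\spalletto{H'}$, and \Cref{thm:arenaCompleteness} then produces an $\X$-arena net $\cgG$ whose underlying \ma is $\arof{H'}$. From the structural half $H'\provevia{\DOWN{\PIp}} H$, \Cref{thm:derToSkewfib} gives a skew fibration $\cf\colon\arof{H'}\to\arof H=\arof F$. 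Thus $\cf\colon\cgG\to\arof F$, with $\cgG$ carrying its $\X$-arena net structure, is an \XICP; moreover $F$ contains no $\ldia\unit$, since no rule of $\IpX$ can introduce it.

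For the right-to-left direction, suppose there is an \XICP $\cf\colon\cgG\to\arof F$. Let $F'$ be the formula associated to the arena net $\cgG$ (which exists by \Cref{thm:Marena}) and $H'=\iseven{F'}$ its canonical polarisation, so that $\arof{H'}$ is the underlying \ma of $\cgG$ and, forgetting the equivalence relation $\axeq$, the map $\cf$ is a skew fibration $\arof{H'}\to\arof H$. Then \Cref{thm:skewfibToDer} gives $H'\provevia{\DOWN{\PIp}} H$, while \Cref{thm:arenaSoundness} followed by the second item of \Cref{thm:pol} gives $\provevia{\PIMLLX} H'$; combining both, $\provevia{\PIMLLX} H'\provevia{\DOWN{\PIp}} H$. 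The Decomposition theorem then yields $\provevia{\PX} H$, and a final application of \Cref{thm:pol} returns $\provevia{\IpX}\spalletto H=F$.

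The mathematical content is light, so the points I expect to need care with are the following. First, the polarity bookkeeping: the odd/even polarity of every vertex is fixed by its depth parity and is preserved by $\cf$, so the two occurrences of $H'$ (one in the linear half, one as source of the skew fibration) denote the same \pformula, which is exactly what lets the two halves of the Decomposition recombine. Second, and the genuine subtlety, is the $\ldia\unit$ placeholder: the arena net $\cgG$ may legitimately contain an empty $\ssdia$ arising from a $\isodd{\ldia\unit}$, so its associated formula $F'$ may contain $\ldia\unit$ even though $F$ does not. This is harmless because the rule $\deep\wdrule$ of $\DOWN{\PIp}$ rewrites $\isodd{\ldia\unit}$ into $\isodd{\ldia A}$, so the skew fibration produced by \Cref{thm:derToSkewfib} (and consumed by \Cref{thm:skewfibToDer}) maps the placeholder $\ssdia$ onto a genuine $\ssdia$ of $\arof F$, keeping the conclusion $F$ free of $\ldia\unit$ as required.
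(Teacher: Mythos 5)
Your proposal is correct and follows essentially the same route as the paper: it chains \Cref{thm:pol}, the Decomposition theorem, \Cref{thm:arenaCompleteness}/\Cref{thm:arenaSoundness} (i.e.\ \Cref{thm:linSeq}) for the linear half, and \Cref{thm:derToSkewfib}/\Cref{thm:skewfibToDer} for the structural half, exactly as the paper's more compact chain of equivalences does. Your explicit handling of the two directions, of the $\K$/$\D$ versus $\CK$/$\CD$ mismatch in the Decomposition theorem's statement, and of the $\ldia\unit$ placeholder in the arena net is more careful than the paper's own write-up, but it is the same argument.
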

\begin{proof}
	By \Cref{thm:decomposition} there is a \pformula $H$ such that $F=\spalletto H$ and
	$
	\provevia{\IpX} F  
	$ iff $
	\provevia{\PIMLLX} H' \provevia{\DOWN{\PIp}} H
	$
	for a  \pformula $H'$.
	By \Cref{thm:skewfibToDer} we have that 
	$H' \provevia{\DOWN{\PIp}} H$ iff there is a skew fibration  $\cf: \arof{\gH''} \to \arof H$.
	We conclude by  \Cref{thm:linSeq} since 
	by \Cref{thm:pol} we have
	$\provevia{\PIMLLK} H' $ iff $ \provevia{\IMLLX}  \spalletto{H'}$.
\end{proof}

\begin{lemma}
	Let  $\X\in \set{\CK,\CD}$.
	If $ \gH$ and $\gG$ are \ddags and $f\colon \vertices[\gH] \to \vertices[\gG]$, then it can be checked in polynomial time (in the size of $\gH\cup \gG$) if $f$ is a \XICP.
\end{lemma}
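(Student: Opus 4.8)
The plan is to observe that, by the definition of \XICP, the map $f$ is an \XICP if and only if three conditions hold simultaneously: (a) the codomain $\gG$ is a modal arena all of whose $\ldia$-vertices have a $\medge$-successor (so that the associated formula $F$, which exists by \Cref{thm:Marena}, contains no occurrence of $\ldia\unit$); (b) the domain $\gH$, together with the equivalence relation $\axeq$ on $\vertices[\gH]$ which we take to be part of the given data, is an $\X$-arena net; and (c) $f$ is an (even) skew fibration from $\gH$ to $\gG$. The conjunction of (a), (b) and (c) is exactly the assertion that $f$ is an \XICP, so it suffices to verify each separately in time polynomial in $\sizeof{\vertices[\gH]}+\sizeof{\vertices[\gG]}$ together with the number of edges and $\axeq$-pairs.

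First I would dispatch all the \emph{local} conditions, that is, those expressible by a first-order formula with bounded quantification over vertices and edges. These include every clause of \Cref{def:ma} (arena, modality of $\medge$, proper labelling) for both $\gH$ and $\gG$; the absence of $\ldia\unit$ in $\gG$ (every $\ldia$-vertex has an outgoing $\medge$); the partition requirements and the \emph{linked} property for $\gH$; and Conditions \ref{cond:functorial}, \ref{cond:mod}, \ref{cond:modD}, \ref{cond:K} and \ref{cond:D} of \Cref{def:KarenaNet}. Each such clause is decided by enumerating the constantly many vertices it quantifies over, so the total cost is polynomial. The ingredients of (c) that are not purely local still reduce to standard polynomial computations on the underlying \dags: the depth $\dep v$ of every vertex is the length of a longest $\iedge$-path to a root and is obtained by one topological pass; the reachability relation $\iedge^*$ is obtained by transitive closure; and from these the meeting depth, hence the relations $\conj$ and $\disj$, are computed for all pairs of vertices (these being well defined by \Cref{lemma:cones}). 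With $\dep{}$, $\conj$ and $\disj$ precomputed, the homomorphism clauses ($\iedge$-, $\medge$-, depth- and label-preservation, the modal condition, and preservation of $\conj$) are local checks, while the skew-lifting clause is verified by a bounded search: for every pair $v,w$ with $f(v)\conj w$ one looks for a witness $u$ with $v\conj u$ and $f(u)\nconj w$ among the polynomially many candidates.

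The main obstacle is Conditions \ref{cond:acyclic} and \ref{cond:functional} of \Cref{def:KarenaNet}, since both are phrased as universally quantified statements over the \emph{checked paths} of the linking graph $\linkgraph\gH$, of which there may be exponentially many. The key step is to recast them as polynomial reachability and cycle-detection problems. Let $E$ denote the edge set $\oedge[\gH]\cup\omedge[\gH]\cup\axlink[\gH]$ from which the forbidden $\axlink$-edges (those whose source $v$ satisfies $\mconeof v\neq\emptyset$) have been deleted, so that checked paths are exactly the $E$-paths issuing from $\irof[\gH]\cap\mrof[\gH]$. For Condition \ref{cond:acyclic} it then suffices to compute the set of vertices $E$-reachable from $\irof[\gH]\cap\mrof[\gH]$ and to test, by a single depth-first search, that the subgraph induced on them contains no cycle. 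For Condition \ref{cond:functional}, functionality fails exactly when some odd vertex $\isodd v$ (itself $E$-reachable from $\irof[\gH]\cap\mrof[\gH]$) reaches a root along an $E$-path avoiding the good targets $T_v=\set{w\mid v\iedge w,\ \iseven w}$; hence for each odd $v$ one deletes $T_v$ and tests whether a root stays $E$-reachable from $\isodd v$. Both tests are single graph searches performed at most once per vertex, so they run in polynomial time. As every condition (a)--(c) has now been reduced to a polynomial-time check and their conjunction characterises the \XICP, the whole verification is polynomial, which is the claim.
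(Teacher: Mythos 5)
Your proof is correct and takes essentially the same route as the paper's, which simply asserts without detail that each of the component checks (that the codomain is a modal arena, that the domain is an $\X$-arena net, and that $f$ is a skew fibration) can be done in polynomial time. You additionally supply the one genuinely non-obvious ingredient that the paper leaves implicit, namely that the acyclicity and functionality conditions, though quantified over a possibly exponential set of checked paths, reduce to polynomial reachability and cycle-detection tests on the linking graph.
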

\begin{proof}
	All the following checks can be done in polynomial time: 
	that a \ddag $\cgG$ is an \ma; that an \ma is a $\K$- or $\D$-arena net; and that a map between two \mas is a skew fibration. 
\end{proof}

\begin{corollary}
	Let\/ $\X\in \set{\CK,\CD}$. Then  the \XICPs form a sound and complete proof system in the sense of Cook and Reckhow~\cite{cook:reckhow:79}.
\end{corollary}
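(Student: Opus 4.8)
The plan is to derive the corollary directly from the soundness--completeness theorem (Theorem~\ref{thm:ICPsoundandcomplete}) and the polynomial-time verification lemma that immediately precede it, once the Cook--Reckhow notion is unfolded. Recall that a proof system for a set $L$ of theorems is a function $P\colon\Sigma^*\twoheadrightarrow L$ computable in polynomial time whose range is exactly $L$; equivalently, it is given by a proof-checking relation $\mathrm{Prf}(\pi,F)$ decidable in time polynomial in $|\pi|$, such that $F\in L$ iff $\mathrm{Prf}(\pi,F)$ holds for some~$\pi$. Here $L$ is the set of theorems of $\X$, and I would take a proof of $F$ to be an encoding of a triple: a formula $F$ (with no occurrence of $\ldia\unit$), an $\X$-arena net $\cgG$, and a vertex map $\cf\colon\vertices[\cgG]\to\vertices[{\arof F}]$.

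First I would fix the encoding and the checking relation: $\mathrm{Prf}(\pi,F)$ holds when $\pi$ codes such a triple, $\arof F$ (computed from $F$ by the clauses of \Cref{eq:translation}) coincides with the codomain $\gG$ of $\cf$, and $\cf$ is an \XICP. By the lemma preceding the corollary, deciding whether $\cf$ is an \XICP---namely that $\cgG$ is an \ma, that it is an $\X$-arena net, and that $\cf$ is a skew fibration onto $\arof F$---runs in time polynomial in $|\cgG|+|\gG|$, hence in $|\pi|$; and computing $\arof F$ from $F$ and comparing it with $\gG$ is likewise polynomial. Thus $\mathrm{Prf}$ is polynomial-time decidable.

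It then remains to check that the set of $F$ admitting such a $\pi$ is exactly the theorems of $\X$, which yields soundness and completeness together. This is a two-step chain of equivalences: by Theorem~\ref{thm:ICPsoundandcomplete}, there is an \XICP $\cf\colon\cgG\to\arof F$ iff $\provevia{\IpX}F$; and by Theorem~\ref{thm:soundCompleteLogic}, $\provevia{\IpX}F$ iff $F$ is a theorem of $\X$. Composing them, $F$ has a proof in our system precisely when $F$ is a theorem, i.e.\ every \XICP witnesses a theorem (soundness) and every theorem is witnessed by an \XICP (completeness).

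I expect no genuine obstacle here: the substantive content is already carried by Theorem~\ref{thm:ICPsoundandcomplete} and the polynomial-time lemma, so the corollary is essentially a matter of matching definitions. The only point requiring care is the encoding convention: since $\feq$-equivalent formulas share the same arena (Proposition~\ref{prop:formEq}), one must decide whether to build $F$ explicitly into $\pi$ and verify $\arof F=\gG$---as above---or instead regard theorems as $\feq$-classes and recover a representative from $\gG$ via Theorem~\ref{thm:arToFor}. Either convention satisfies the Cook--Reckhow conditions.
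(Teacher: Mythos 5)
Your proposal is correct and follows exactly the route the paper intends: the corollary is an immediate consequence of \Cref{thm:ICPsoundandcomplete} (composed with \Cref{thm:soundCompleteLogic}) for the surjectivity onto theorems, and of the preceding polynomial-time verification lemma for the Cook--Reckhow efficiency condition. Your unfolding of the encoding and checking relation is a reasonable elaboration of what the paper leaves implicit, and the remark about $\feq$-classes versus explicit formulas is a sensible clarification rather than a deviation.
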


\section{Winning Strategies}\label{sec:games}
\def\emptyseq{\epsilon}

\def\eventurn{$\evensym$-turn\xspace}
\def\oddturn{$\oddsym$-turn\xspace}
\def\eventurns{$\evensym$-turns\xspace}
\def\oddturns{$\oddsym$-turns\xspace}

\def\add#1{\mathsf{add}_{#1}}
\newcommand{\addi}[2]{\mathsf{add}_{#2}^{#1}}
\def\addsize#1{{\mclr h}_{#1}}
\def\emptyaddress{\epsilon}

\def\aproj#1{\lfloor\!\lfloor #1 \rfloor\!\rfloor}

In this section we provide the definition of winning strategies for a two-player game on a modal arena $\arof F$, and we show the correspondence between these strategies and $\CK$ and $\CD$ proofs of $F$.

\begin{definition}\label{def:winning}
	Let $\gG$ be a \ma. A \emph{move} is a vertex of $\gG$.
	Let $\jpath=\jpathi 0, \dots,\jpathi n$ be a sequence of distinct \emph{moves} (we denote by $\emptyseq$ the empty sequence).
	If $v$ and $w$ are two moves in $\jpath$, we say that a vertex $w$ \emph{justifies} $v$ whenever $v\iedge[\gG] w$.
	We call a move $\jpathi i$ in $\jpath$ a \emph{\evenmove} or \emph{\oddmove} if $i$ is respectively even or odd.
	We say that $\jpath$ is a \emph{view} if the following conditions are fulfilled:
	
	\hspace{-8pt}\begin{tabular}{ll}
		$\jpath$ is a \emph{play}: 
		&
		if $\jpath\neq\emptyseq$, then ${\jpathi{0}}\in\irof[\gG]$;
		\\		
		$\jpath$ is \emph{justified}:
		&
		if $i>0$, then $\jpathi i \iedge \jpathi j$ for some $j<i$;
		\\
		$\jpath$ is \emph{$\evensym$-shortsighted}:
		&
		if $ \iseven{\jpathi {i+1}}$ and  $\isodd{\jpathi {i}} $, then $\jpathi {i+1}\iedge\jpathi i$;
		\\
		$\jpath$ is \emph{$\oddsym$-uniform}:
		& 
		if $ \isodd{\jpathi {i+1}}$ and  $\iseven{\jpathi {i}} $, then $\lab{\jpathi {i+1}}=\lab{\jpathi {i}}$.
		\\
		$\jpath$ is \emph{modal}:
		& 
		$\jpathi i\in\avertices\cup\dvertices$.
	\end{tabular}	

	The \emph{predecessor} of a non-empty view $\jpath$ is the sequence obtained by removing the last move in $\jpath$.
	The \emph{successor} is the converse relation.
	A \emph{winning innocent strategy} (or \WIS) on $\gG$ is a finite predecessor-closed set $\strat$ of views in $\gG$ such that: 

	\begin{itemize}
		\item 
		$\strat$  and \emph{$\evensym$-complete}: if $\view\in\strat$ has even length, then every successor of $\view$ is in $\strat$;
		
		\item
		$\strat$ is \emph{deterministic} and  \emph{total}: if $\view\in\strat$ has odd length, then exactly one successors of $\view$ is in $\strat$;
		
		\item
		$\strat$ is \emph{$\ldia$-complete}: if $\iseven v\in\dvertices[\gG]$ occurs in $\strat$, then $\mconeof{v}\neq \emptyset$ and each $w$  such that $v\medge w$ occurs in $\strat$.
	\end{itemize}

	We say that a \WIS $\strat$ is \emph{atomic} if $\viewi i\in \avertices[\gG]$ for every $\view \in \strat$.
\end{definition}

\begin{remark}
	Our definition of \WIS restricted to (non-modal) arenas is the same  as the one in the literature, or simply a reformulation in our setting (see e.g. \cite{mur:ong:evolving} or \cite{ICP}).
\end{remark}
By means of example consider the strategy with maximal views shown in \Cref{fig:intro}.
We remark that the totality and $\evensym$-completeness of this strategy is guaranteed by the fact that the modal arena net is linked.

\def\Fview{framed abstract view\xspace}
\def\Fviews{framed abstract views\xspace}
\def\Aview{abstract view\xspace}
\def\Aviews{abstract views\xspace}

\begin{definition}

	Let $\X\in\set{\CK,\CD}$ and $\cgG$ be an $\X$-arena net. 
	A \emph{\Fview} of $\cgG$ is a reverse checked path 
	on $\linkgraph\cgG$.	

	We denote by $\aproj \jpath$ the sequence of moves in $\gG$ obtained by removing from a play $\jpath$ all modal vertices. 
	For example if $\jpath=\lbox u v  \lbox\ldia w$, then $\aproj{\jpath}=uvw$.
	
	An \emph{\Aview} $\aview$ in $\cgG$ is a sequence of atomic vertices in $\cgG$ defined as follows:
	\begin{itemize}
		\item either $\aview=\aproj{\view}$ for a \Fview $\view$ of $\cgG$;
		
		\item 
		or $\aview=\aproj{s_1} v_1 w_1  \aproj{s_3}\dots \aproj{s_{2k-1}} v_k w_k \aproj{s_{2k+1}}  $
		for a \Fview $\view= s_1 v_1 s_2 w_1 s_3\dots s_{2k-1} v_k s_{2k} w_k p_{2k+1}  $  of $\cgG$ 
		with  $v_i,w_i\in\dvertices$ such that $v_i\axlink w_i$ for all $i\in\intset0k$;
	\end{itemize}	
	Note that by definition, an \Aview of a non-modal arena net is a reverse path in $\cgG$.
\end{definition}

We recall the result on \ICPs from \cite{ICP} which we aim to extend \CKICPs and \CDICPs in this section.
\begin{theorem}[\cite{ICP}]\label{thm:winningI}
	If $\cf: \cgG \to \arof F$ is an \ICP of a modality-free formula $F$, 
	then the set of images of all \Aviews of $\cgG$ is a \WIS on $\arof F$.
\end{theorem}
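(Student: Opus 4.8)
The plan is to verify directly that the set $\strat$ of images of abstract views satisfies each clause of the definition of a \WIS on $\arof F$, reading off the required game properties from the arena-net conditions on $\cgG$ and from the fact that the skew fibration $\cf$ preserves $\iedge$, preserves depth $\dep{\cdot}$ (hence the parity of every vertex), and preserves labels. Since $F$ is modality-free, $\cgG$ has $\mvertices[\cgG]=\emptyset$, so the \emph{modal} clause for views and the \emph{$\ldia$-complete} clause for \WISs are vacuous, and every $\axeq$-class of $\cgG$ is a two-element set $\set{\isodd v,\iseven w}$ consisting of one odd and one even atomic vertex carrying the same label (this is forced by linkedness together with the partition axioms).

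First I would show that each image $\cf(\aview)$ is a view. An abstract view is a reverse checked path $u_0u_1\cdots u_n$ in $\linkgraph\cgG$ starting at a root $u_0\in\irof[\cgG]$, whose steps alternate between reverse-$\axlink$ edges (from an even vertex to the unique odd vertex of its $\axeq$-class) and reverse-$\oedge$ edges (odd edges $u_{i+1}\iedge u_i$). Applying $\cf$: the image $\cf(u_0)$ is a $\iedge$-root of $\arof F$ because $\cf$ preserves depth; the reverse-$\oedge$ steps become $\iedge$-edges of the correct orientation, yielding $\evensym$-shortsightedness; the reverse-$\axlink$ steps join equally-labelled vertices, yielding $\oddsym$-uniformity; and each move is justified because Condition~\ref{cond:functional} (functionality) supplies, along the corresponding forward checked path, an even vertex $\iseven w$ with $u_i\iedge w$ occurring earlier in $\aview$, whose image justifies $\cf(u_i)$ by $\iedge$-preservation. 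Distinctness of the moves follows from the alternating depth structure of views together with depth-preservation, acyclicity of $\linkgraph\cgG$ (Condition~\ref{cond:acyclic}) guaranteeing that the source path has distinct vertices in the first place.

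Predecessor-closure is then immediate, since deleting the last step of a reverse checked path yields a reverse checked path and $\cf$ commutes with taking predecessors. Determinism and totality are also clean in the modality-free case: an odd-length view ends at an \evenmove, played by $\evensym$; in $\arof F$ the unique admissible response is an odd-depth vertex of the same label, justified by pointing into the current thread, and functionality forces such a vertex to be unique, while linkedness guarantees one exists (the even atomic vertex just played is the even member of a two-element $\axeq$-class, whose odd partner supplies the reverse-$\axlink$ continuation). Hence each odd-length member of $\strat$ has exactly one successor in $\strat$, independently of which $\aview$ realises it.

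The substantive clause, and the step I expect to be the main obstacle, is $\evensym$-completeness: for an even-length view $\view=\cf(\aview)$ ending at an \oddmove $\cf(u_n)$, every legal \evenmove extending it must lie in $\strat$. By $\evensym$-shortsightedness a legal \evenmove is a vertex $z\iedge\cf(u_n)$ of $\arof F$, so I must exhibit a reverse-$\oedge$ extension $u_{n+1}\iedge u_n$ of $\aview$ in $\cgG$ with $\cf(u_{n+1})=z$. This is exactly a lifting problem for $\cf$ against the edges into $u_n$, and it is resolved by the skew-lifting condition: after rephrasing $z\iedge\cf(u_n)$ through the $\conj$/$\nconj$ relations, skew lifting produces a preimage $u_{n+1}$ with the required incidence. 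The delicate point is to show that the preimages furnished by skew lifting are in exact correspondence with the legal \evenmoves, neither missing any (completeness) nor yielding continuations that violate the view discipline; this is where the weakening–contraction content of $\cf$ (a vertex of $\arof F$ may have several preimages, and some preimages may be erased) must be reconciled with the purely linear functionality and acyclicity conditions of the arena net. Controlling this interaction is the technical heart of the argument, and once it is in place the four clauses above certify that $\strat$ is a \WIS on $\arof F$.
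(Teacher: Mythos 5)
Your overall architecture is the right one, and it matches how the paper actually argues this kind of statement: \Cref{thm:winningI} itself is recalled from \cite{ICP} without proof, but the paper's proof of its modal generalisation, \Cref{lemma:CPtoWS}, proceeds exactly by the clause-by-clause verification you set up (view-hood from preservation of $\iedge$, $\depsym$ and $\labsym$; justification of \oddmoves from Condition~\ref{cond:functional}; two-element $\axeq$-classes from linkedness and the partition axioms; predecessor-closure for free). Those parts of your argument are sound, and your instinct that $\evensym$-completeness is where the real content lies is correct --- indeed more honest than the paper's one-line ``follows by definition of $\oedge\cup\omedge$''.

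The genuine gap is that you do not actually prove $\evensym$-completeness (or, relatedly, determinism); you name it ``the technical heart'' and stop. This is not a cosmetic omission, and the difficulty is not quite where you place it. You frame it as a lifting problem for the edges into the \emph{last} vertex $u_n$ of the current abstract view, to be solved by skew lifting along that same thread. But in the \ICP of \Cref{eq:firstICP} the root $\iseven{a_2}$ of $\cgG$ has $\axeq$-partner $\isodd{a_3}$ with \emph{no} $\iedge$-predecessor in $\cgG$ (that contracted copy has been weakened down to a bare atom), so the abstract view through $a_2$ terminates after two moves, while its image $\iseven{a}\,\isodd{a}$ in $\arof F$ still admits the legal \evenmove $b^{\evensym}$. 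The required continuation is therefore \emph{not} obtained by lifting an edge into $u_n=a_3$; it has to be supplied by a different abstract view (here the one through $a_0$, whose image shares the needed suffix), or by an argument that re-anchors the thread at an earlier preimage. So a per-thread lifting argument of the kind you sketch cannot close the case; you need a global statement about the whole family of \Aviews, using $\conj$-preservation and skew lifting together with the structure of contractions. Determinism has the dual problem: two distinct abstract views can share an image prefix (contraction identifies vertices of different $\axeq$-classes), and ``functionality forces such a vertex to be unique'' only controls a single path of $\linkgraph\cgG$, not two paths with a common image --- you must additionally show their $\oddsym$-responses agree. Until these two interactions between the multi-valued preimage of $\cf$ and the game conditions are controlled, the proof is incomplete at its decisive step.
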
	
For this purpose, we define \emph{frames} as equivalence classes of modal vertices in the arena induced by the views.
They are meant to reconstruct the information about the applications of modal axioms, that are, the $\axeq$-equivalence classes of the modal arena net of the \ICP.
This information allows us to ``de-contract'' the formula $F$ in such a way to obtain a formula $F'$ which admits a linear derivation.

Let $\gG=\arof F$ be a \ma.
The  \emph{address}  of a vertex in $v\in\vertices[\gG]$ is 
the unique (possible empty) sequence of modal vertices $\add v=m_1 \cdots  m_k$  such that 
$m_0=v$ and $m_{i}=\pmv{m_{i-1}}\neq m_{i-1}$ for each $i\in \intset1{k}$.
Intuitively, the address of a vertex $v$ is the list of the modalities in the path the node corresponding to $v$ to the  root of the formula tree $\ftree F$.
We denote by $\addsize{v}=\sizeof{\add v}$ and  
$\addi h {v}$ the $h^{\mbox{\small th}}$ element $m_h$ in $\add v$. 
If $\view $ is a view, we write  $\addsize{\view}=\max\set{\addsize{v}\mid v\in\view}$.
Moreover, if  $\strat$  is a strategy on $\gG$, we say that $v\in\vertices[\gG]$ is \emph{involved}  in $\strat$ if either $v\in \view$ or if $v\in\add{\viewi i}$ for a view $\view \in \strat$.

\begin{definition}\label{def:winningCK}
	Let $\view=\viewi 1 \cdots \viewi n$ be a view on a \ma $\gG$.
	
	We say that $\view$ is \emph{well-framed} if $\sizeof{\add{\viewi {2k}}}=\sizeof{\add{\viewi {2k+1}}}$ for every even $2k\in \intset{0}{n-1}$.
	A strategy is \emph{well-framed} if each view in it is.

	If $\view$ is well-framed, then we define its \emph{framed view} as 
	the $\addsize{\view}\times n$  matrix 
	$\fviewof{\view}=\big( \fviewof{\view}_{0}, \dots, 	\fviewof{\view}_{n}\big)$ 
	with elements in $\vertices[\gG]\cup\set{\emptyseq}$ such that  
	each column $\fviewof{\view}_{i}$ is defined as follows:
	{\small$$
	\fviewof{\view}_{i}
	=
	\begin{pmatrix}
		\fviewof{\view}^{\addsize{\view }}_{i}	&=&  \addi{\addsize{\viewi{i}}}{\viewi {i}} 
		\\
		&\vdots
		\\
		\fviewof{\view}^{h_i+1}_{i}&=&\addi{1}{\viewi {i}} 
		\\
		\fviewof{\view}^{h_i}_{i}&=&\emptyseq
		\\
		&\vdots		
		\\
		\fviewof{\view}^1_{i}&=&\emptyseq
		\\
		\fviewof{\view}^0_{i}&=& \viewi i
	\end{pmatrix}
	$$}
	where a $h_i\in\intset0{\addsize{\view }}$ defined for each $i\in\intset0n$.
	
	Moreover, each $\fviewof{\view}$ induces an equivalence relation $\strateq{\gG}{\view}$ over $\vertices[\gG]$ given by the symmetric, transitive, and reflexive closure of the following relations:
	$$
	u\strateqone{\gG}{\view}w
	\quad \mbox{ iff }
	\quad
	\begin{array}{c}
		\mbox{$u=\fviewof{\view}^h_{2k}$ and $w=\fviewof{\view}^h_{2k+1}$}
		\\
		\mbox{for a $2k < n$ and a $h\leq{\addsize{\view}}$}
	\end{array}
	$$

\end{definition}

We write $\add v\strateq{\gG}{\view} \add w$ if $v$ and $w$ are involved in $\view$ and $\addi{k}{v}\strateq{\gG}{\view} \addi kw$ for all $k$.

\begin{lemma}\label{lemma:functo}
	Let $\gG$ be a \ma, $\view$ be a well-framed
	view on $\gG $, and $v,w\in\vertices[\gG]$.
	If $v\strateq{\cgG}{\view}w $,
	then 
	there are some $i,j\in\intset0n$ and a $k\in\intset{0}{\addsize{\view}}$ such that 
	$v = \fviewof{\view}^k_{i}$ and $w = \fviewof{\view}^k_{j}$.
	Moreover, for any $h>k$ we have
	$\fviewof{\view}^{h}_{i}\strateq{\cgG}{\view}\fviewof{\view}^{h}_{j}$.
\end{lemma}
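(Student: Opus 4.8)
The plan is to prove both claims simultaneously by induction on the length of a chain of $\strateqone{\cgG}{\view}$-steps witnessing $v\strateq{\cgG}{\view}w$, after first putting the framing matrix $\fviewof{\view}$ into a convenient normal form. The first thing I would record is the exact shape of its columns: by Definition~\ref{def:winningCK}, column $i$ carries the move $\viewi i$ at row $0$, then a block of empty entries, then the address $\add{\viewi i}$. Since $\add{\viewi i}$ has exactly $\addsize{\viewi i}$ entries filling the rows up to $\addsize{\view}$, the number of empty entries is $h_i=\addsize{\view}-\addsize{\viewi i}$, so the addresses are \emph{top-aligned}: the entry $\addi p {\viewi i}$ occupies row $\addsize{\view}-\addsize{\addi p {\viewi i}}$, because the part of the address strictly above it coincides with its own address. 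The crucial consequence is that a modal vertex $m$ occupies a \emph{single} well-defined row $\addsize{\view}-\addsize m$ in every column in which it occurs as an address entry. Thus the only way a vertex can ever sit at two different framing rows is to be a $\ldia$-vertex that is simultaneously a move (at row $0$) and an enclosing modality of some later move (at its level $\addsize{\view}-\addsize m>0$).

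For the base case, a single step $v\strateqone{\cgG}{\view}w$ exhibits $v=\fviewof{\view}^k_{2\kappa}$ and $w=\fviewof{\view}^k_{2\kappa+1}$ at a common row $k$, which is the first claim. The second claim is immediate here: by definition $\strateqone{\cgG}{\view}$ links the two paired columns $2\kappa$ and $2\kappa+1$ row by row, so $\fviewof{\view}^h_{2\kappa}\strateqone{\cgG}{\view}\fviewof{\view}^h_{2\kappa+1}$ for \emph{every} $h$, in particular for all $h>k$. For the inductive step I would peel off the last step, writing $v\strateq{\cgG}{\view}u\strateqone{\cgG}{\view}w$, and combine the induction hypothesis for $v\strateq{\cgG}{\view}u$ with this whole-column relation. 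When the hub vertex $u$ is atomic, or is a modal vertex entering both segments through address occurrences, top-alignment forces all of its occurrences to the single row $\addsize{\view}-\addsize u$; the two segments then agree on that row, and both claims propagate by transitivity, the tails above row $k$ being either equal or again linked by the row-by-row character of $\strateqone{\cgG}{\view}$.

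The hard part is the remaining case, where the hub $u$ is a $\ldia$-vertex joined to one segment through its move occurrence at row $0$ and to the other through an address occurrence at row $\addsize{\view}-\addsize u>0$. This is the only mechanism by which transitivity can cross framing rows, and it is exactly what must be shown \emph{not} to separate the genuine endpoints $v$ and $w$. Here I would exploit the view conditions of Definition~\ref{def:winning}: $\oddsym$-uniformity forces the row-$0$ partner of $u$ to carry the label $\ldia$ and hence to lie in the same frame as $u$, while $\ldia$-completeness guarantees that the cone $\mconeof u$ of the played diamond is itself explored, so that the address occurrences of $u$ are framed consistently with its move occurrence. The target is to establish that $\strateq{\cgG}{\view}$ is \emph{level-consistent}, i.e.\ that any two related vertices do share a framing row, so that the potential row-toggle through $u$ never detaches $v$ from $w$.

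I expect this level-consistency to be the main obstacle, since it is precisely the point where the combinatorics of the play (how a diamond move can double as an enclosing modality) interacts with the static address map $v\mapsto\pmv v$. I anticipate it will require the full force of well-framedness together with the compatibility between $\pmv{v}$ and the justification structure recorded in Lemma~\ref{lemma:modalities} (itself a consequence of Lemma~\ref{lemma:cones}). Once level-consistency is in hand, the second claim follows exactly as in the easy cases, from top-alignment of the tails and the fact that $\strateqone{\cgG}{\view}$ relates paired columns at all rows at once.
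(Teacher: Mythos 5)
Your reading of the matrix $\fviewof{\view}$ is correct and in fact sharper than anything made explicit in the paper: $h_i=\addsize{\view}-\addsize{\viewi i}$, the addresses are top-aligned, and the only vertices that can occupy two distinct rows are $\ldia$-vertices that are played as moves and whose cones contain later moves. Your base case and the two easy inductive cases are also fine. But the proposal is not a proof: the pivotal case --- the hub $u$ entering one segment at row $0$ as a move and the other at row $\addsize{\view}-\addsize{u}>0$ as an address entry --- is exactly where the statement could fail, and at that point you only announce that you ``expect'' and ``anticipate'' a level-consistency argument; nothing is actually established. Worse, the ingredients you propose to use there are not available under the lemma's hypotheses: $\ldia$-completeness and the claim that ``the cone of the played diamond is itself explored'' are properties of a \WIS (Definition~\ref{def:winning}), whereas the lemma is stated for a single well-framed view on a \ma. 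So as written, the crux of the argument is both undischarged and resting on tools you are not entitled to invoke.

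For comparison, the paper's proof takes a different and more syntactic route: it treats the first claim (the existence of a common row $k$) as immediate from the definition of $\strateq{\cgG}{\view}$, and it proves the ``moreover'' part by induction on the column distance $j-i$ with a case analysis on the parity of $j$; the even-$j$ case is where it absorbs the possibility that one vertex occurs in several columns, using that the generating relation $\strateqone{\cgG}{\view}$ links the paired columns $2\kappa$ and $2\kappa+1$ at every row simultaneously. To repair your version you would either have to close the level-consistency step explicitly --- showing, from well-framedness of every prefix of $\view$ alone, that the row-$0$ and row-$r$ occurrences of a played $\ldia$ are always reconnected at a common row --- or switch to the paper's column-distance induction, where this difficulty is reorganised rather than confronted head-on.
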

\begin{proof}
	Let us write $\axeq$ instead of $\strateq{\cgG}{\view}$.
	If $v\axeq w $, then by definition there are $i,j,k\in \N$ such that 	$v = \fviewof{\jpath}^k_{i}$ and $w = \fviewof{\jpath}^k_{j}$.
	To prove that $\fviewof{\view}^{h}_{i}\axeq\fviewof{\view}^{h}_{j}$ for all $h\geq k$, we assume w.l.o.g. that $j\geq i$ and we proceed by induction on $n=j-i$. 
	If $n=0$, then the statement trivially holds since $i=j$ and $\axeq$ is reflexive.
	If $n>0$, we make case analysis on the parity of $j$.
	If $j$ is odd, then $ \fviewof{\jpath}^{h}_{j} 		\axeq 		 \fviewof{\jpath}^{h}_{j-1}$
	for all $h\geq k$ by definition of $\axeq$. 
	By transitivity  of $\axeq$  we have
	$\fviewof{\jpath}^{h}_{j-1}\axeq\fviewof{\jpath}^{h}_{i}$.
	We conclude by inductive hypothesis since $(j-1)-i<n$.
	%
	If $j$ is even, then 
	$ \fviewof{\jpath}^{k}_{j}  	\axeq 		\fviewof{\jpath}^{k}_{j-1} $
	if and only if
	either 
	$ \fviewof{\jpath}^{k}_{j} 					= 						\fviewof{\jpath}^{k}_{j-1}$ 
	or
	$ \fviewof{\jpath}^{k}_{j}  	\axeq 		\fviewof{\jpath}^{k}_{j+m} $
	for a $m>0$ such that $\fviewof{\jpath}^{k}_{j+m}= \fviewof{\jpath}^{k}_{j'}  $ for a $j'<j$.
	In the first case we conclude by inductive hypothesis since $ \fviewof{\jpath}^{h}_{j} = \fviewof{\jpath}^{h}_{j-1}$ for all $h>k$, and therefore 
	$ \fviewof{\jpath}^{h}_{j} \axeq \fviewof{\jpath}^{h}_{j-1}$.
	In the second case we conclude by inductive hypothesis since $j'<j$.	
\end{proof}

\begin{definition}
	Let $\strat$  be a well-framed strategy on a \ma $\gG$.
	We say that $\strat$ is \emph{linked} if
	for every $\view\in\strat$ the $\strateq{\gG}{\view}$-classes are of the shape $\set{\isodd v_1, \dots, \isodd v_n, \iseven w}$.
	This induces an edge-relation $u\stratlink{\cgG}{\strat}w=\set{\isodd u\strateq{\cgG}{\view} \iseven w \mid \view \in \strat}$.
	
	 A \emph{$\CK$-framed} strategy on a \ma $\gG$ is a well-framed linked strategy $\strat$ such that 
	for each $\iseven w\in\mvertices[\gG]$ involved in $\strat$ the following conditions are fulfilled:
	\begin{enumerate}  
		
		\item\label{cond:frames:CKbox} 
		if $w\in \bvertices[\gG]$, then $v\in \bvertices[\gG]$ for any $v\stratlink{\gG}{\strat} w$;

		\item\label{cond:frames:CKdia} 
		if $w\in \dvertices[\gG]$, then 
		$v\in \dvertices[\gG]$ for a unique $v$ such that $v\stratlink{\gG}{\strat} w$.
		
	\end{enumerate}
	 A \emph{$\CD$-framed} strategy on a \ma $\gG$ is a well-framed linked atomic 
	 strategy $\strat$ such that 
	 for each $\iseven w\in\mvertices[\gG]$ involved in $\strat$, it satisfies Condition \ref{cond:frames:CKbox} plus the following
	\begin{enumerate}  
		\setcounter{enumi}{2}	
		
		\item\label{cond:frames:CDdia} 
		if $w\in \dvertices[\gG]$, then  $v\in \dvertices[\gG]$ for at most one $v$ such that $v\stratlink{\gG}{\strat} w$.

	\end{enumerate} 
For $\X\in\set{\CK,\CD}$, we say that $\strat$ is a \XWIS if it is a $\X$-framed \WIS.
\end{definition}

\begin{example}
	Let us consider the two non $\CK$-provable formulas 
	$F=\lbox a_1 \imp a_0$ and  $F'=(\lbox a_2 \imp \lbox b_1) \imp \lbox (a_3 \imp  b_0) $ 
	where we enumerate occurrences of the same atom to improve readability.
	
	The unique view on $\arof F$ is $\iseven a_0 \isodd a_1$. 
	Since  $\add{a_0}=\emptyseq $ and $\sizeof{ \add{a_1}}=1$, we conclude that any strategy on $\arof F$ is not be well-framed.
	
	Similarly, the unique maximal view on $\arof{F'}$ is $\iseven b_0 \isodd b_1 \iseven a_2 \isodd a_3$. 
	This view is well-framed. However  its frame contains all three modalities of the formula, two of which are $\evensym$; Hence any  strategy on $\arof{F'}$ would not be $\CK$-framed
\end{example}

As consequence of Lemma~\ref{lemma:functo} we have the following result

\begin{corollary}[Functoriality]\label{cor:functoriality}
	Let $\strat$ is a well-framed strategy on a \ma $\gG$. 
	If $v,w\in\vertices[\gG]$ and $v\strateq{\gG}{\view} w$, then $\add v\strateq{\gG}{\view} \add w$.
\end{corollary}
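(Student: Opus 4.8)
The plan is to feed the positional conclusion of Lemma~\ref{lemma:functo} into the combinatorial description of addresses as columns of the framed view. Since $v\strateq{\gG}{\view} w$, the lemma hands me indices $i,j\in\intset0n$ and a row $k$ with $v=\fviewof\view^k_i$, $w=\fviewof\view^k_j$ and $\fviewof\view^h_i\strateq{\gG}{\view}\fviewof\view^h_j$ for every $h>k$. Because $v$ and $w$ occur as non-empty entries of the matrix $\fviewof\view$, they are involved in $\view$, which settles that half of the statement. The structural fact I would record next is that the address of a vertex standing at a strictly positive row of a column is exactly the stack of entries directly above it: from $\pmv{\addi l{\viewi i}}=\addi{l+1}{\viewi i}$ one reads off $\addi s{\fviewof\view^k_i}=\fviewof\view^{k+s}_i$ for all admissible $s$, together with $\addsize{\fviewof\view^k_i}=\addsize\view-k$.

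The case $k\geq 1$ is then immediate. Both $v$ and $w$ sit at positive rows, so the structural fact gives $\addsize v=\addsize w=\addsize\view-k$ as well as $\addi s v=\fviewof\view^{k+s}_i$ and $\addi s w=\fviewof\view^{k+s}_j$ for each $s$. Since $k+s>k$, the ``moreover'' clause of Lemma~\ref{lemma:functo} delivers $\addi s v\strateq{\gG}{\view}\addi s w$ for every $s$, and hence $\add v\strateq{\gG}{\view}\add w$.

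The delicate case is $k=0$, where $v=\viewi i$ and $w=\viewi j$ are moves whose top-aligned addresses are separated from the base row by a block of empty cells; the structural fact no longer places the two addresses at matching rows, so the moreover clause cannot be applied directly. The heart of the argument, and the step I expect to be the main obstacle, is to show $h_i=h_j$, i.e.\ that equivalent moves carry the same number of enclosing modalities. I would obtain this by first checking that $\strateq{\gG}{\view}$ never relates $\emptyseq$ to a genuine vertex: the generating relation $\strateqone{\gG}{\view}$ only pairs entries of two consecutive columns $2l,2l+1$ at equal rows, and well-framedness forces $\addsize{\viewi{2l}}=\addsize{\viewi{2l+1}}$, hence $h_{2l}=h_{2l+1}$, so those columns carry their empty cells in identical positions; consequently no generating pair, and thus no pair of the closure, matches an empty cell with a vertex. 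Applying this to $\fviewof\view^h_i\strateq{\gG}{\view}\fviewof\view^h_j$ for $h>0$ shows that $\fviewof\view^h_i$ is empty exactly when $\fviewof\view^h_j$ is, so the empty blocks $\set{1,\dots,h_i}$ and $\set{1,\dots,h_j}$ coincide and $h_i=h_j$. With the two addresses now realigned, $\addi s v=\fviewof\view^{h_i+s}_i$ and $\addi s w=\fviewof\view^{h_i+s}_j$ with $h_i+s>0$, so the moreover clause again yields $\addi s v\strateq{\gG}{\view}\addi s w$ and $\addsize v=\addsize w$, finishing the case and the corollary.
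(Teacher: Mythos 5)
Your proof is correct and follows the route the paper intends: the paper states this corollary without proof, as an immediate consequence of Lemma~\ref{lemma:functo}, and your argument is precisely the missing derivation from that lemma. In particular, your alignment step (that $h_i=h_j$ for $\strateq{\gG}{\view}$-related moves, obtained from the observation that well-framedness makes every generating pair of $\strateqone{\gG}{\view}$ relate either two empty cells or two vertices, so equivalence classes never mix $\emptyseq$ with genuine vertices) is exactly the detail the paper leaves implicit.
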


The rest of this section is devoted to show how to use \XICP to expose the correspondence between \XWISs and proof in $\CX$ for $\X\in\set{\CK,\CD}$.
Since \XICPs are sound and complete (see \Cref{thm:ICPsoundandcomplete}), it is easy to show that we can associate to any proof a \XWIS using the following lemma:

\begin{lemma}\label{lemma:CPtoWS}
	Let $\X\in\set{\CK,\CD}$.
	If $\cf: \cgG \to \arof F$ is a \XICP of a formula $F$, then the image by $\cf$ of all \Aviews of $\cgG$ is a \XWIS on $\arof F$.
\end{lemma}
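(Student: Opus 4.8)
The plan is to build on the non-modal result \Cref{thm:winningI} and to verify that the modal data carried by $\cgG$, and preserved by the skew fibration $\cf$, supplies exactly the additional conditions required of a \XWIS. First I would show that the image $\cf(\aview)$ of every \Aview $\aview$ of $\cgG$ is a view on $\arof F$. Since $\cf$ preserves $\iedge$, $\medge$, depth and labels, the \emph{justified} and \emph{$\evensym$-shortsighted} conditions transport directly from the reverse checked path underlying $\aview$, the latter step arising from the $\oedge$ (odd non-chord) edges of $\linkgraph\cgG$. The \emph{play} condition needs a little care because $\aproj{}$ deletes $\lbox$-vertices: if the underlying path starts at a $\lbox$-root $\ssbox$, the $\omedge$-edge out of $\ssbox$ leads to an $\iedge$-root of its scope, which is again an $\iedge$-root of $\cgG$ by \Cref{lemma:modalities}, and as $\cf$ preserves depth it sends roots to roots. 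The \emph{$\oddsym$-uniform} condition holds because each step from an $\evensym$-move to an $\oddsym$-move of $\aview$ comes from an $\axlink$-edge, i.e. from an $\axeq$-class, whose vertices carry the same label by the definition of partitioned modal arena and the \emph{linked} condition, and labels are preserved by $\cf$. Finally the \emph{modal} condition holds because \Aviews contain only atomic and $\ldia$-labelled vertices, which $\cf$ maps into $\avertices\cup\dvertices$.

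Next I would establish the global \WIS conditions. Predecessor-closedness and finiteness are immediate from the inductive shape of \Aviews and the finiteness of $\cgG$. For $\evensym$-completeness, determinism and totality I would reuse the argument of \Cref{thm:winningI} for the atomic backbone obtained by projecting out modalities: the correspondence between checked paths in $\linkgraph\cgG$ and paths in $\der\cgG$ recorded in the proof of \Cref{thm:arenaSoundness} shows $\linkgraph\cgG$ is acyclic and functional, so every $\oddsym$-move is uniquely forced by the preceding $\evensym$-move through its $\axlink$ (determinism and totality, using that each $\axeq$-class contains a unique $\evensym$-vertex by \emph{linked}-ness), while functionality (Condition \ref{cond:functional}) guarantees that every justified $\evensym$-continuation is realised ($\evensym$-completeness). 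For \emph{$\ldia$-completeness} I would invoke Condition \ref{cond:mod} (almost all modalities non-empty) together with the second clause of the definition of \Aview, which is precisely what inserts, for an $\evensym$-$\ldia$ vertex, the $\medge$-successors demanded by the corresponding modal-axiom class.

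It then remains to check that $\strat=\set{\cf(\aview)}$ is $\X$-framed. Well-framedness, $\sizeof{\add{\viewi{2k}}}=\sizeof{\add{\viewi{2k+1}}}$, follows from functoriality of $\cgG$ (Condition \ref{cond:functorial}) with \Cref{lemma:modalities}: an $\axeq$-linked pair has $\medge$-predecessors that are again $\axeq$-linked, so a matched $\evensym$/$\oddsym$ pair of moves carries modal addresses of equal length. The \emph{linked} condition on $\strat$ and the shape $\set{\isodd v_1,\dots,\isodd v_n,\iseven w}$ of the $\strateq{\gG}{\view}$-classes are obtained by transporting the modal $\axeq$-classes of $\cgG$ along $\cf$, using \Cref{lemma:functo} and \Cref{cor:functoriality} to identify the frame induced by the matrix $\fviewof\view$ with the image of such a class. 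The $\X$-correctness conditions---Conditions \ref{cond:frames:CKbox}--\ref{cond:frames:CKdia} for $\CK$ and Conditions \ref{cond:frames:CKbox}, \ref{cond:frames:CDdia} for $\CD$---are then exactly the $\cf$-images of the arena-net conditions \ref{cond:K} and \ref{cond:D}, since $\cf$ preserves labels and hence the splitting of a modal $\axeq$-class into $\bvertices$- and $\dvertices$-parts.

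The main obstacle I anticipate is this frame-reconstruction step: proving that the equivalence $\strateq{\gG}{\view}$ read off from the framed-view matrix $\fviewof\view$ on the \emph{target} $\arof F$ coincides with the $\cf$-image of the modal $\axeq$-classes of the \emph{source} $\cgG$. This amounts to matching, column by column, the modal address $\add v$ recorded above each atomic move with the stack of modalities of its preimage, and then using \Cref{lemma:functo} to conclude that two moves are frame-equivalent exactly when their preimages lie in a common $\axeq$-class. Once this identification is in place every framing condition reduces to an arena-net condition already available, and the $\CD$ case differs from the $\CK$ case only by replacing Condition \ref{cond:K} with Condition \ref{cond:D} and $\Kurule$-behaviour with $\Drule$-behaviour, exactly as in the proof of \Cref{thm:arenaSoundness}.
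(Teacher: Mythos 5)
Your proposal is correct and follows essentially the same route as the paper's own proof: a condition-by-condition verification that the $\cf$-images of \Aviews satisfy the view conditions (using preservation of $\iedge$, $\medge$, $\depsym$, $\labsym$), the \WIS conditions (determinism/totality from linkedness and $\X$-correctness, $\ldia$-completeness from the non-empty-modalities condition), and the framing conditions (well-framedness from functoriality, $\X$-framedness from Conditions \ref{cond:K}/\ref{cond:D}). Your treatment is in fact somewhat more explicit than the paper's at the frame-reconstruction step, which the paper dispatches with a one-line appeal to linkedness and $\X$-correctness.
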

\begin{proof}
	The image by $\cf$ of an \Aview is a play and it is $\evensym$-shortsighted since $\cf$ preserves $\depsym$ and if $v,w\in\avertices[\cgG]$, then $v\iedge w$ in $\linkgraph{\cgG}$ only if $v\oedge w$.
	Moreover, in a modal arena if  $v\axeq[\cgG] w$  and $v,w\in\avertices[\cgG]$, then $\lab v=\lab w$.
	We deduce that $\cf$ is $\oddsym$-uniform since $\cf$ also preserves $\labsym$.
	Hence the image by $\cf$ of an abstract views on $\cgG$ is a view on $\arof F$.
	
	Since for any abstract view $\view$ on $\cgG$ we have that $\viewi{2k+1}=v_{2k+1}\axeq[\cgG] v_{2k}=\viewi{2k}$, then
	by functoriality  of $\cgG$ (Condition \ref{cond:functorial} in Definition~\ref{def:KarenaNet}),  we have $\pmv{v}_{2k+1}\axeq[\cgG] \pmv{v}_{2k}$. 
	This allows us to conclude by induction that $\addsize{v_{2k}}=\addsize{v_{2k+1}}$  in $\cgG$, i.e., $\view$ is well-framed view since since $\cf$ is modal and preserves $\medge$.
	
	The $\evensym$-completeness follows by definition of $\oedge \cup \omedge$.
	Determinism of the strategy follows by the fact that $\linkgraph{\cgG}$ is $\X$-correct, 
	then 
	\begin{itemize}
		\item if $\X=\CK$, then for every $\iseven w\in\avertices\cup\dvertices$ there is a unique vertex\footnote{Observe that this is not true for $\lbox$-vertices.} $\isodd v$ such that $\isodd v \axlink \iseven w$.
		Moreover in this case $\ldia$-completeness follows the non-empty modalities Conditions~\ref{cond:mod};
		
		\item and if $\X=\CD$, then $\strat$ is atomic and atomic vertices are paired in $\axeq$-classes.
		Moreover in this case $\ldia$-completeness is valid since no $\ldia$ occurs in $\strat$.
	\end{itemize}
	
	We conclude since by definition $\linkgraph{\cgG}$ is linked and $\X$-correct; thereby $\strat$ is $\X$-framed.
\end{proof}

To prove that each \XWIS correspond to a proof in $\CX$, we give a procedure to 
define an \XICP $\cf \colon \cgG \to \arof F$ using the information provided by the arena $\arof F$ and the strategy $\strat$.
Using the property of being well-framed, we are able to reconstruct some paths on $\arof F$ which should be the images by the skew fibration $\cf$ of the \Fviews in the modal arena net $\cgG$.

\begin{definition}
	Let $\view$ be a well-framed view on a \ma $\gG$ of length $n$ .
	We define the \emph{pre-view} of $\view$ as the sequence of vertices in $\gG$
	$$
		\begin{array}{ll}
			\aview=\aview_0,\aview_{2},\aview_{4}, \dots, \aview_{2k}
			&
			\mbox{if $n$ is even}
			\\
			\aview=\aview_0,\aview_{2},\aview_{4}, \dots, \aview_{2k}, \aview_{2k+1}
			& 
			\mbox{if $n$ is odd}
		\end{array}
	$$
	where for all $i\in\intset1k$ we have 
	$$
	\begin{array}{l@{= \quad}l}
		\aview_0
		&
		\fviewof{\view}^{\addsize{\view}}_{0}, \dots, \fviewof{\view}^0_{0}
		\\
		\aview_{2i}
		&
		\fviewof{\view}_{2i-1}^0
		\dots, 
		\fviewof{\view}_{2i-1}^{h_i}
		\fviewof{\view}_{2i}^{h_i}
		\dots 
		\fviewof{\view}_{2i}^{0}
		\\
		\aview_{2k+1}
		&
		\fviewof{\view}_{2k+1}^{0}, \dots,   \fviewof{\view}_{2k+1}^{\addsize{\view}}
	\end{array}
	$$
	for a  
	$h_i=\max \set{ h \mid \fviewof{\view}^h_{2i-1}\strateq{\gG}{\strat}  \fviewof{\view}^h_{2i} }$
	if $ \add{\viewi{2i+1}}  \nstrateq{\gG}{\strat}  \add{\viewi{2i}}$ and $h_i=0 $ otherwise.
	We denote by $\astrat$ the set of the pre-views of all the views in $\strat$, that is, 
	$\astrat=\set{\aview\mid \view\in\strat}$.
	
	A \emph{unchecked prefix} of a $\aview\in\astrat$ is a sequence of vertices $\unchecked s$ obtained by replacing in a prefix $s$ of $\aview$ each subsequence of the form $vrw$ with $vw$ whenever $v \strateq{\gG}{\strat} w$.

\end{definition}
\begin{example}
	Let us consider the maximal views on the modal arena of $\lbox_1 (b_1\imp b_0)\imp a_1) \imp( \ldia_1 c\imp \ldia_0(a_2\land a_0))$ from \Cref{fig:intro}.
	From the leftmost and central views we respectively define  the pre-views 
	$\iseven \ldia_0 \iseven a_0 \isodd a_1 \iseven b_0 \isodd b_1 \isodd \lbox_1$
	and
	$\iseven \ldia_0 \iseven a_2 \isodd a_1 \iseven b_0 \isodd b_1 \isodd \lbox_1$.
	In particular, the sequence $\ldia_0\lbox_1$ is the unique unchecked prefix associated to these two sequences.
\end{example}

Given an arena $\arof F$ and a \XWIS $\strat$ and following this intuition, 
we reconstruct a partitioned modal arena $\cgG_\strat$ and a map $\cf_\strat$ from its vertices to the ones in $\arof F$ as follows.

\begin{definition}\label{def:arofStrat}
	Let $\X\in\set{\CK,\CD}$ and $\strat$ be a \XWIS on $\arof F$.
	We define 
	an arena $\cgG_\strat=\tuple{\vertices[\cgG], \iedge[\cgG], \medge[\cgG], \axeq[\cgG]}$ 
	and 
	a map $\cf_\strat=\cf : \vertices[\cgG] \to \arof{F}$ as follows:
	\begin{itemize}

		\item 
		in $\vertices[\cgG]$ there is one vertex for each non-empty unchecked prefix $\unchecked s$ of a pre-view $\aview\in\astrat$ 
		(whose label is the same of the last vertex in $\unchecked{s}$). 
		That is, 
		\begin{equation}\label{decont:Lvertices}
			\hspace{-10pt}
			V_{\cgG}=
				\set{v_s \mid s \mbox{ is a non-empty {unchecked} prefix of a } \aview\in\astrat}
		\end{equation}
		\begin{equation}
		\lab{v_{s'w}}=\lab{w}
		\end{equation}
		
		\item 
		by definition every vertex is of the form $v_s$ for a non-empty sequence $s$ of vertices in $\vertices[\arof F]$.
		We define the map $\cf:\vertices[\cgG]\to\vertices[\arof F]$ in such a way it maps each $v_s\in \vertices[\cgG]$ to the last vertex of $s=s'w$. That is,
		\begin{equation}\label{decont:skew}
			\cf(v_{s'w})=w
		\end{equation}

		\item 
		there is an edge $v\iedge[\cgG] w$ 
		whenever
		$\cf(v)\iedge[\arof F] \cf(w)$,
		and 
		the images of $v$ and $w$ occur in the addresses or are respectively some vertices $x$ and $y$
		such that
		either $\iseven x$ and $\isodd y$ occur in a same view in $\strat$, 
		or 
		there is $s\in \strat$ such that $s\iseven y$ and $su\isodd v$ occur in $\strat$.
		That is, 

		\begin{equation}\label{decont:iedge}
			\iedge[\cgG]=
			\Set{v\iedge w
				\quad
				\begin{array}{|@{\quad}l}
					\cf(v)\iedge \cf(w) 
					\mbox{ and 
						there are $x,y \in\vertices[\arof F]$ 
						such that 
					}
				\\
					\mbox{$\cf(v)=x$ or $\cf(v)\in \add x$,
						$\cf(w)=y$ or  $\cf(w)\in \add y$
					}
				\\
					\mbox{
						and either $s\isodd  y\iseven x$, or both $s\iseven y$ and $su\isodd x$ are in $\strat$
					}
				\end{array}
			}
		\end{equation}

		\item 
		there is an edge $v\medge[\cgG] w$ 
		whenever
		$\cf(v)\medge[\arof F] \cf(w)$,
		and 
		$v$ and $w$ occur in a same pre-view in $\astrat$.
		That is, 	
			\begin{equation}\label{decont:medge}
				\hspace{-10pt}
				\medge[\cgG]=
				\set{v\medge w
					\mid
					\cf(v)\medge[\arof F]\cf(w) 
					\mbox{ and }
					\cf(v),\cf(w)\in  \aview \mbox{ for a } \view\in\strat
				}
			\end{equation}

		\item
		we define $v\axeq[\cgG] w$ as the symmetric and transitive closure of 
		the edge-relation $\stratlink{\arof{F}}{\strat}$. That is,
		\begin{equation}\label{decont:axeq}
			\hspace{-10pt}
			\axeq[\cgG] =
				\set{v\axlink w \mid \cf(v)\dstratlink{\arof{F}}{\view}^*\cf(w)  \mbox{ for a } \view\in\strat}
		\end{equation}

	\end{itemize}
\end{definition}

\begin{remark}\label{rem:linked}
	By definition  $\strateq{\arof{F}}{\strat}=\bigcup_{\view\in\strat} \strateq{\arof{F}}{\view}$ is not an equivalence relation over $\arof F$.
	In fact in $\arof F$ we may have some vertices $u$,$v$ and $w$ such that  $u \stratlink{\arof{F}}{\strat} v$ and $u \stratlink{\arof{F}}{\strat}w$ and $v \nstrateq{\arof F}{\strat} w$.
	
	If we additionally assume that $\strat$ is linked, then we conclude that $u \strateq{\arof{F}}{\view_1} v$ and $u \strateq{\arof{F}}{\view_2}w$ for two distinct $\view_1, \view_2\in \strat$. Hence, the vertex $u$ in $\arof F$ admits at least two different pre-images in $\gG$.
	Then we conclude, as the homonymy suggests,  that $\cgG_\strat$  is a linked modal arena.
\end{remark}

\begin{remark}\label{rem:lifting}
	By definition of $\medge$, we have that 
	$v\omedge[\cgG_\strat]w$
	iff 
	$v\dmedge[\cgG_\strat]w \mbox{ and } v=\aviewi i \mbox{ and } w=\aviewi j \mbox{ for a  } i>j  $.
	That is, any pre-view is a reverse cautious path on $\linkgraph{\cgG_\strat}$, that is, a framed abstract view.
	It follows that the \Aview which can be extract from a pre-view $\aview$ of a $\view\in\strat$ is exactly the view $\view$.
	In other words, the function mapping a view in its pre-view is the left adjoint of the function mapping a \Fview to its associated \Aview.
\end{remark}

Hence, by proving that $\cf_\strat$ is an \XICPs we can prove that we can associate a proof in $\CX$ to any \XWIS.

\begin{restatable}{lemma}{lemmaWStoCP}\label{lemma:WStoCP}
	Let $\X\in\set{\CK,\CD}$
	If $F$ is a formula and $\strat$ a \XWIS on $\arof F$, 
	then there is a \XICP $\cf\colon \cgG \to \arof F$.
\end{restatable}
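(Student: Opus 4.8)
The construction already produces the candidates: \Cref{def:arofStrat} defines the partitioned modal arena $\cgG_\strat$ together with the map $\cf_\strat\colon \vertices[\cgG_\strat]\to \arof F$. The plan is to verify that $\cgG_\strat$ is a genuine $\X$-arena net and that $\cf_\strat$ is a skew fibration; the desired \XICP is then $\cf_\strat$ itself. I would begin with the homomorphism conditions. Preservation of $\iedge$ and $\medge$ is immediate from \Cref{decont:iedge,decont:medge}, which impose $\cf(v)\iedge\cf(w)$ (respectively $\cf(v)\medge\cf(w)$) as a precondition on each edge of $\cgG_\strat$, and preservation of $\labsym$ holds because each vertex $v_{s'w}$ is labelled by $\lab w=\lab{\cf(v_{s'w})}$ via \Cref{decont:skew}. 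The only nontrivial clause is preservation of $\depsym$, which I would derive from the fact that every view in $\strat$ is justified and $\evensym$-shortsighted: reversing a pre-view traces an $\iedge$-path to a root of $\arof F$ of length equal to the depth, and $\iedge[\cgG_\strat]$ copies exactly these path-edges.

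Next I would establish that $\cgG_\strat$ is a modal arena in the sense of \Cref{def:ma}. Since $\arof F$ is a modal arena by \Cref{thm:forToAr} and $\cf_\strat$ preserves $\iedge$ and $\medge$, the forbidden L-free and $\Sigma$-free configurations, as well as the six modality axioms for $\medge$, can be pulled back: any such configuration in $\cgG_\strat$ would map to one in $\arof F$, contradicting that the latter is a modal arena. Proper labelling is inherited from the labelling convention on vertices.

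I would then verify the arena-net conditions of \Cref{def:KarenaNet} one at a time. That $\cgG_\strat$ is linked is \Cref{rem:linked}. Acyclicity and functionality (Conditions~\ref{cond:acyclic} and~\ref{cond:functional}) follow from \Cref{rem:lifting}, by which each pre-view is a reverse checked path whose associated abstract view is the original view; checked paths of $\linkgraph{\cgG_\strat}$ thus correspond to (prefixes of) views, and well-framedness together with determinism and totality of $\strat$ supply acyclicity and the required $\iedge$-edge on each maximal checked path. Functoriality (Condition~\ref{cond:functorial}) is \Cref{cor:functoriality} transported through $\cf_\strat$. Finally, the non-empty-modality conditions (\ref{cond:mod} for $\CK$, \ref{cond:modD} for $\CD$) follow from $\ldia$-completeness of the \WIS, while $\X$-correctness (Condition~\ref{cond:K} or~\ref{cond:D}) is a direct translation of the $\X$-framed Conditions~\ref{cond:frames:CKbox}, \ref{cond:frames:CKdia} and~\ref{cond:frames:CDdia} under the linking $\stratlink{\arof F}{\strat}$.

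It remains to check that $\cf_\strat$ is a skew fibration. The modal condition follows from \Cref{decont:medge}: any $\medge$-edge of $\arof F$ in the image of $\cf_\strat$ arises from two vertices lying in a common pre-view, hence already joined by a $\medge$-edge upstairs. For preservation of $\conj$ and the skew-lifting property I would reason at the level of meeting depths, using that $\cf_\strat$ preserves both $\depsym$ and $\iedge$ (so meeting depths can only decrease) and that two vertices of $\cgG_\strat$ mapping to the same vertex of $\arof F$ originate from distinct views (as in \Cref{rem:linked}), which supplies the witness required by the skew-lifting clause. I expect the main obstacle to be the combination of this modal-arena check and the skew-lifting clause: both demand global reasoning about how the edges and the partition induced by the \emph{entire} strategy interact, rather than a property of a single view, and in particular the six modality axioms on $\medge$ must be shown to survive the passage from $\arof F$ back to the reconstructed net $\cgG_\strat$.
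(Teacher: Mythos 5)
Your overall plan coincides with the paper's: take $\cf_\strat\colon\cgG_\strat\to\arof F$ from \Cref{def:arofStrat} and verify in turn that $\cgG_\strat$ is an $\X$-arena net and that $\cf_\strat$ is a skew fibration. Two of your verifications, however, do not go through as stated. First, the claim that L-freeness, $\Sigma$-freeness and the $\medge$-axioms of \Cref{def:ma} ``can be pulled back'' along $\cf_\strat$ is unsound: these are conditions on \emph{induced} subgraphs, asserting the \emph{presence} of an edge (e.g.\ $u\iedge v$) given other edges, and a homomorphism preserves edges but does not reflect non-edges. A configuration $a\iedge u$, $a\iedge w\iedge v$, $u\niedge v$ in $\cgG_\strat$ maps to a configuration in $\arof F$ in which $\cf(u)\iedge\cf(v)$ may perfectly well hold, so no forbidden induced subgraph appears downstairs and no contradiction is obtained. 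These properties have to be established directly from the construction of $\iedge[\cgG_\strat]$ and $\medge[\cgG_\strat]$ out of the views, not by pullback.

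Second, and more seriously, your treatment of the skew-lifting clause misses its actual mechanism. The condition asks: given $w\conj\cf(v)$ in $\arof F$, produce $u$ in $\cgG_\strat$ with $v\conj u$ and $\cf(u)\nconj w$. The multiplicity of preimages of a vertex (your appeal to \Cref{rem:linked}) is not what supplies this witness; what is needed, and what the paper's proof spends most of its effort on, is that the strategy is \emph{total, deterministic and $\evensym$-complete}: one first shows that an odd meeting point $x$ of $w$ and $\cf(v)$ can be assumed atomic and in the image of $\cf_\strat$ (an induction using determinism and \Cref{lemma:cones}), and then uses $\evensym$-completeness to guarantee that every $y$ with $y\iedge x$ is actually played after $x$, yielding the witness $u=v_{sxy}$ with $\cf(u)=y$ meeting $w$ at even depth. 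Without this argument the skew-lifting property --- the part of the lemma that accounts for the weakenings --- is not established. Two smaller points: your acyclicity argument should invoke the \emph{finiteness} of $\strat$ (a cycle on a checked path would generate arbitrarily long views), since determinism and totality alone do not exclude cycles; and the claim that reversing a pre-view traces an $\iedge$-path of length equal to the depth is false in general, because justification allows a move to point to a non-adjacent earlier move --- the paper instead derives $\dep v=\dep{\cf(v)}$ from $\iedge$-preservation together with L-freeness of $\arof F$.
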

\begin{proof}
We only prove the result for $\CK$ since the proof for $\CD$ is similar but easier since \CDWISs are atomic.

We use Definition~\ref{def:arofStrat} to define an \XICP $\cf_\strat\colon \cgG_\strat \to \arof F$ form $\strat$ and $\arof F$.
That is, we prove that 
the map $\cf_\strat$ 
and
the modal arena $ \cgG_\strat$
defined in Definition~\ref{def:arofStrat}
are respectively a skew fibration
and, whenever $\strat$ is $\CX$-framed,
an $\X$-arena net.

	The arena $\cgG$ is linked by definition of $\axeq[\cgG]$ as remarked in Remark~\ref{rem:linked}.
	To conclude that $\cgG$ is a $\CK$-arena net we have to check the following conditions:
	\begin{enumerate}
		\item $\linkgraph{\cgG}$ is acyclic: 
		if a checked  path contains a cycle, then we can define a framed abstract view for any number of iterations of this cycle.
		Then $\strat$ should contains infinite views corresponding of the image through $\cf$ of infinite abstract views on $\cgG$. Absurd.
		
		\item $\linkgraph{\cgG}$ is functional: 
		for any $\isodd{\aviewi i}$   there is a $k\leq i$  such that 
		$\aviewi k=\viewi h$ occurs in $\view$ and either $k=i$ or $\aviewi i \medge[\cgG] \aviewi k$.
		Since $\view$ is justified, then there is $l<h$ such that $\viewi h \iedge \viewi l$. Then there is $j<i$ such that $\aviewi j=\viewi l$.
		By the fact that $\medge$ is modal (see Definition~\ref{def:ma}), we conclude that $\aviewi i \iedge[\gG] \aviewi j$.

		\item $\linkgraph{\cgG}$ is functorial: it follows Corollary~\ref{cor:functoriality};
		
		\item $\linkgraph{\cgG}$ has almost all non-empty modalities: 
		let $ v\in \mvertices[\cgG]$ such that $v=v_s$ for a prefix $s=s'\fviewof{\view}_{2k}^{h}$ of a $\aview\in\astrat$.
		If $v\in\bvertices$, then $h>0$ (since no $\lbox$ occurs in a \Aview) and there is $w=w_{s''}$ such that $v\medge[\cgG]w$
		such that 
		either $s''={s'\fviewof{\view}_{2k}^{h}\fviewof{\view}_ {2k}^{h-1}}$ if $\iseven v$,
		or $s''=s'$ if $\isodd v\in\bvertices[\cgG]$.
		If $\iseven v\in\dvertices$, then we conclude by $\ldia$-completeness.

		\item $\linkgraph{\cgG}$ is $\CK$-correct: 
		it follows from the fact that $\strat$ is $\CK$-framed and that, by definition, $\axlink[\cgG]=\stratlink{\cgG}{\strat}$. 
		
	\end{enumerate}

	The map $\cf$ is a skew fibration we have to check the following conditions:
	\begin{itemize}
		
		\item  $\cf$ preserves $\ell$, $\iedge$, and $\medge$: by definition;
		
		\item  $\cf$ preserves $\depsym$: since $\cf$ preserves $\iedge[\cgG]$, then $\dep v \geq \dep {\cf(v)}$.
		If $\dep v > \dep {\cf(v)}$ then there should be a $w$ such that $\cf(v)\iedge w$ and $\dep{w}\geq \dep{\cf(v)}$ which by  Lemma \ref{lemma:cones},  implies that $\arof F$ is not L-free. Contradiction;
		
		\item $\cf$ is modal:  if $\cf(v)\medge \cf(w)$, then by definition there is a $k$ such that $\addi k{\cf(w)}=\cf(v)$. We conclude by letting $v'=\addi k{w}$.
		
		\item $\cf$ preserves $\conj$: it follows from the fact that $\cf$ preserves $\iedge$ and $\depsym$;

		\item $\cf$ has the skew lifting property:
		we let $w\in\vertices[\arof F]$ such that $w\conj \cf(v)$ for a $v\in\vertices[\cgG]$ and we prove  that there is always a $u$ such that $v \conj u$ and $\cf(u)\nconj w$.

		If there is no meeting point of $w$ and $\cf(v)$, then we conclude by letting $u\in\irof[\cgG]$ such that $u\conj v$ and $w\iedge^* \cf(u)$.
		
		Otherwise, we let $\isodd x$ (hence $x\notin\irof[\arof F]$) be a meeting point of $w$ and $\cf(v)$. 
		By Lemma~\ref{lemma:modalities} can assume w.l.o.g. that $x\in\avertices[\arof F]$.
		Moreover, we can also assume that $x$ is in the image of $\cf$.
		In fact, since the meeting point exists, then there is a $\iseven r$ (at least one $r\in\irof[\arof F]$) such that $w\iedge^n r$ and $\cf(v)\iedge^m r$; we can assume $r\in\avertices$ and by determinism of $\strat$ we have a $z\in\avertices$ in the image of $\cf$ such that $z\iedge r$; 
		thus by Lemma~\ref{lemma:cones} 
		either $z$ is the meeting point, 
		or for all $r'\in \avertices$ such that $r'\iedge z$, $r'$ is in the image of $\cf$ since $\strat$ is total and $\evensym$-complete; we conclude by induction.
		
		We can deduce that $sx\in\strat$ for a $s\in\strat$.
		We now let $y$ such that $w\iedge^*y\iedge x$.
		Since $sx\in\strat$ and $\isodd x$, then by $\evensym$-completeness we have $sxy\in\strat$ for every $y\in \avertices$ such that $y\iedge x$; thus $\cf(u)=y$ for a $u=v_{sxy}\in\vertices[\cgG]$.
		We conclude since the meeting point of $w$ and $\cf(u)$ is $\cf(u)=\iseven{y}$ and the meeting point of $\cf(u)$ and $\cf(v)$ is $\isodd x$.
	\end{itemize}\qedhere

\end{proof}

We are able to prove a soundness and completeness result for \XWISs.

\begin{theorem}\label{thm:winningCK}
	Let $F$ be a formula and $\X\in\set{\CK,\CD}$.
		$$
		F \mbox{ is provable in } \CX
		\iff
		\mbox{ there is a \XWIS on $\arof F$.}
		$$
\end{theorem}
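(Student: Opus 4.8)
The plan is to assemble the statement from the three equivalences already established, so that the only remaining work is to chain them in the right order for each direction. First I would reduce provability in the logic $\X$ to derivability in the sequent system $\IpX$: by \Cref{thm:soundCompleteLogic} the system $\IpX$ is sound and complete for the disjunction-free fragment of $\X$, and hence $F$ is provable in $\X$ if and only if $\provevia{\IpX} F$. This is the step that links the semantic notion of provability in the modal logic with the proof-theoretic object handled throughout the paper.

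Next I would invoke soundness and completeness of combinatorial proofs. By \Cref{thm:ICPsoundandcomplete} we have $\provevia{\IpX} F$ if and only if there exists an \XICP $\cf\colon\cgG\to\arof F$. Combining this with the previous step shows that $F$ is provable in $\X$ exactly when such an \XICP exists; it therefore remains only to transfer the existence of an \XICP to the existence of a \XWIS and back.

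For this final biconditional I would use the two lemmas bridging combinatorial proofs and strategies. For the forward direction, given an \XICP $\cf\colon\cgG\to\arof F$, Lemma~\ref{lemma:CPtoWS} tells us that the image under $\cf$ of all \Aviews of $\cgG$ is a \XWIS on $\arof F$, so a strategy exists. For the converse, given a \XWIS $\strat$ on $\arof F$, Lemma~\ref{lemma:WStoCP} produces an \XICP $\cf\colon\cgG\to\arof F$ (concretely, via the reconstructed partitioned modal arena $\cgG_\strat$ and the map $\cf_\strat$ of Definition~\ref{def:arofStrat}). Concatenating the three equivalences then yields: $F$ is provable in $\X$ if and only if there is an \XICP $\cf\colon\cgG\to\arof F$, if and only if there is a \XWIS on $\arof F$, which is precisely the claim.

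Since every ingredient is already in place, I do not expect a genuine obstacle here: this is essentially a bookkeeping argument. The only points requiring care are formal, namely ensuring that the meta-variable $\X$ is instantiated consistently as $\CK$ or $\CD$ across \Cref{thm:soundCompleteLogic,thm:ICPsoundandcomplete} and Lemmas~\ref{lemma:CPtoWS} and \ref{lemma:WStoCP}, and observing that these two lemmas supply exactly the two directions of the combinatorial-proof/strategy correspondence, so that no further argument is needed. All the substantive difficulty has been discharged beforehand: the decomposition theorem (\Cref{thm:decomposition}), the skew-fibration analysis of \Cref{sec:skew}, and the frame bookkeeping (Lemma~\ref{lemma:functo} and Corollary~\ref{cor:functoriality}) that makes $\cf_\strat$ an \X-arena net in Lemma~\ref{lemma:WStoCP}.
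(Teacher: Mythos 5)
Your proposal is correct and follows essentially the same route as the paper: chaining \Cref{thm:ICPsoundandcomplete} with Lemmas~\ref{lemma:CPtoWS} and \ref{lemma:WStoCP}. You additionally make explicit the reduction from provability in $\CX$ to derivability in $\IpX$ via \Cref{thm:soundCompleteLogic}, a step the paper leaves implicit; this is a harmless (indeed welcome) bit of extra care.
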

\begin{proof}
	By \Cref{thm:ICPsoundandcomplete} we know that \XICPs are a sound and complete proof system for $\IpX$.
	We conclude the proof using  Lemmas~\ref{lemma:CPtoWS} and \ref{lemma:WStoCP}  which state the correspondence between \XICPs of a formula $F$ and \XWIS on $\arof F$.
\end{proof}

\section{Conclusions and Future Works}\label{sec:conclusion}

In this paper we present two semantics for proofs of the disjunction-free and unit-free fragment of the constructive modal logics $\CK$ and $\CD$.

The first semantics is given by extending the syntax of  \ICPs from \cite{ICP} by reshaping some techniques from the previous work on combinatorial proofs for modal logic \cite{acc:str:CPK} to fit with the syntax required to capture intuitionistic logic.
We define \mas which extend the syntax of a Hyland-Ong arena~\cite{mur:ong:evolving} 
in order to represent modal formulas by finite directed graphs, and we define modal arena nets which are \mas equipped with a vertex partition capturing axioms in $\CK$ and $\CD$.
Then we prove that skew fibrations from a modal arena net to the arena of a formula are sound and complete with respect to the logics $\CK$ and $\CD$.

The second semantics is given in terms of winning innocent strategies over modal arenas.
It has been designed by extending the relation between \ICPs and winning strategies shown in \cite{ICP}: 
the set of paths in the linking graph of the arena net of the  \ICP is mapped by the skew fibration to a winning innocent strategy on  the formula arena.
This relation has been further refined by showing that for $\CK$ and $\CD$ it is possible to restrict this set of paths to specific ones passing on atoms and diamonds only.

We get the following result for our two new semantics:
\begin{theorem}[Full completeness]
	Let $F$ be a formula and $\X\in\set{\CK,\CD}$. Then
	\begin{enumerate}
		\item\label{full:LtoCP} There is a surjection from the set of factorised proofs of $F$  and  the set  \XICPs  of $F$. 
		
		\item\label{full:CPtoWS} There is a surjection from the set of \XICPs  of $F$ and the set of  \XWISs on  $\arof F$.
		
		\item There is a surjection from the set of $\IpX$-derivations of $F$ and the set of  \XWISs on $\arof F$.
	\end{enumerate}
\end{theorem}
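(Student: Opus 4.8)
The plan is to obtain the three surjections as composites of the constructions already established, reading the full completeness chain from left to right: factorised proofs surject onto \XICPs, which in turn surject onto \XWISs.

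For \Cref{full:LtoCP} I would send a factorised proof $\provevia{\PIMLLX} H' \provevia{\DOWN{\PIp}} H$ of $F$ (so $F=\spalletto H$) to an \XICP assembled from two pieces. The linear prefix $\provevia{\PIMLLX} H'$ yields, via \Cref{thm:pol} together with \Cref{thm:arenaCompleteness}, an $\X$-arena net $\cgG$ whose underlying modal arena is $\arof{H'}$ and whose associated formula is $\spalletto{H'}$; the structural suffix $H'\provevia{\DOWN{\PIp}} H$ yields, via \Cref{thm:derToSkewfib}, a skew fibration $\arof{H'}\to\arof{H}=\arof F$. Reading this skew fibration as a map out of the arena net $\cgG$ produces a skew fibration $\cf\colon\cgG\to\arof F$, i.e.\ an \XICP. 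Surjectivity is the converse reading: given an \XICP $\cf\colon\cgG\to\arof F$, \Cref{thm:arenaSoundness} and \Cref{thm:pol} produce a derivation $\provevia{\PIMLLX} H'$ with $\spalletto{H'}$ the formula associated to $\cgG$, while \Cref{thm:skewfibToDer} turns $\cf$ into a derivation $H'\provevia{\DOWN{\PIp}} H$; concatenating these is a factorised proof of $F$ whose image is the given \XICP.

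For \Cref{full:CPtoWS} the map is exactly the one built in \Cref{lemma:CPtoWS}, sending an \XICP $\cf\colon\cgG\to\arof F$ to the \XWIS formed by the $\cf$-images of all \Aviews of $\cgG$; that this is well defined and lands among \XWISs is the content of that lemma. Surjectivity is supplied by \Cref{lemma:WStoCP}: from a \XWIS $\strat$ on $\arof F$ one constructs the \XICP $\cf_\strat\colon\cgG_\strat\to\arof F$ of \Cref{def:arofStrat}. The crucial verification here is the round-trip, namely that the \XWIS read back off $\cf_\strat$ is again $\strat$; this is precisely \Cref{rem:lifting}, which states that the \Aview extracted from the pre-view of a view $\view\in\strat$ is $\view$ itself, so that the collection of images of \Aviews of $\cgG_\strat$ recovers $\strat$ exactly.

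Finally, for the third surjection I would compose. By \Cref{thm:decomposition}, bridged to the two-sided systems through \Cref{thm:pol}, every $\IpX$-derivation of $F$ decomposes into a factorised proof of $F$, and the reverse procedure in that theorem exhibits each factorised proof as the image of some derivation, so this decomposition map is onto; composing it with the surjections of \Cref{full:LtoCP} and \Cref{full:CPtoWS} gives a surjection from $\IpX$-derivations onto \XWISs, since a composite of surjections is surjective. The main obstacle I anticipate is not any single step but the bookkeeping that glues them: one must transport every object consistently across the polarised/unpolarised translation $\spalletto{-}$ of \Cref{thm:pol}, and check that the three legs agree on the intermediate objects (the underlying arena of $\cgG$ stays $\arof{H'}$, its associated formula $\spalletto{H'}$, and the target arena $\arof F$ throughout) so the composites are genuinely well defined. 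The most delicate point is surjectivity at the strategy level, where \Cref{rem:lifting} is what guarantees that no two distinct \XWISs are conflated and that every \XWIS is actually attained.
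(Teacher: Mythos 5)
Your proposal is correct and follows essentially the same route as the paper: item 1 by composing the derivation-to-arena-net and derivation-to-skew-fibration constructions (with surjectivity from their converses), item 2 via \Cref{lemma:CPtoWS} for the map and \Cref{lemma:WStoCP} together with \Cref{rem:lifting} for surjectivity, and item 3 by composition. Your explicit appeal to \Cref{thm:decomposition} and \Cref{thm:pol} to make the $\IpX$-to-factorised-proof step onto is a useful spelling-out of what the paper leaves implicit in its ``direct consequence'' remark.
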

\begin{proof}
	\begin{enumerate}
		\item
		The proofs of \Cref{thm:linSeq} and \Cref{thm:skewfibToDer} allow to establish full maps respectively from $\IMLLX$-derivations to $\X$-arena nets, and 
		from $\DOWN\PIp$-derivations to skew fibrations.
		We conclude by composing these maps.

		\item 
		The proof of Lemma~\ref{lemma:CPtoWS} establishes a map from the set of \XICPs of $F$ to the set of  \XWISs on $\arof F$. The proof of Lemma~\ref{lemma:WStoCP}  associates an \XICP to an \XWIS $\strat$. 
		As remarked in Remark~\ref{rem:lifting}, the image by $\cf_\strat$ of the \Aviews on the linking graph of the modal arena net $\cgG_\strat$ defined in Definition~\ref{def:arofStrat} is exactly initial \XWIS $\strat$.
		We conclude since every \XWIS $\strat$ on $\arof F$ is the image by $\cf$ of the \Fviews in the \XICP $\cf_\strat \colon \cgG_\strat \to \arof F$.
		
		\item Direct consequence of \ref{full:LtoCP} and \ref{full:CPtoWS}.\qedhere
	\end{enumerate}
\end{proof}

We conclude by presenting some lines of inquiry that have been initiated by the content of this paper.

\textbf{Game semantics for $\CK$ and $\CD$.}

We are currently investigating the compositionality of \CKWISs and \CDWISs in order to define the game semantics for these logics.
It seems natural that the standard definitions of canonical strategies of game semantics (e.g., copy-cat, projections and evaluation)
can be employed in our framework.
However, the additional condition on frames requires a careful investigation which goes beyond the scope of this paper.


\textbf{Relation between $\lambda$-terms and winning strategies.}
For propositional intuitionistic logic the relation between $\lambda$-terms and \WISs is well-known~\cite{hyl:ong:PCF,Dan:Her:Regn:gamesandmachine,Hughes:97:games}
The exact correspondence between our
\WISs and 
$\lambda$-terms for $\CK$ and $\CD$ \cite{bie:paiva:intuitionistic} 
is under investigation, but out of the scope of this paper.


\textbf{Proof equivalence in constructive modal logics.}

Both \ICPs and \WISs induce a proof equivalence between proofs defined as 
``two derivations are equivalent iff they are represented by the same semantic object''.

We conjecture that, as proven in \cite{ICP} for the intuitionistic combinatorial proofs for the logic $\Ip$, the combinatorial proofs presented in this paper
capture the proof equivalence defined on sequent calculus by 
independent rules permutations,  weakening/contraction-comonad, and  excising, i.e., the permutation of weakening which removes subproofs shown below (see the permutation in the bottom-right corner of \Cref{fig:permutations}).
$$
\vlderivation{
	\vliin{}{\limprule}{\Gamma, A \imp B, \Delta \vdash C}
	{\vlpr{\dD}{}{\Gamma\vdash  A }}
	{\vlin{}{\Wrule}{B,\Delta \vdash C}
		{\vlhy{\Delta, \vdash C}}
}}
\quad
\oset{\mathsf e}{\rightsquigarrow}
\quad
\vlderivation{
	\vliq{}{\Wrule}{\Gamma, A \imp B, \Delta \vdash C}
	{\vlhy{\Delta, \vdash C}}
}
$$ 
We also conjecture that the full completeness results can be stated with respect to all proofs of a formula, and not only the factorised ones.

However, in the presence of modalities, the proof of these results is
much more involved (see \Cref{fig:permutations}), and would go beyond the scope of this
paper. 
Although, it is easy to see that whenever two sequent proofs
are equivalent modulo rule permutations, they are mapped to the same
combinatorial proof, the converse is far from trivial, in particular,
it is not true in the classical case.

Moreover, an additional problem seems to arise for $\CK$ which is similar to the well-known ``jump-problem'' for multiplicative linear logic  proof nets with units \cite{heijltjes:houston:14}: 
permutations of $\Wrule$ may re-assign which $\ldia$ is introduced by a specific $\Kdrule$ as in the following example.
	{$$
	\vlderivation{\vlin{}{\Wrule}{{\gclr \ldia} B , {\iclr\ldia} C \vdash {\gclr\ldia} (a \imp a) }
		{\vlin{}{\Kdrule}{{\gclr \ldia} B \vdash {\gclr \ldia} (a\imp a)}{
				\vlin{}{\Wrule}{B\vdash a\imp a}{\vlin{}{\rimprule}{\vdash a \imp a}{\vlin{}{\AXrule}{a\vdash a}{\vlhy{}}}}
			}
	}}
	\quad\not\simeq\quad
	\vlderivation{\vlin{}{\Wrule}{{\gclr\ldia} B , {\iclr\ldia} C \vdash {\iclr\ldia} (a \imp a) }
		{\vlin{}{\Kdrule}{{\iclr\ldia} C \vdash {\iclr\ldia} (a\imp a)}{
				\vlin{}{\Wrule}{C\vdash a\imp a}{\vlin{}{\rimprule}{\vdash a \imp a}{\vlin{}{\AXrule}{a\vdash a}{\vlhy{}}}}
			}
	}}
	$$}

\textbf{Winning strategies for linear logic}

We foresee no difficulties in defining \WISs for \emph{elementary} and \emph{light} linear logic adapting the techniques used for defining \CPs for multiplicative and exponential linear logic in \cite{acc:EHPN}.

We can envisage an encoding of $\oc A$ of the form $v \gmod \arof A $ for vertex $v$ such that $\lab v=\oc$.
This would avoid the need of defining the arena of $\oc A$ as the tensor of infinitely many copies of $A$, that is $\oc A=A \ltens A \ltens \cdots $, preventing the need of a quotient on \WISs required to capture the natural isomorphism between the copies of $A$.

In particular, to recover the results of Murawski-Ong for light linear logic \cite{mur:ong:evolving}, it suffices to consider the modalities $\oc$ and $\S $ as instances of $\lbox$, to define a frame condition simplifying the one of $\CK$-frames (since there are no $\ldia$), and restrain {skew-fibration} allowing deep weakening and deep contraction only to $\oc$-formulas using techniques similar to the ones in \cite{acc:EHPN}.


\bibliographystyle{plain}
\bibliography{refICPK}

\begin{thebibliography}{10}

\bibitem{abr:full}
Samson Abramsky, Pasquale Malacaria, and Radha Jagadeesan.
\newblock Full abstraction for pcf.
\newblock In {\em International Symposium on Theoretical Aspects of Computer
  Software}, pages 1--15. Springer, 1994.

\bibitem{acc:EHPN}
Matteo Acclavio.
\newblock Exponentially handsome proof nets.
\newblock (submitted for publications at the post proceedings of the Joint
  Workshop Linearity \& TLLA).

\bibitem{acc:str:IJCAR18}
Matteo Acclavio and Lutz Stra{\ss}burger.
\newblock From syntactic proofs to combinatorial proofs.
\newblock In {\em International Joint Conference on Automated Reasoning}, pages
  481--497. Springer, 2018.

\bibitem{acc:str:wollic19}
Matteo Acclavio and Lutz Stra{\ss}burger.
\newblock On combinatorial proofs for logics of relevance and entailment.
\newblock In {\em International Workshop on Logic, Language, Information, and
  Computation}, pages 1--16. Springer, 2019.

\bibitem{acc:str:CPK}
Matteo Acclavio and Lutz Stra{\ss}burger.
\newblock On combinatorial proofs for modal logic.
\newblock In Serenella Cerrito and Andrei Popescu, editors, {\em Automated
  Reasoning with Analytic Tableaux and Related Methods}, pages 223--240, Cham,
  2019. Springer International Publishing.

\bibitem{ari:das:str:constructive}
Ryuta Arisaka, Anupam Das, and Lutz Stra{\ss}burger.
\newblock On nested sequents for constructive modal logics.
\newblock {\em Logical Methods in Computer Science}, 11, 2015.

\bibitem{bel:deP:rit:extended}
Gianluigi Bellin, Valeria De~Paiva, and Eike Ritter.
\newblock Extended curry-howard correspondence for a basic constructive modal
  logic.
\newblock In {\em In Proceedings of Methods for Modalities}, 05 2001.

\bibitem{ral:str:epiccube}
Ralph Benjamin and Lutz Stra{\ss}burger.
\newblock Towards a combinatorial proof theory.
\newblock In {\em Tableaux 2019}. Springer, 2019.

\bibitem{bie:paiva:intuitionistic}
Gavin~M. Bierman and Valeria~CV de~Paiva.
\newblock On an intuitionistic modal logic.
\newblock {\em Studia Logica}, 65(3):383--416, 2000.

\bibitem{brunnler:tiu:01}
Kai Br{\"u}nnler and Alwen~Fernanto Tiu.
\newblock A local system for classical logic.
\newblock In R.~Nieuwenhuis and A.~Voronkov, editors, {\em LPAR 2001}, volume
  2250 of {\em LNAI}, pages 347--361. Springer, 2001.

\bibitem{cook:reckhow:79}
Stephen~A. Cook and Robert~A. Reckhow.
\newblock The relative efficiency of propositional proof systems.
\newblock {\em J.\ of Symb.\ Logic}, 44(1):36--50, 1979.

\bibitem{Dan:Her:Regn:gamesandmachine}
V.~Danos, H.~Herbelin, and L.~Regnier.
\newblock Game semantics and abstract machines.
\newblock In {\em Proceedings of the 11th Annual IEEE Symposium on Logic in
  Computer Science}, LICS '96, page 394, USA, 1996. IEEE Computer Society.

\bibitem{davies2001modal}
Rowan Davies and Frank Pfenning.
\newblock A modal analysis of staged computation.
\newblock {\em Journal of the ACM (JACM)}, 48(3):555--604, 2001.

\bibitem{fairtlough1997propositional}
Matt Fairtlough and Michael Mendler.
\newblock Propositional lax logic.
\newblock {\em Information and Computation}, 137(1):1--33, 1997.

\bibitem{fitch1948intuitionistic}
Frederic~B Fitch.
\newblock Intuitionistic modal logic with quantifiers.
\newblock {\em Portugaliae mathematica}, 7(2):113--118, 1948.

\bibitem{guglielmi:SIS}
Alessio Guglielmi.
\newblock A system of interaction and structure.
\newblock {\em ACM Transactions on Computational Logic}, 8(1):1--64, 2007.

\bibitem{gug:gun:par:2010}
Alessio Guglielmi, Tom Gundersen, and Michel Parigot.
\newblock {A Proof Calculus Which Reduces Syntactic Bureaucracy}.
\newblock In Christopher Lynch, editor, {\em Proceedings of the 21st
  International Conference on Rewriting Techniques and Applications}, volume~6
  of {\em LIPIcs}, pages 135--150, Dagstuhl, Germany, 2010. Schloss
  Dagstuhl--Leibniz-Zentrum fuer Informatik.

\bibitem{gug:str:01}
Alessio Guglielmi and Lutz Stra{\ss}burger.
\newblock Non-commutativity and {MELL} in the calculus of structures.
\newblock In {\em CSL 2001}, volume 2142 of {\em LNCS}, pages 54--68. Springer,
  2001.

\bibitem{heijltjes:houston:14}
Willem Heijltjes and Robin Houston.
\newblock No proof nets for {MLL} with units: proof equivalence in {MLL} is
  {PSPACE}-complete.
\newblock In Thomas~A. Henzinger and Dale Miller, editors, {\em Joint Meeting
  of the Twenty-Third {EACSL} Annual Conference on Computer Science Logic
  {(CSL)} and the Twenty-Ninth Annual {ACM/IEEE} Symposium on Logic in Computer
  Science (LICS), {CSL-LICS} '14, Vienna, Austria, July 14 - 18, 2014}, pages
  50:1--50:10. {ACM}, 2014.

\bibitem{heilala2007bidirectional}
Samuli Heilala and Brigitte Pientka.
\newblock Bidirectional decision procedures for the intuitionistic
  propositional modal logic is4.
\newblock In {\em International Conference on Automated Deduction}, pages
  116--131. Springer, 2007.

\bibitem{horne:quasi}
Ross Horne, Ki~Yung Ahn, Shang-wei Lin, and Alwen Tiu.
\newblock Quasi-open bisimilarity with mismatch is intuitionistic.
\newblock In {\em Proceedings of the 33rd Annual ACM/IEEE Symposium on Logic in
  Computer Science}, LICS '18, page 26–35, New York, NY, USA, 2018.
  Association for Computing Machinery.

\bibitem{Hughes:97:games}
D.~Hughes.
\newblock Games and definability for system {F}.
\newblock In {\em Logic in Computer Science, Symposium on}, page~76, Los
  Alamitos, CA, USA, jul 1997. IEEE Computer Society.

\bibitem{hughes:pws}
Dominic Hughes.
\newblock Proofs {W}ithout {S}yntax.
\newblock {\em Annals of Math.}, 164(3):1065--1076, 2006.

\bibitem{hughes:invar}
Dominic Hughes.
\newblock Towards {H}ilbert's 24\({}^{\mbox{th}}\) problem: Combinatorial proof
  invariants: (preliminary version).
\newblock {\em Electr. Notes Theor. Comput. Sci.}, 165:37--63, 2006.

\bibitem{hughes:fopws}
Dominic J.~D. Hughes.
\newblock First-order proofs without syntax, June 2019.

\bibitem{hyl:ong:PCF}
J.M.E. Hyland and C.-H.L. Ong.
\newblock On full abstraction for {PCF}: {I}, {II}, and {III}.
\newblock {\em Information and Computation}, 163(2):285 -- 408, 2000.

\bibitem{kojima2012semantical}
Kensuke Kojima.
\newblock {\em Semantical study of intuitionistic modal logics}.
\newblock PhD thesis, Kyoto University, 2012.
\newblock Ph.D. thesis.

\bibitem{kuz:mar:str:Justification}
Roman Kuznets, Sonia Marin, and Lutz Stra{\ss}burger.
\newblock {Justification logic for constructive modal logic}.
\newblock {IMLA 2017 - 7th Workshop on Intuitionistic Modal Logic and
  Applications}, July 2017.

\bibitem{lamarche:essential}
Fran{\c c}ois Lamarche.
\newblock {Proof Nets for Intuitionistic Linear Logic: Essential Nets}, 2008.

\bibitem{marin:str:aiml14}
Sonia Marin and Lutz Stra{\ss}burger.
\newblock {Label-free Modular Systems for Classical and Intuitionistic Modal
  Logics}.
\newblock In {\em {Advances in Modal Logic 10}}, 2014.

\bibitem{mcc:FPC}
Guy McCusker.
\newblock Games and full abstraction for fpc.
\newblock {\em Information and Computation}, 160(1):1 -- 61, 2000.

\bibitem{mend:multimodal}
Michael Mendler and Stephan Scheele.
\newblock Cut-free gentzen calculus for multimodal ck.
\newblock {\em Information and Computation}, 209(12):1465--1490, 2011.

\bibitem{MEYERAI}
John-Jules Meyer and Frank Veltmanw.
\newblock Intelligent agents and common sense reasoning.
\newblock In Patrick Blackburn, Johan {Van Benthem}, and Frank Wolter, editors,
  {\em Handbook of Modal Logic}, volume~3 of {\em Studies in Logic and
  Practical Reasoning}, pages 991 -- 1029. Elsevier, 2007.

\bibitem{mur:ong:evolving}
Andrzej~S. Murawski and C.~H. Luke~Ong.
\newblock Evolving games and essential nets for affine polymorphism.
\newblock In Samson Abramsky, editor, {\em Typed Lambda Calculi and
  Applications}, pages 360--375, Berlin, Heidelberg, 2001. Springer Berlin
  Heidelberg.

\bibitem{plotkin1986framework}
Gordon Plotkin and Colin Stirling.
\newblock A framework for intuitionistic modal logics.
\newblock In {\em Proceedings of the 1st Conference on Theoretical Aspects of
  Reasoning about Knowledge (TARK)}, pages 399--406, 1986.

\bibitem{prawitz1965proof}
Dag Prawitz.
\newblock Natural deduction: A proof-theoretical study.
\newblock {\em Almqvist\~{} Wiksell, Stockholm}, 1965.

\bibitem{simpson:phd}
Alex~K Simpson.
\newblock {\em The proof theory and semantics of intuitionistic modal logic}.
\newblock PhD thesis, University of Edinburgh. College of Science and
  Engineering, 1994.

\bibitem{str:FSCD17}
Lutz Stra{\ss}burger.
\newblock Combinatorial flows and their normalisation.
\newblock In Dale Miller, editor, {\em FSCD'17}, volume~84 of {\em LIPIcs},
  pages 31:1--31:17. Schloss Dagstuhl, 2017.

\bibitem{ICP}
Lutz Stra{\ss}burger, Willem Heijltjes, and Dominic J~D Hughes.
\newblock {Intuitionistic proofs without syntax}.
\newblock In {\em {LICS 2019 - 34th Annual ACM/IEEE Symposium on Logic in
  Computer Science}}, pages 1--13, Vancouver, Canada, June 2019. {IEEE}.

\bibitem{knowledgerepresentation}
Dimiter Vakarelov.
\newblock Modal logics for knowledge representation systems.
\newblock {\em Theor. Comput. Sci.}, 90:433--456, 01 1991.

\bibitem{wij:constr}
Duminda Wijesekera.
\newblock Constructive modal logics i.
\newblock {\em Annals of Pure and Applied Logic}, 50(3):271--301, 1990.

\end{thebibliography}

\end{document}